\newcommand{\IR}{\mathbb{R}}
\newcommand{\IN}{\mathbb{N}}
\newcommand{\IE}{\mathbb{E}}
\newcommand{\IP}{\mathbb{P}}
\newtheorem{theorem}{Theorem}[section]
\newtheorem{proposition}[theorem]{Proposition}
\newtheorem{lemma}[theorem]{Lemma}
\newtheorem{corollary}[theorem]{Corollary}
\theoremstyle{definition}
\newtheorem{remark}[theorem]{Remark}
\newtheorem{remarks}[theorem]{Remarks}
\newtheorem{definition}[theorem]{Definition}
\newtheorem{examples}[theorem]{Examples}
\newcommand{\bew}{\noindent\textbf{Proof:}\quad}
\newcommand{\ebew}{\hfill\qed\\}
\newsavebox{\prN}
\renewenvironment{proof}{\bew}{\ebew}
\newenvironment{balign*}[1][12pt]{\setlength{\jot}{#1}\nonumber\align}{\endalign}
\renewcommand{\epsilon}{\varepsilon}
\newcommand\ueber[3][]{\genfrac{}{}{0pt}{#1}{#2}{#3}}
\begin{document}

\title{Sixty Years of Moments for Random Matrices\\[5mm]}
\author{Werner Kirsch and Thomas Kriecherbauer}
\contact[werner.kirsch@fernuni-hagen.de]{Fakult\"{a}t f\"{u}r Mathematik und Informatik, FernUniversit\"{a}t Hagen, Germany }
\contact[thomas.kriecherbauer@uni-bayreuth.de]{Mathematisches Institut, Universit\"{a}t Bayreuth, Germany}
\begin{abstract}

This is an elementary review, aimed at non-specialists, of results that have been obtained for the limiting distribution of eigenvalues and for the
operator norms of real symmetric random matrices via the method of moments. This method goes back to a
remarkable argument of Eugen Wigner some sixty years ago which works best for independent matrix entries,
as far as symmetry permits, that are all centered and have the same variance. We then discuss variations of this classical
result for ensembles for which the variance may depend on the distance of the matrix entry to the diagonal, including in
particular the case of band random matrices, and/or for which the required independence of the matrix entries is replaced by
some weaker condition. This includes results on ensembles with entries from Curie-Weiss random variables
or from sequences of exchangeable random variables that have been obtained quite recently.

\end{abstract}
\begin{classification}
Primary 60-B20; Secondary 82-B44.
\end{classification}
\begin{keywords}
Random Matrices, Moment Method, Wigner Matrices, Curie-Weiss Ensembles, Semicircle Law
\end{keywords}
\vspace{5mm}
\begin{center}
\emph{Dedicated to Helge Holden on the occasion of his 60$^{\text{\,th}}$ birthday}
\end{center}

\section{Introduction}\label{sec:Intro}
Approximately at the time when Helge Holden was born the physicist Eugene Wigner presented a result in \cite{Wigner1}
that may be considered to be the starting signal for an extremely fruitful line of investigations creating the now
ample realm of random matrices. The reader may consult the handbook \cite{handbookRMT} to obtain an impression
of the richness of the field. Its ongoing briskness is well documented by over 700 publications
listed in MathSciNet after the print of \cite{handbookRMT} in 2011.

In view of later developments that often use heavy machinery to provide very detailed knowledge about specific
spectral statistics, Wigner's observation impresses by its simplicity and fine combinatorics. For certain matrix ensembles,
which in various generalizations are nowadays called {\em Wigner ensembles},
he was able to determine the limiting density of eigenvalues by the {\em moment method}.
More precisely, he computed the expectations of all moments of the {\em eigenvalue distribution measures}
in the limit of matrix dimensions tending to infinity.
Furthermore, he observed that these limits agree with the moments of the {\em semicircle distribution}
thus proving the semicircle law that bears his name (see Sections \ref{ssec:scl-setup} and \ref{ssec:scl-wigner}
for definitions of the phrases in italics).

It is quite remarkable that the moment method continues to provide new insights into the distribution of random eigenvalues.
With this article we take the reader on a tour that starts with Wigner's discovery and ends with the description of
recent results, some yet unpublished. Along the way we try to explain a few developments in more detail while briefly pointing at others.

The first application of the moment method to the analysis of random eigenvalues appears almost accidental.
In an effort to understand ``the wave functions of quantum mechanical systems which are assumed to be so complicated
that statistical considerations can be applied to them'' Wigner introduces in \cite{Wigner1} three types of ensembles of
``real symmetric matrices of high dimensionality''. Although he considers his results not satisfactory from a physical point of view,
he expresses the hope that ``the calculation which follows may have some independent interest''. Moreover, the reader learns that
one of the three models considered just serves ``as an intermediate step''.
And it is only this auxiliary ensemble that we would now call a Wigner ensemble.
Wigner names it the ``random sign symmetric matrix'' by which he understands $(2N+1)\times (2N+1)$ matrices
for which the diagonal elements are zero and
``non diagonal elements $v_{ik} = v_{ki} = \pm v$ have all the same absolute value but random signs".

In the short note \cite{Wigner2} that appeared a few years later, Wigner remarks that the arguments of
\cite{Wigner1} show the semicircle law for a much larger class of real symmetric ensembles. He observes that,
except for technical assumptions, two features of the model were essential for his proof:
Firstly, stochastic independence of the matrix entries (as far as the symmetry permits) and,
secondly, that all (or at least most) matrix entries are centered and have the same variance.

In Section \ref{ssec:scl-wigner} we present Wigner's proof with enough detail to make
the significance of these two assumptions apparent. The remaining sections are then devoted
to the discussion of results where at least one of these essential assumptions are weakened.

In Section \ref{ssec:scl-band} independence and centeredness of the matrix entries are kept.
However, we allow the variances to vary as a function of the distance to the diagonal.
The most prominent examples in this class are band matrices and we discuss them in detail.

A first step to loosen the assumption of independence is presented in
Section \ref{ssec:scl-sparse}. Its central result provides conditions on the number and location of
matrix entries that may be dependent without affecting the validity of Wigner's reasoning. We call such
dependence structures  \emph{sparse}. Sparse dependence structures
appear for example in certain types of block random matrices that are used in modelling disordered systems in
mesoscopic physics (see e.\,g. \cite{AltZ})     .

In the last three sections \ref{ssec:scl-deccor}--\ref{ssec:scl-ee} we report on results for ensembles
with a dependence structure that is not sparse. This is largely uncharted territory.
However, in recent years a number of special cases were analyzed using the method of moments.
They show interesting phenomena that should be explored further. We devide the models into three groups.
In the first group the correlations decay to $0$
as the distance of the matrix entries becomes large in some prescribed metric.
Then we look at those ensembles for which the entries are drawn from
Curie-Weiss random variables. Here the correlations have no spatial decay but decay for supercritical temperatures as the matrix
dimension becomes large. Finally, we pick the matrix entries from an infinite sequence of exchangeable random variables.
Here the correlations between matrix entries depend neither on their locations nor on the size of the matrix.

We close this introduction by stating what is not contained in this survey. One of the striking features of random matrix theory is
the observation that local statistics of the eigenvalues obey universal laws that, somewhat surprisingly, have also arisen in
certain combinatorial problems, in some models from statistical mechanics and even in the distribution of the non-trivial zeros of
zeta-functions. By local statistics we mean statistics after local but deterministic rescaling so that the spacings between
neighboring eigenvalues are of order $1$. Examples are the statistics of spacings or the distribution of extremal eigenvalues.
Such results were first obtained for Gaussian ensembles, i.e.~Wigner ensembles with normally distributed entries.
In this special case it is possible to derive an explicit formula for the joint distribution of eigenvalues that can then be analyzed
using the method of orthogonal polynomials. In the Gaussian case this requires detailed asymptotic formulas for
Hermite polynomials of large degree that had already been derived in the beginning of the twentieth century.
The first step
to prove universality beyond Gaussian ensembles was then taken about twenty years ago within the
class of ensembles that are invariant under change of orthonormal bases.
For such ensembles the eigenvectors are distributed according to Haar measure,
the joint distribution of eigenvalues is still explicit, and the method of orthogonal polynomials works,
albeit they generally do not belong to the well studied families of classical orthogonal polynomials
(see for example \cite{Deift,DeiftG,PasturS} and references therein).
It is only seven years ago that universality results for local statistics became available for Wigner matrices (see e.\,g. \cite{BGKnowles,ErdosKYY,GoetzeNTT,TaoV} and references therein).
Since all of these results do not use the moment method, we will not discuss them in this paper.

There is one notable exception to what has just been said. The distribution of extremal eigenvalues
(and consequently of the operator norm) can and has been investigated for Wigner ensembles on the local level,
using the method of moments \cite{Sosh}. However, this requires quite substantial extensions of the ideas that we explain
and goes way beyond the scope of this paper. We therefore only state weaker results that might be considered
as laws of large numbers for the operator norm and that can be
proved with much less effort. Nevertheless, we do not discuss their proofs either and refer the reader to Section 2.3 of the
textbook \cite{Tao}.

Finally, we mention that the moment method can also be applied to complex Hermitian matrices and
to sample covariance matrices (also known as Wishart ensembles),
but in the present article we always restrict ourselves to the case of real symmetric matrices to keep the presentation as elementary as possible.

\section{Setup}\label{ssec:scl-setup}
We begin by setting the scene and fixing some notation.
\begin{definition}
A (real symmetric) \emph{matrix ensemble} is a family

$X_N (i,j), i,j = 1, \ldots, N, N \in \IN$ of real valued random variables on a probability space $(\Omega, \mathcal{F}, \IP)$ such that $X_N (i,j) = X_N (j,i)$.
We then denote by $X_N$ the corresponding
$N \times N$-matrix, i.e.

\begin{equation}
X_N =
\begin{pmatrix}
 X_N (1,1) & X_N (1,2) & \cdots & X_N (1,N)\\
 X_N (2,1) & X_N (2,2) & \cdots & X_N (2,N)\\
 \vdots    & \vdots    &        & \vdots   \\
 X_N (N,1) & X_N (N,2) & \cdots & X_N (N,N)
\end{pmatrix}
\end{equation}
\end{definition}
Since we deal exclusively with real symmetric matrices by 'matrix ensemble' we always mean a real symmetric one.
\begin{definition}\label{def:Wigner}
A (real symmetric) matrix ensemble is called \emph{independent} if for each $N \in \IN$ the random variables $X_N (i,j), 1 \leq i \leq j \leq N$ are independent. It is called \emph{identically distributed}, if all $X_N (i,j)$ have the same distribution.\\
An independent and identically distributed matrix ensemble $X_N$ is called a \emph{Wigner ensemble if $\IE (X_N (i,j)) = 0$ and $\IE (X_N (i,j)^2) = 1$ }.
\end{definition}
By a slight abuse of language we use the phrase '$X_N$ is a Wigner matrix' to indicate that the family $X_N$ of random (symmetric) matrices form a Wigner ensemble.
Some authors allow for Wigner ensembles a probability distribution for the diagonal elements which differs from the distribution for the nondiagonal entries.
\begin{definition}\label{def:moments}
The \emph{$k^{th}$ moment} of a random variable $X$ is the expectation $\IE (X^k)$.
We say that all moments of $X$ exist, if $\IE (\vert X \vert^k) < \infty$ for all $k \in \IN$.
\end{definition}
Unless stated otherwise we always assume that all random variables occurring in this text have all moments existing.

For any symmetric $N \times N$-matrix $M$ we denote the eigenvalues of $M$ by $\lambda_j (M)$. We order these eigenvalues such that
\begin{equation*}
\lambda_1 (M) \leq \lambda_2 (M) \leq \ldots \leq \lambda_N (M)
\end{equation*}
where degenerate eigenvalues are repeated according to their multiplicity.

The eigenvalue distribution measure $\nu_N$ of $M$ is defined by
\begin{align*}
\nu_N (A) & = \frac{1}{N} \;\Big| \{j \mid \lambda_j (M) \in A \} \Big\vert\\
& = \frac{1}{N} \sum_{j=1}^N \delta_{\lambda_j (M)} (A)
\end{align*}
where $\vert B \vert$ denotes the number of points in $B, N$ - as above - is the dimension of the matrix $M$, $A$ is a Borel-subset of $\IR$ and $\delta_a$ is the Dirac measure in $a$, i.e.
\begin{equation}\label{eq:delta}
\delta_a (A) =
\begin{cases}
1 & \text{~ if ~} a \in A\\
0 & \text{~  otherwise} \,.
\end{cases}
\end{equation}

It turns out that for a Wigner matrix $X_N$ the eigenvalue distribution measure $\nu_N$ of $X_N$ has no chance to converge as $N\to\infty$ as the following back-of-the-envelope calculations show.

We have
\begin{align}
\int \lambda^2 ~  d \nu_N (\lambda) & = \frac{1}{N} \sum_{\ell=1}^N \lambda_{\ell} (X_N)^2\notag\\
& = \frac{1}{N} \text{~  tr } X_N^2 \label{eq:HS0}
\end{align}

If the $N \times N$-matrix $X_N$ has entries $\pm 1$ (random or not), then \eqref{eq:HS0} shows
\begin{equation}
\int \lambda^2 ~  d \nu_N (\lambda) = N \label{eq:HS1}
\end{equation}

and if the $X_N$ are random matrices with $\IE (X_N (i,j)^2) = 1$ we get
\begin{equation}\label{eq:HS2}
\IE \Big(\int \lambda^2 ~  d \nu_N (\lambda) \Big) = N \,.
\end{equation}

This shows that (at least the second moment of) the eigenvalue distribution measure of $X_N$ is divergent.

Moreover, the same calculation suggests that the eigenvalue distribution measure of the normalized matrices
\begin{equation*}
M_N = \frac{1}{\sqrt{N}} X_N
\end{equation*}
might converge as for $M_N$
\begin{equation*}
\IE \int \lambda^2 ~  d \nu_N = \frac{1}{N} \text{~  tr ~} M_N^2 = 1 \,.
\end{equation*}
As we shall see below, this is indeed the case not only for Wigner ensembles, but for a huge class of random matrices.

A similar reasoning applies to the operator norm of a matrix ensemble $X_N$:
\begin{equation}\label{eq:defNorm}
\Vert X_N \Vert = \max\, \Big\{ \vert \lambda_1 (X_N) \vert, \vert \lambda_N (X_N) \vert \Big\} \,.
\end{equation}
Since for any real symmetric $N \times N$-matrix $M$:
\begin{equation}\label{eq:HS3}
\frac{1}{N} \text{~  tr ~} M^2 = \frac{1}{N} \sum_{\ell =1}^N \lambda_{\ell} (M)^2 \leq \Vert M \Vert^2 \leq \text{~  tr ~} M^2
\end{equation}
a matrix $M$ with $\pm 1$-entries satisfies
\begin{equation*}
\sqrt{N} \leq ||M|| \leq N
\end{equation*}
and similarly for $\IE (X_N (i,j)^2) = 1$
\begin{equation*}
\sqrt{N} \leq \IE (||X_N||^2)^{\frac{1}{2}} \leq N \,.
\end{equation*}
Again, one is lead to look at the norm of $M_N = \frac{1}{\sqrt{N}} X_N$.

Indeed, for Wigner ensemble the norm of $M_N$ will stay bounded as $N \rightarrow \infty$, in fact, it will converge to 2.

However, this fact is more subtle than the convergence of $\nu_N$, and so is its proof (cf. Theorem \ref{th:WignerNorm} that was proved by F\"{u}redi and Komlos in \cite{FuerediK}, see also \cite{BaiY}, and \cite{Tao} for a textbook presentation).

To illustrate this, let us look at a particular example within the class considered in \eqref{eq:HS1}, namely the $N \times N$-matrices
\begin{equation}\label{eq:defE}
\mathcal{E}_N (i,j) = 1 \text{~  for all ~} 1 \leq i,j \leq N \,.
\end{equation}
The matrix $\mathcal{E}_N$ can be written as
\begin{equation*}
\mathcal{E}_N = N \cdot P_e
\end{equation*}
where $P_{e}$ is the orthogonal projection onto the vector $e$, with $e (i) = \frac{1}{\sqrt{N}}$ for\goodbreak $i = 1, \ldots , N$.

Consequently $\mathcal{E}_N$ is of rank 1 and

\begin{equation}\label{eq:lambaE}
\lambda_j (\mathcal{E}_N) =
\begin{cases}
N & \text{~  for ~} j = N\\
0 & \text{~  otherwise} \,.
\end{cases}
\end{equation}

Thus we obtain
\begin{equation*}
\Big|\Big|\frac{\mathcal{E}_N}{\sqrt{N}}\Big|\Big| = \sqrt{N} \rightarrow \infty
\end{equation*}
but the eigenvalue distribution function $\nu_N$ of $\mathcal{E}_N/\sqrt{N}$ is given by
\begin{equation*}
\nu_N~=~ \frac{1}{N} (N - 1) \delta_0 + \frac{1}{N} \delta_{\sqrt{N}} ~~\Longrightarrow~~ \delta_0
\end{equation*}
where $\Rightarrow$ means weak convergence (see definition \ref{def:wconv}).

\section{Wigner's Semicircle Law}\label{ssec:scl-wigner}
In this section we present and discuss the classical semicircle law for Wigner ensembles.

So, let $X_N$ be a Wigner ensemble (see Definition \ref{def:Wigner}), set $M_N = \frac{1}{\sqrt{N}} X_N$ and denote the eigenvalue distribution measure of $M_N$ by $\sigma_N$, thus
\begin{equation*}
\sigma_N (A) = \frac{1}{N}\; \Big| \{j\mid \lambda_j (\frac{1}{\sqrt{N}} X_N) \in A \}\Big| \,.
\end{equation*}
The semicircle law, in its original form due to Wigner (\cite{Wigner1}, \cite{Wigner2}), states that $\sigma_N$ converges to the semicircle distribution $\sigma$ given through its Lebesgue density

\begin{equation}\label{eq:defscd}
\sigma (x) =
\begin{cases}
\;\frac{1}{2 \pi} \sqrt{4 - x^2} & \text{~  , for ~} |x| \leq 2\\
\;0 & \text{~  , otherwise.}
\end{cases}
\end{equation}

$\sigma$ describes a semicircle of radius 2 around the origin, hence the name.

So far, we have avoided to explain in which sense $\sigma_N$ converges. This is what we do now.

Let us first look at the convergence of measures on $\IR$.
\begin{definition}
\label{def:wconv}
Suppose $\mu_N$ and $\mu$ are probability measures on $\IR$ (equipped with the Borel $\sigma$-algebra).

We say that $\mu_N$ \emph{converges weakly} to $\mu$, in symbols
\begin{equation*}
\mu_N \Rightarrow \mu
\end{equation*}
if
\begin{equation*}
\int f (x) ~  d \mu_N (x) \rightarrow \int f (x) ~  d \mu (x)
\end{equation*}
for all $f \in C_b (\IR)$, the space of bounded continuous functions.
\end{definition}
If the matrix $X_N$ is random and
\begin{equation*}
\sigma_N = \frac{1}{N} \sum \delta_{\lambda_j} \Big(\frac{X_N}{\sqrt{N}}\Big)
\end{equation*}
is the eigenvalue distribution measure of $\frac{1}{\sqrt{N}} X_N$ then the measure $\sigma_N$ itself is random.

Consequently, we have not only to define in which sense the measures converge (namely weakly), but also how this convergence is meant with respect to randomness, i.e. to the `parameter' $\omega \in \Omega$.
There are various ways to do this.
\begin{definition}\label{def:Tprobwconv}
Let $(\Omega, \mathcal{F}, \IP)$ be a probability space and let $\mu_N^\omega$ and $\mu^\omega$ be random probability measures on $(\IR, B(\IR))$.
\begin{enumerate}
\item[1)] We say that $\mu_N^\omega$ converges to $\mu^\omega$ \emph{weakly in expectation}, if for every

$f \in C_b (\IR)$
\begin{equation}\label{eq:convwe}
\IE \Big( \int f (x) ~  d \mu_N^{\omega} (x) \Big) \rightarrow \IE \Big(\int f (x) ~  d \mu^\omega (x) \Big)
\end{equation}

as $N \rightarrow \infty$.
\item[2)] We say that $\mu_N^\omega$ converges to $\mu^\omega$ \emph{weakly in probability},
if for every $f \in C_b (\IR)$ and any $\epsilon > 0$
\begin{equation*}
\IP \Big(\;\Big| \int f (x) ~ d\mu_N^\omega (x) - \int f (x) ~ d \mu^\omega (x) \Big| > \epsilon \Big) \rightarrow 0
\end{equation*}
as $N \rightarrow \infty$.
\item[3)] We say that $\mu_N^\omega$ converges to $\mu^\omega$ \emph{weakly $\IP$-almost surely} if there is a set $\Omega_0 \subset \Omega$ with $\IP (\Omega_0) = 1$ such that $\mu_N^\omega
 \Rightarrow \mu^\omega$ for all $\omega \in \Omega_0$.
\end{enumerate}
\end{definition}
%\begin{remark}

%\end{remark}
\begin{theorem}[Semicircle Law]\label{th:Wigner}

Suppose $X_N$ is a Wigner ensemble with

$\IE (|X_N (i,j)|^k) < \infty$ for all $k \in \IN$ and let $\sigma_N$ denote the eigenvalue distribution measure of $M_N = \frac{1}{\sqrt{N}} X_N$ then $\sigma_N$ converges to the semicircle distribution $\sigma$ weakly $\IP$-almost surely.
\end{theorem}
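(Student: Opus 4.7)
The plan is to apply the moment method. Since the semicircle distribution $\sigma$ is supported on the compact interval $[-2,2]$, it is uniquely determined by its sequence of moments $m_k := \int x^k\,d\sigma(x)$, where one computes $m_{2p+1} = 0$ and $m_{2p} = C_p$, the $p$-th Catalan number. A standard result from the moment problem (together with the Stone--Weierstrass density of polynomials in $C_b$ on compact sets, combined with tightness) then guarantees that, once we know $\int x^k\,d\sigma_N^\omega(x) \to m_k$ for each fixed $k \in \IN$ and each $\omega$ in a common set $\Omega_0$ of full $\IP$-measure, weak convergence $\sigma_N^\omega \convw \sigma$ holds on $\Omega_0$. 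So the whole task reduces to proving almost-sure convergence of all moments.

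For moments one converts to traces:
\begin{equation*}
\int x^k\,d\sigma_N(x) = \frac{1}{N}\,\mathrm{tr}\, M_N^k = \frac{1}{N^{1+k/2}} \sum_{i_1,\ldots,i_k = 1}^{N} X_N(i_1,i_2) X_N(i_2,i_3) \cdots X_N(i_k,i_1).
\end{equation*}
Taking expectations, independence of the entries (as far as symmetry allows) together with $\IE(X_N(i,j)) = 0$ kills every term in which some unordered pair $\{i_s,i_{s+1}\}$ appears only once among the $k$ factors. What survives is indexed naturally by closed walks of length $k$ on the graph whose vertex set is $\{i_1,\ldots,i_k\}$ and whose edges are the distinct unordered pairs, each traversed at least twice. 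If this graph has $v$ vertices and $e$ edges then $2e \le k$ and $v \le e+1$, so the number of such index tuples is $O(N^{v}) \le O(N^{k/2+1})$. After dividing by $N^{1+k/2}$, the surviving terms in the limit are precisely those saturating both inequalities: $k = 2p$ must be even, every edge must be traversed exactly twice, and the graph must be a tree with $p$ edges. A standard bijection identifies these \emph{Wigner trees} with rooted plane trees, counted by $C_p$, and each surviving tuple contributes a factor $\IE(X_N(i,j)^2)^p = 1$. Thus $\IE\bigl(\int x^k\,d\sigma_N\bigr) \to m_k$, while for odd $k$ the bound gives $O(N^{-1/2}) \to 0 = m_k$.

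To lift convergence in expectation to almost-sure convergence I would establish the variance bound
\begin{equation*}
\IV\!\Bigl(\frac{1}{N}\,\mathrm{tr}\,M_N^k\Bigr) = O(N^{-2}).
\end{equation*}
Expanding the variance as a double sum over pairs of closed walks of length $k$, and rerunning the graph-theoretic bookkeeping for their union, the leading contribution comes from pairs of vertex-disjoint Wigner trees, and this contribution cancels exactly against the subtracted $\bigl(\IE[\mathrm{tr}\,M_N^k]/N\bigr)^2$. The remaining pair-walks share at least one vertex, reducing the effective number of free indices by at least two, which yields the $O(N^{-2})$ decay. Summability $\sum_N N^{-2} < \infty$ together with Chebyshev's inequality and the Borel--Cantelli lemma then produces, for each fixed $k$, almost-sure convergence of $\int x^k\,d\sigma_N$ to $m_k$; intersecting the resulting exceptional sets over the countable family $k \in \IN$ gives the desired $\Omega_0$.

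The main obstacle is the combinatorial heart of the argument: identifying the right ``skeleton graph'' associated to each index tuple, showing that only tree-like skeletons with each edge exactly doubled can contribute in the limit, and counting those by Catalan numbers via a bijection with rooted plane trees (or equivalently Dyck paths). The assumption that all moments of $X_N(i,j)$ exist enters only mildly, to guarantee that tuples in which some edge is traversed three or more times genuinely produce a subleading contribution; once the tree count is in place the odd-moment case and the variance estimate follow from minor variations of the same counting scheme.
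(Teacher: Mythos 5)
Your argument is correct and follows the same moment-method route as the paper: the trace expansion, the reduction to connected multigraphs with every edge doubled, identification of the surviving skeleton as an ordered tree, and the Catalan count. The paper's exposition only works out the convergence-in-expectation step in detail and defers the almost-sure upgrade to the references (attributing it to Arnold); your variance bound $O(N^{-2})$ followed by Borel--Cantelli is precisely the standard way to supply that missing step, and it is accurate.
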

\begin{remarks}

\begin{enumerate}
\item Wigner \cite{Wigner1,Wigner2} proved this theorem for weak convergence in expectation.
\item Grenander \cite{Grenander} showed under the same conditions that the convergence holds weakly in probability.
\item Arnold \cite{Arnold} proved that the convergence is weakly $\IP$-almost surely. He also relaxed the moment condition to
\begin{equation*}
\IE\, \Big(X_N (i,j)^6\Big) ~<~ \infty
\end{equation*}
for $\IP$-almost sure weak convergence and to
\begin{equation*}
\IE\,\Big(X_N (i,j)^4\Big) ~<~ \infty
\end{equation*}
for weak convergence in probability.
\item According to Defition \ref{def:Wigner} the entries in a Wigner ensemble are independent and
identically distributed, so condition $\IE (|X_N (i,j)|^k) < \infty$ actually implies
\begin{equation*}
   \sup_{N,i,j}\,\IE (|X_N (i,j)|^k) < \infty\ .
\end{equation*}
for each $k$.
\end{enumerate}
\end{remarks}
Besides the moment method we discuss in this article there is another important technique
to prove the semicircle law. This is  the Stieltjes transform method originating in \cite{MarchenkoP}, \cite{Pastur72} and \cite{Pastur73}, see also \cite{PasturS} and references given there. Both methods are discussed in \cite{AGZ} and
in \cite{Tao}.

The moment method is based on the observation that the following result is true.
\begin{proposition}
If $\mu_N$ and $\mu$ are probability measures on $\IR$ such that all moments of $\mu_N$ exist and
 \begin{equation}\label{eq:mom}
\int |x|^k ~ d \mu (x) \leq A C^k k!
\end{equation}

for all $k$ and some constants $A, C$, then
\begin{equation*}
\int x^k ~ d \mu_N (x) \rightarrow \int x^k ~ d \mu (x)
\end{equation*}
for all $k \in \IN$ implies that
\begin{equation*}
\mu_N \Rightarrow \mu \,.
\end{equation*}
\end{proposition}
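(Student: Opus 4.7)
The plan is to proceed via a standard tightness-plus-uniqueness argument. The hypothesis $\int |x|^k \, d\mu(x) \leq A C^k k!$ is a sub-exponential growth condition that does two things: it guarantees $\mu$ is uniquely determined by its moments (since the moment generating function $\int e^{tx} d\mu(x)$ converges for $|t|<1/C$), and, once combined with the assumed moment convergence, it supplies uniform bounds on the moments of $\mu_N$.

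First I would establish tightness of the family $\{\mu_N\}$. Because $\int x^2 d\mu_N(x) \to \int x^2 d\mu(x)$, the second moments are uniformly bounded, so Markov's inequality gives $\mu_N([-R,R]^c) \le \frac{1}{R^2}\sup_N \int x^2 d\mu_N \to 0$ as $R\to\infty$, uniformly in $N$. By Prokhorov's theorem, every subsequence of $(\mu_N)$ has a further subsequence converging weakly to some probability measure $\nu$ on $\mathbb{R}$.

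Next I would show that any such subsequential limit $\nu$ has the same moments as $\mu$. The obstacle here is that $x^k$ is not bounded continuous, so weak convergence alone does not yield $\int x^k d\mu_{N_j}(x) \to \int x^k d\nu(x)$. To bridge the gap I would exploit the moment bound on $\mu$: by moment convergence, $\int x^{2k} d\mu_N(x) \to \int x^{2k} d\mu(x) \leq AC^{2k}(2k)!$, so $\sup_N \int x^{2k} d\mu_N(x) < \infty$. This uniform control on $(2k)$-th moments makes the family $\{|x|^k\}$ uniformly integrable with respect to $\mu_{N_j}$. Standard truncation (replace $x^k$ by $x^k \chi_R(x)$ with $\chi_R$ a smooth cutoff, apply weak convergence, then let $R\to\infty$ using the uniform integrability) yields $\int x^k d\nu(x) = \lim_j \int x^k d\mu_{N_j}(x) = \int x^k d\mu(x)$ for every $k$.

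Finally, the uniqueness step: the growth condition $\int |x|^k d\mu(x) \leq A C^k k!$ implies, via the expansion $\int e^{t|x|} d\mu(x) \leq A\sum_k (Ct)^k < \infty$ for $t<1/C$, that $\mu$ has a moment generating function analytic in a neighborhood of the origin. Hence $\mu$ is determined by its moments (the Hamburger moment problem is determinate, or more directly, the characteristic function is real-analytic and uniquely fixed by the moments). Since $\nu$ shares all moments with $\mu$, we conclude $\nu = \mu$. As every subsequence of $(\mu_N)$ has a further subsequence converging to the same limit $\mu$, the full sequence satisfies $\mu_N \Rightarrow \mu$, completing the proof. The main technical hurdle is the second step, namely justifying the passage from convergence of moments of $\mu_N$ to the correct moments of the subsequential limit; everything else is routine application of Prokhorov and of the determinacy of the moment problem.
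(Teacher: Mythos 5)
The paper does not prove this proposition; it refers to standard textbook sources (\cite{Breiman}, \cite{Klenke}, \cite{MoBu}). Your argument is the classical one found there: tightness via Markov's inequality from the bounded second moments, identification of all moments of any subsequential limit through uniform integrability (controlled by the $(2k)$-th moments), and determinacy of $\mu$ from the factorial growth bound, which makes the moment generating function analytic near the origin, followed by the subsequence principle. The argument is correct and complete as written; no gaps.
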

For a proof see for example \cite{Breiman}, \cite{Klenke} or \cite{MoBu}.

Since the semicircle distribution $\sigma$ has compact support, it obviously
satisfies \eqref{eq:mom}. The moments of $\sigma$ are given by:
\begin{align}
   \int x^k \,d\sigma(x)~&=~\left\{
                             \begin{array}{ll}
                               C_{k/2}, & \hbox{if $k$ is even;} \\[2mm]
                               0, & \hbox{if $k$ is odd.}
                             \end{array}
                           \right.\label{eq:momsc}\\
\intertext{where}\qquad C_{\ell}~&=~\frac{1}{\ell+1}\,\binom{2\ell}{\ell}\qquad \label{eq:Catalan}
\end{align}
are the
\emph{Catalan numbers}. (For a concise introduction to Catalan numbers see e.\,g. \cite{Koshy} or \cite{Stanley}.)

The moments of $\sigma_N$ can be expressed through traces of the matrices $X_N$
\begin{align}
& \IE \Big(\int x^k ~ d \sigma_N (x) \Big)~ =~ \frac{1}{N} ~ \IE \Big( \sum_{j=1}^{N} \lambda_j \Big(\frac{X_N}{\sqrt{N}}\Big)^k \Big)%\notag\\
 ~=~ \frac{1}{N}  ~ \IE \Big(\;\text{tr} \big(\frac{X_N}{\sqrt{N}}\big)^k\; \Big)\notag\\
& = \frac{1}{N^{1 + \frac{k}{2}}} \sum_{i_1, \ldots , i_k = 1}^N \IE \Big(X_N (i_1,i_2) \cdot X_N (i_2,i_3) \cdot ~~ \ldots ~~  \cdot X_N (i_k,i_1) \Big)\; .\label{eq:sumXi}
\end{align}

%%% \begin{NEU}
The sum in \eqref{eq:sumXi} contains $N^{k}$ terms. So, at a first glance, the normalizing factor $N^{1+k/2}$ seems too small to compensate
the growth of the sum. Fortunately, many of the summands are zero, as
we shall see later.

For the purpose of bookkeeping it is useful to think of $i_{1}, i_{2},\ldots,i_{k}$ in terms of
a graph.
\begin{definition}\label{def:graph}
The multigraph $\mathcal{G}$ with vertex set
\begin{align}\label{eq:vertices}
\mathcal{V}~&:=~\Big\{i_{1}, i_{2},\ldots,i_{k}\Big\}
\end{align}
and $\ell$ (undirected) edges between $i$ and $j$ if $\{i,j\}$ occurs $\ell$ times
in the sequence
\begin{align}
   \{i_{1},i_{2}\},\{i_{2},i_{3}\},\ldots,\{i_{k},i_{1}\}\label{eq:edges}
\end{align}
is called the multigraph associated with $(i_{1},i_{2},\ldots,i_{k})$.
\end{definition}
\begin{remark}
   The sequence $(i_{1},i_{2},\ldots,i_{k})$ defines a \emph{multi}graph since there may be several edges between the
   vertices $i_{\nu}$.
   \end{remark}
\begin{definition}\label{def:sgraph}
   If $\mathcal{G}$ is a multigraph we define the associated (simple) graph $\widetilde{\mathcal{G}}$ in the following way.
   The set of vertices of $\widetilde{\mathcal{G}}$ is the same as the vertex set of $\mathcal{G}$ and $\widetilde{\mathcal{G}}$
   has a single edges between $i$ and $j$ whenever $\mathcal{G}$ has at least one edge between $i$ and $j$.
\end{definition}
\begin{remark}
   The sequence \eqref{eq:edges} describes not only a multigraph but in addition a closed path through the multigraph which uses each
edge exactly once. Such paths are called \emph{Eulerian circuits}. They occur for example in the famous problem of the
`Seven Bridges of K\"{o}nigsberg' (see e.\,g. \cite{Bollabas}).

The existence of a Eulerian circuit implies in particular that the multigraph is connected.
\end{remark}

Now, we order the sum in \eqref{eq:sumXi} according to the number $|\mathcal{V}|=|\{i_1, \ldots , i_k \}| $
of \emph{different} indices (=vertices)
occurring in the sequence $i_{1}, i_{2},\ldots,i_{k}$.

\begin{align}
   &\sum_{i_1, \ldots , i_k = 1}^N \IE \Big(X_N (i_1,i_2) \cdot X_N (i_2,i_3) \cdot ~~ \ldots ~~  \cdot X_N (i_k,i_1) \Big)\notag\\
   ~=~&\sum_{r=1}^{k}\;\sum_{|\{i_1, \ldots , i_k\}|=r}\;
   \IE \Big(X_N (i_1,i_2) \cdot X_N (i_2,i_3) \cdot ~~ \ldots ~~  \cdot X_N (i_k,i_1) \Big)
\end{align}
The number of index tuples $ (i_1, \ldots , i_k )$ with $|\{i_1, \ldots , i_k \}|=r $ is of order $\mathcal{O}(N^{r})$ and
can be bounded above by
$r^{k}\,N^{r}$. In fact, to choose the $r$ different numbers in $\{1,\ldots,N\} $ we have
less than $N^{r}$ possibilities. Then, to choose which one to put at a given position we have at most
$r$ choices for each of the $k$ positions.

Therefore the sum
\begin{align}
   \sum_{|\{i_1, \ldots , i_k\}|=r}\;
   \IE \Big(X_N (i_1,i_2) \cdot X_N (i_2,i_3) \cdot ~~ \ldots ~~  \cdot X_N (i_k,i_1) \Big)
\end{align}
is of order $\mathcal{O}(N^{r})$ as well. Thus the terms with $r=|\{i_1, \ldots , i_k \}|< 1+k/2 $
in \eqref{eq:sumXi} can be neglected compared to prefactor $N^{-(1+k/2)}$. Consequently
\begin{align}
  \frac{1}{N^{1+k/2}} \sum_{|\{i_1, \ldots , i_k\}|<1+k/2}
   \IE \Big(X_N (i_1,i_2) \cdot ~ \ldots ~ \cdot X_N (i_k,i_1) \Big)~\longrightarrow~0
\end{align}

To handle those terms with $|\{i_1, \ldots , i_k \}|> 1+k/2 $ we need the following two observations.

For comparison with results in Section \ref{ssec:scl-CW} we formulate the first one as a lemma.

\begin{lemma}\label{lem:once}
Whenever an edge $\{i,j\}$ occurs only once in \eqref{eq:edges} then
\begin{equation}\label{eq:EXi}
\IE \Big(X_N (i_1,i_2) \cdot X_N (i_2,i_3) \cdot ~~  \ldots ~~  \cdot X_N (i_k,i_1) \Big)~=~0\,.
\end{equation}
\end{lemma}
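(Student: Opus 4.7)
The plan is to exploit the two defining properties of a Wigner ensemble stated in Definition \ref{def:Wigner}: the independence of the family $\{X_N(i,j) : 1 \le i \le j \le N\}$ and the vanishing first moment $\IE(X_N(i,j)) = 0$. The hypothesis that the unordered edge $\{i,j\}$ occurs only once in \eqref{eq:edges} will be used to isolate the factor $X_N(i,j)$ in the product on the left of \eqref{eq:EXi}, so that independence lets us split the expectation and the centeredness kills it.

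More concretely, I would proceed as follows. Let $\nu^\ast \in \{1,\ldots,k\}$ be the unique index (with the convention $i_{k+1} := i_1$) for which $\{i_{\nu^\ast}, i_{\nu^\ast + 1}\} = \{i,j\}$; such a $\nu^\ast$ exists and is unique by hypothesis. Using the symmetry relation $X_N(i,j) = X_N(j,i)$ I identify the corresponding factor with $X_N(i,j)$ (choosing, say, $i \le j$). For every other index $\nu \neq \nu^\ast$, the unordered pair $\{i_\nu, i_{\nu+1}\}$ is different from $\{i,j\}$, so the entry $X_N(i_\nu, i_{\nu+1})$ belongs, up to symmetry, to a family member distinct from $X_N(i,j)$. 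The independence assumption of Definition \ref{def:Wigner} therefore implies that $X_N(i,j)$ is independent of the product
\begin{equation*}
   Y \;:=\; \prod_{\nu \neq \nu^\ast} X_N(i_\nu, i_{\nu+1}).
\end{equation*}
Consequently the joint expectation factorises:
\begin{equation*}
   \IE\Big(X_N(i_1,i_2)\cdot X_N(i_2,i_3)\cdots X_N(i_k,i_1)\Big)
   \;=\; \IE\big(X_N(i,j)\big)\cdot \IE(Y).
\end{equation*}
Since $\IE(X_N(i,j)) = 0$ by the Wigner hypothesis, the right-hand side vanishes, yielding \eqref{eq:EXi}.

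There is essentially no obstacle in the argument beyond some careful bookkeeping; the only subtle point is to be precise about the fact that a single appearance of the \emph{unordered} edge $\{i,j\}$ corresponds to a single factor in the product after one uses symmetry to normalise ordered pairs. Existence of $Y$'s expectation is ensured by the standing assumption that all moments of the entries are finite, together with H\"older's inequality. This concludes the proof sketch.
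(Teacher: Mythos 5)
Your proof is correct and follows exactly the same route the paper takes; the paper simply compresses it to the one-line remark that the claim ``follows from independence and the assumption $\IE\big(X_N(i,j)\big)=0$'', while you spell out the factorisation of the expectation that this remark alludes to.
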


This follows from independence and the assumption $\IE \big(X_N (i,j)\big) = 0$.

The second observation is:
\begin{proposition}\label{prop:r-gross}
 If $|\{i_1, \ldots , i_k \}|> 1+k/2 $ there is an edge $\{i,j\}$
 which occurs only once in $ \{i_{1},i_{2}\},\{i_{2},i_{3}\},\ldots,\{i_{k},i_{1}\} $.
\end{proposition}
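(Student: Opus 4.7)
My plan is to prove the contrapositive: assuming every edge of the associated simple graph $\widetilde{\mathcal{G}}$ occurs at least twice in the sequence \eqref{eq:edges}, I will show that $r := |\{i_1,\ldots,i_k\}| \leq 1 + k/2$.

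The key input I would use is the remark following Definition \ref{def:sgraph}: the sequence \eqref{eq:edges} describes an Eulerian circuit on $\mathcal{G}$, hence $\mathcal{G}$ is connected, and therefore so is $\widetilde{\mathcal{G}}$. First I would note that any connected simple graph on $r$ vertices contains a spanning tree, and hence has at least $r-1$ edges. So $\widetilde{\mathcal{G}}$ has at least $r-1$ (distinct) edges.

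Next I would count the total number of edges in the multigraph $\mathcal{G}$ in two ways. On the one hand, the sequence \eqref{eq:edges} lists exactly $k$ edges (counted with multiplicity), so $\mathcal{G}$ has $k$ edges. On the other hand, under the contrapositive hypothesis each of the $\geq r-1$ edges of $\widetilde{\mathcal{G}}$ is traversed at least twice, so $\mathcal{G}$ has at least $2(r-1)$ edges. Combining these gives $k \geq 2(r-1)$, i.e.\ $r \leq 1 + k/2$, which contradicts the assumption $r > 1 + k/2$. Therefore at least one edge of $\widetilde{\mathcal{G}}$ must occur only once in \eqref{eq:edges}.

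This argument is essentially a spanning-tree counting estimate, and there is no serious obstacle: the only point that requires care is invoking connectivity of $\mathcal{G}$, which is precisely where the Eulerian-circuit structure (as opposed to an arbitrary sequence of edges) enters. Without connectivity the bound $|\widetilde{\mathcal{G}}| \geq r-1$ would fail. Note also that the inequality $k \geq 2(r-1)$ will be used again later when identifying which tuples $(i_1,\ldots,i_k)$ actually contribute to the leading order in \eqref{eq:sumXi}, since equality $r = 1 + k/2$ forces $\widetilde{\mathcal{G}}$ to be a tree with every edge doubled, a structural fact that naturally leads to the Catalan numbers in \eqref{eq:momsc}.
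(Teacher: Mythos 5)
Your proof is correct and follows essentially the same argument as the paper: both rely on the observation that connectivity of $\mathcal{G}$ (via the Eulerian circuit) forces $\widetilde{\mathcal{G}}$ to have at least $r-1$ edges, and doubling each of these would require at least $2(r-1)$ edges in $\mathcal{G}$, yielding $k \geq 2(r-1)$. The paper states this directly rather than as a contrapositive, and leaves the connectivity input implicit, but the substance is identical.
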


\begin{proof}
Set $r=|\{i_1, \ldots , i_k \}|$ and denote the distinct elements of $|\{i_1, \ldots , i_k \}|$
by $j_{1},\ldots,j_{r}$.

   To connect the vertices $j_{1}, \ldots, j_{r}$ we need at least $r-1$ edges. To double each
   of these connections we need $2r-2$ edges. So, if we have $k$ edges we need that
   $k\geq 2r-2$ to double each connection. Hence, if $r>1+k/2$, at least one edge occurs only once.
\end{proof}

\begin{remark}\label{rem:tree}
   A similar reasoning as in the proof above shows: If a graph $\mathcal{G}$ with $k$ edges and $k+1$ vertices is connected
   then $\mathcal{G}$ is a \emph{tree}, i.\,e. $\mathcal{G}$ contains no loops. Indeed, if $\mathcal{G}$ contained a loop
   we could remove an edge without destroying the connectedness of the graph. But the new graph would
   have $k-1$ edges and $k+1$
   vertices, so it cannot be connected.
\end{remark}

From Proposition \ref{prop:r-gross} and \eqref{eq:EXi} we learn that
\begin{align}
   \sum_{|\{i_1, \ldots , i_k\}|>1+k/2}\;
   \IE \Big(X_N (i_1,i_2) \cdot X_N (i_2,i_3) \cdot ~~ \ldots ~~  \cdot X_N (i_k,i_1) \Big)~=~0\ .
\end{align}

To summarize, what we proved so far is
\begin{align}
 & \frac{1}{N^{1 + \frac{k}{2}}}\sum_{i_1, \ldots , i_k = 1}^N \IE \Big(X_N (i_1,i_2) \cdot X_N (i_2,i_3) \cdot ~~ \ldots ~~  \cdot X_N (i_k,i_1) \Big)\notag\\
 \approx~&\frac{1}{N^{1 + \frac{k}{2}}} \sum_{\ueber[1]{|\{i_1, \ldots , i_k\}|=1+k/2} {\text{all }\{i,j\} \text{ occur exactly twice}}}\;
   \IE \Big(X_N (i_1,i_2) \cdot X_N (i_2,i_3) \cdot ~~ \ldots ~~  \cdot X_N (i_k,i_1) \Big)
\label{eq:Var1}
\end{align}
Let us set
\begin{align}
   \mathcal{I}_{k}^{(N)}=\Big\{(i_{1},\ldots,i_{k})\in\{1,\ldots,N\}\,\Big|&\; |\{i_1, \ldots , i_k\}|=1+k/2\notag \\&\text{  and all }%\notag\\
\{i,j\}
 \text{  occur exactly twice.}\Big\}
\end{align}

For odd $k$ the set $\mathcal{I}_{k}^{(N)}$ is empty, so the sum \eqref{eq:Var1} is obviously zero.

Due to independence and the assumptions $\IE(X_{N}(i,j))=0$ and $\IE({X_{N}(i,j)}^{2})=1$ we have
\begin{align*}
   \IE \Big(X_N (i_1,i_2) \cdot X_N (i_2,i_3) \cdot ~~ \ldots ~~  \cdot X_N (i_k,i_1) \Big)~=~1
\end{align*}
whenever all $\{i,j\}$ occur exactly twice.

Consequently,
\begin{align}
  \eqref{eq:Var1}~=~\frac{1}{N^{1 + \frac{k}{2}}}\; \big|\mathcal{I}_{k}^{(N)}\big|\,. \label{eq:twice}
\end{align}

For even $k$, let us consider the multigraph $\mathcal{G}$ associated with $(i_{1},\ldots,i_{k})\in\mathcal{I}_{k}^{(N)}$. Since
$\mathcal{G}$ has $1+k/2$ vertices and $k$ double vertices, the corresponding simple graph $\widetilde{\mathcal{G}}$ is a
connected graph with $1+k/2$ vertices and $k/2$ edges. Thus, this $\widetilde{\mathcal{G}}$ is a tree by Remark \ref{rem:tree}.
Moreover the path
$(i_{1},\ldots,i_{k}, i_{1})$ defines an ordering on $\widetilde{\mathcal{G}}$.

The number of ordered trees \cite{Koshy,Stanley} with $\ell $ edges (and hence $\ell+1$ vertices) is known to be the Catalan number $C_{\ell}$ (see \eqref{eq:Catalan}).

Given an (abstract) ordered tree with $\ell=1+k/2$ vertices we find all corresponding paths
$(i_{1},i_{2},\ldots,i_{k},i_{1})$ with $i_{j}\in\{1,\ldots,N\}$
by assigning $1+k/2$ (different) numbers (=indices) from $\{1,\ldots,N\}$
to the vertices of the tree. There are $ \frac{N!}{(N-(1+k/2))!}\approx N^{1+k/2}$ ways to do this.
Thus

\begin{align}
  \IE \Big( \int x^k ~  d \sigma_N (x) \Big)~&\approx~\frac{1}{N^{1 + \frac{k}{2}}}\; \big|\mathcal{I}_{k}^{(N)}\big| \\
   &\rightarrow~
\begin{cases}
C_{\frac{k}{2}} & \text{~  for ~} k \text{~  even}\\
0 & \text{~  for ~} k \text{~  odd}
\end{cases}
\end{align}

and these are the moments of the semicircle distribution $\sigma$ (see  \eqref{eq:momsc}).

For more details on the semicircle law and its proof see \cite{AGZ}, \cite{Tao} or \cite{MoBu}.

From Theorem \ref{th:Wigner} and \eqref{eq:defscd} we conclude that $\liminf ||\frac{1}{\sqrt{N}}X_{N}||\geq 2$ almost surely, since for symmetric $N\times N$-matrices
$A$ the matrix norm $\|A\|$, as an operator on the Euclidian space $\IR^{N}$, satisfies $\|A\|=\max\,\{|\lambda_{1}(A)|,|\lambda_{N}(A)|\}$.

However, Theorem \ref{th:Wigner}
does \emph{not} imply that $\liminf \|\frac{1}{\sqrt{N}}X_{N}\|\leq 2$! Wigner's result does imply that the \emph{majority} of the eigenvalues will be less than $2+\varepsilon$ finally, however some \big(in fact even $o(N)$\big) eigenvalues could be bigger and might even go to $\infty$. In Sections \ref{ssec:scl-band}, \ref{ssec:scl-CW}, and \ref{ssec:scl-ee} we encounter ensembles for which exactly this happens.

However, for Wigner ensembles it is correct that the norm of $\frac{1}{\sqrt{N}}X_{N} $ goes to $2$. This can be shown
by a more sophisticated variant of the moment method.

\begin{theorem}\label{th:WignerNorm}
Suppose $X_N$ is a Wigner ensemble with $\IE \Big( |X_N (i,j)|^k \Big) < \infty$ for all $k \in \IN$ and let
\begin{equation*}
\lambda_N^* = \max \Big\{ |\lambda_1 ( \frac{1}{\sqrt{N}} X_N ) |, |\lambda_N ( \frac{1}{\sqrt{N}} X_N ) | \Big\} = \| \frac{1}{\sqrt{N}}X_{N} \|
\end{equation*}
be the operator norm of $M_N = \frac{1}{\sqrt{N}} X_N$, then
\begin{equation*}
\lambda_N^* \rightarrow 2 \qquad\text{~  as ~} N \rightarrow \infty\quad \text{$\IP$-almost surely.}
\end{equation*}
\end{theorem}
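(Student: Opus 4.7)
The plan is to split the claim into matching lower and upper bounds. The lower bound $\liminf_{N\to\infty}\lambda_N^*\geq 2$ was already derived from Theorem \ref{th:Wigner} in the paragraph preceding the statement: weak convergence of $\sigma_N$ to the semicircle forces a positive fraction of the eigenvalues of $M_N$ to accumulate near $\pm 2$. What remains is to show $\limsup_{N\to\infty}\lambda_N^*\leq 2$ almost surely. To this end I would use the deterministic inequality
\begin{equation*}
(\lambda_N^*)^{2k}~\leq~\sum_{j=1}^{N}\lambda_j(M_N)^{2k}~=~\text{tr}(M_N^{2k}),\qquad k\in\IN,
\end{equation*}
combined with Markov's inequality, which for any $\epsilon>0$ gives
\begin{equation*}
\IP\big(\lambda_N^*>2+\epsilon\big)~\leq~(2+\epsilon)^{-2k}\,\IE\big(\text{tr}(M_N^{2k})\big).
\end{equation*}

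The analytic heart of the argument is then a quantitative version of the Wigner calculation from Section \ref{ssec:scl-wigner}, to the effect that
\begin{equation*}
\IE\big(\text{tr}(M_N^{2k})\big)~\leq~N\cdot C_k\cdot(1+o(1))
\end{equation*}
holds not only for fixed $k$ but also for $k=k_N$ tending to infinity at a controlled rate, say $k_N=\lfloor c\log N\rfloor$ with $c>0$ small enough. Granting this and using $C_k\sim \pi^{-1/2}\,4^k\,k^{-3/2}$, the Markov estimate above becomes
\begin{equation*}
\IP\big(\lambda_N^*>2+\epsilon\big)~\leq~ N\cdot\left(\frac{4}{(2+\epsilon)^{2}}\right)^{k_N}\cdot k_N^{-3/2}\cdot (1+o(1)),
\end{equation*}
which for $c$ sufficiently small is summable in $N$. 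Borel--Cantelli then yields $\limsup_{N\to\infty}\lambda_N^*\leq 2+\epsilon$ almost surely, and letting $\epsilon$ run through a countable null sequence finishes the proof.

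The principal obstacle is the uniform-in-$k$ moment bound. In the proof of Theorem \ref{th:Wigner} all multigraphs other than doubled trees were discarded either by Lemma \ref{lem:once} or by an easy $N$-power count, which is comfortable because $k$ is fixed. When $k=k_N\to\infty$ this crude bookkeeping is no longer sufficient: one has to classify the Eulerian circuits in \eqref{eq:sumXi} according to the combinatorial type of their associated multigraph, bound the expectation contributed by each type using the finiteness of the moments $\IE(|X_N(i,j)|^{\ell})$ for $\ell$ up to $2k_N$, and count the circuits realizing each type sharply enough that, summed over all non-tree types, they only produce a lower-order correction to $N\cdot C_k$. This refined combinatorial analysis is precisely the content of the F\"uredi--Koml\'os argument in \cite{FuerediK}, with streamlined expositions in \cite{BaiY} and Section~2.3 of \cite{Tao}; once it is in place, the passage from the moment bound to almost sure convergence via Markov and Borel--Cantelli is essentially mechanical.
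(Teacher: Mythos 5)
Your outline follows the same strategy the paper indicates (but deliberately does not carry out) for Theorem~\ref{th:WignerNorm}: the lower bound $\liminf\lambda_N^*\geq 2$ comes for free from Theorem~\ref{th:Wigner} as stated in the paragraph preceding the theorem, while the upper bound is obtained by dominating $(\lambda_N^*)^{2k}$ by $\mathrm{tr}(M_N^{2k})$, applying Markov's inequality with $k=k_N$ growing with $N$, and feeding in a uniform-in-$k$ moment estimate before Borel--Cantelli. You correctly identify the genuine content as the quantitative F\"uredi--Koml\'os moment bound and outsource it to \cite{FuerediK}, \cite{BaiY} and Section~2.3 of \cite{Tao}, which is exactly what the paper itself does.

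One slip deserves flagging: you assert that
$N\,\bigl(4/(2+\epsilon)^{2}\bigr)^{k_N}\,k_N^{-3/2}$
is summable in $N$ ``for $c$ sufficiently small''. It should be \emph{sufficiently large}. Writing $\rho=4/(2+\epsilon)^{2}\in(0,1)$ and $k_N=\lfloor c\log N\rfloor$, the bound is of order $N^{1+c\log\rho}(\log N)^{-3/2}$, and since $\log\rho<0$ one needs $c>2/|\log\rho|$ for the series to converge; taking $k_N$ larger only sharpens the Markov estimate. The constraint pulling in the opposite direction is that the moment bound $\IE(\mathrm{tr}(M_N^{2k}))\leq N\,C_k\,(1+o(1))$ must remain valid up to $k=k_N$, and that is where the F\"uredi--Koml\'os combinatorics earn their keep; since their argument tolerates $k_N$ growing at least logarithmically (indeed polynomially) in $N$, choosing $c$ large poses no problem. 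With that correction your reduction is sound and matches the route the paper points to.
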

This theorem was proved by F\"{u}redi and Komlos in \cite{FuerediK}, see also \cite{BaiY}.

To prove the semicircle law we considered the $k^{th}$ moment $m_k$ of $\sigma_N$ for \emph{fixed}  $k$ as $N$ goes to infinity. For the norm estimate we need bounds on $m_k$ for $k=k_N$ for a sequence $k_{N}$ which is \emph{growing} with $N$. See \cite[Section 2.3]{Tao} for a pedagogical explanation.

\section{Random Band Matrices}\label{ssec:scl-band}
In a first variation of Wigner's semicircle law we abandon the assumption of identical distribution of
the $X_{N}(i,j)$, by assuming that entries away from a band around the diagonal are zero, while the other entries are still iid, apart from the symmetry $X_N (i,j) = X_N (j,i)$.

More precisely, let $\tilde{X}_N (i,j)$ be a Wigner ensemble and set

\begin{equation}
X_N (i,j) =
\begin{cases}
\tilde{X}_N (i,j) & \text{~  for ~} |i - j| \leq b_N\\
0 & \text{~  otherwise}
\end{cases}
\label{eq:band}
\end{equation}

where $b_N$ is a sequence of integers with $b_N \rightarrow \infty$ and $2 b_N + 1 \leq N$.

We call such matrices banded Wigner matrices with band width $\beta_N = 2 b_N + 1$.

There is a 'Semicircle Law' for banded Wigner matrices due to Bogachev, Molchanov and Pastur \cite{BogachevMP}.
\begin{theorem}
\label{th:BMP}
Suppose $X_N$ is a banded Wigner matrix with band width

$\beta_N = 2 b_N + 1\leq N$ and assume that all moments of $X_N (i,j)$ exist.

Set $M_N = \frac{1}{\sqrt{\beta_N}} X_N$ and denote by $\sigma_N$ the eigenvalue distribution measure of $M_N$.
\begin{enumerate}
\item[1)] If $\beta_N \rightarrow \infty$ but $\frac{\beta_N}{N} \rightarrow 0$ then the $\sigma_N$ converges to the semicircle distribution weakly in probability.
\item[2)] If $\beta_N \approx c N$ for some $c > 0$ then $\sigma_N$ converges weakly in probability to a measure $\tilde{\sigma}$ which is \emph{not} the semicircle distribution.
\end{enumerate}
\end{theorem}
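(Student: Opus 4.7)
My plan is to adapt the moment-method proof of Theorem \ref{th:Wigner} to the banded setting, now with scaling $M_N=X_N/\sqrt{\beta_N}$ in place of $X_N/\sqrt{N}$. Expanding the trace as in \eqref{eq:sumXi} yields
\begin{align*}
\IE\Big(\int x^k\,d\sigma_N(x)\Big)~=~\frac{1}{N\,\beta_N^{k/2}}\sum_{i_1,\ldots,i_k=1}^{N}\IE\Big(X_N(i_1,i_2)\cdot X_N(i_2,i_3)\cdots X_N(i_k,i_1)\Big),
\end{align*}
where the band condition $|i_j-i_{j+1}|\le b_N$ is built into the factors because $X_N(i,j)$ vanishes outside the band. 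Since $\tilde X_N$ is still a Wigner ensemble, Lemma \ref{lem:once} and Proposition \ref{prop:r-gross} apply verbatim, so only tuples with $r:=|\{i_1,\ldots,i_k\}|\le 1+k/2$ and each multigraph edge used an even number of times contribute. For such tuples $k$ is forced to be even, the simple graph is a tree on $1+k/2$ vertices (Remark \ref{rem:tree}), and every $X_N$-factor appears precisely twice so the expectation equals $1$.

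The new input is the counting of banded labelings of these trees. Connectedness of the multigraph gives at most $\beta_N$ in-band choices for each successive vertex after an initial root, so tuples with $r<1+k/2$ contribute $O(\beta_N^{r-1-k/2})\to 0$ after normalisation, and the problem reduces to counting, for each of the $C_{k/2}$ ordered trees $T$ on $1+k/2$ vertices, the number $N_T$ of distinct labelings from $\{1,\ldots,N\}$ such that every tree edge joins labels differing by at most $b_N$. In the off-critical regime (Part~1, $\beta_N\to\infty$ and $\beta_N/N\to 0$), roots placed at distance at least $kb_N$ from $\{1,N\}$ admit exactly $\beta_N$ in-band neighbours at each step (distinctness costs only a multiplicative factor $1+O(1/\beta_N)$), while the boundary zone of size $O(b_N)$ contributes $O(b_N\beta_N^{k/2})$, which is negligible compared with $N\beta_N^{k/2}$ since $b_N/N\to 0$; summing over ordered trees and dividing recovers the Catalan moments $C_{k/2}$ of the semicircle distribution via \eqref{eq:momsc}. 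In the critical regime (Part~2, $\beta_N/N\to c$) the boundary effects persist, and the substitution $i_v=\lfloor Nx_v\rfloor$ turns $N_T/N^{1+k/2}$ into a Riemann sum for
\begin{align*}
V_T(c/2)~:=~\int_{[0,1]^{1+k/2}}\;\prod_{(u,v)\in E(T)}\mathbf{1}_{\{|x_u-x_v|\le c/2\}}\,dx,
\end{align*}
producing limiting moments $m_k^{(c)}=c^{-k/2}\sum_{T}V_T(c/2)$. An explicit check (for instance $m_2^{(c)}=1-c/4\ne 1=C_1$) shows that $\tilde\sigma$ differs genuinely from the semicircle as soon as $c>0$.

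To lift convergence in expectation to convergence in probability I would estimate the variance of $\int x^k\,d\sigma_N(x)$ by expanding the square as a double sum over tuples $(i_1,\ldots,i_k)$ and $(j_1,\ldots,j_k)$; pairs whose two associated multigraphs share no edge cancel between $\IE[X(\underline i)X(\underline j)]$ and $\IE[X(\underline i)]\IE[X(\underline j)]$ by independence, while joined pairs form a single connected graph with at most $k+1$ vertices, so their contribution carries one factor of $N$ fewer than the leading normalisation, yielding $\text{Var}\big(\int x^k\,d\sigma_N\big)\le O(1/N)$. Chebyshev's inequality then produces weak convergence in probability, and the bound $m_k^{(c)}\le C_{k/2}$ verifies Carleman's condition via \eqref{eq:mom}, uniquely identifying $\tilde\sigma$ through its moments. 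I expect the main obstacle to be this variance estimate in the critical regime of Part~2: the combinatorics of two trees joined at a shared vertex must be combined with the Riemann-sum bookkeeping, since the volume integral $V_T(c/2)$ does not decouple across the join and one must carefully match the $1/N$ saving from connectedness against the non-trivial $c$-dependent volumes—a complication absent from both Part~1 and Wigner's original argument, where pure power-counting suffices.
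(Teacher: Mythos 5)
Your proposal is correct and follows essentially the same moment-method approach that the paper sketches for Theorem \ref{th:BMP}: expand the trace, reduce to counting banded labelings of ordered trees on $1+k/2$ vertices via Lemma \ref{lem:once} and Proposition \ref{prop:r-gross}, use $\beta_N\to\infty$ to kill the sub-maximal tuples, and then distinguish the bulk (where each new vertex has exactly $\beta_N$ in-band choices) from the boundary zone near rows $1$ and $N$, which is negligible precisely when $\beta_N/N\to 0$ and persists when $\beta_N\approx cN$. The paper stops at this heuristic and refers to \cite{BogachevMP} and \cite{Catalano} for the details you supply — the Riemann-sum identification $m_k^{(c)}=c^{-k/2}\sum_T V_T(c/2)$, the explicit check $m_2^{(c)}=1-c/4\ne C_1$, the bound $m_k^{(c)}\le C_{k/2}$ for Carleman, and the variance estimate — so your account is a faithful elaboration rather than a different route.
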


It turns out that the moment method used to prove Wigner's result can also be applied to banded random matrices.

Let us look at the products
\begin{equation*}
X_N (i_1,i_2) \cdot X_N (i_2,i_3) \cdot ~~  \ldots ~~  \cdot X_N (i_k,i_1)
\end{equation*}
which occur in evaluating traces as in \eqref{eq:sumXi}.

We have $N$ possibilities to choose $i_1$. In principle, for $i_2$ we have again $N$ possibilities. However, unlike to the Wigner case, at most $\beta_N$ of these possibilities are not identically zero. This observation makes it plausible that
\begin{equation*}
\sum_{i_{1},\ldots,i_{k}}\; \IE \Big( X_N (i_1,i_2) \cdot ~  \ldots ~  \cdot X_N (i_k,i_1) \Big) ~~\approx~~
 N\, {\beta_N}^{k/2}
\end{equation*}
since - again  - only those terms with each $\{i,j\}$ occurring exactly twice count in the limit. Note that our assumption $\beta_{N}\to\infty$ is needed here.
Without this assumptions pairs $\{i,j\} $ occurring more than twice are not negligible.

Unfortunately, the above argument is not \emph{quite} correct. It \emph{is} true that most columns (and rows) contain $\beta_N$ entries $X_N (i,j)$ which are not identically equal to zero.

However, this is wrong for the rows with row number $j$ when
\begin{equation*}
j \leq b_N \text{~  or ~} j > N - b_N \,,
\end{equation*}
i.e. in the `corners' of the matrix.

Thus for any $1\leq \ell < k$ for which the vertex $i_{\ell+1}$ is new in the path, i.e.~$i_{\ell+1} \notin \{i_1, \ldots, i_\ell\}$ we have ...
$\beta_N$ choices for $i_{l+1}$ \emph{only} if
\begin{equation*}
b_N < i_\ell \leq N - b_N \,.
\end{equation*}
If $\frac{b_N}{N} \rightarrow 0$ (as in case 1 of the theorem) the number of exceptions (i.e. $i_\ell \leq b_N$ or $i_l > N - b_N$) is negligible and the semicircle law is again valid.

However, if $b_N$ grows proportional to $N$ the `exceptional' terms are not exceptional any more but rather contribute in the limit $N \rightarrow \infty$.

For details of the proof see \cite{BogachevMP} or \cite{Catalano}.

The above argument suggests that for $b_N \approx c N$ the limit distribution might be again the semicircle distribution if we `fill the corners' of the matrix appropriately.
This can be achieved by the following modification of \eqref{eq:band}.

\begin{definition}\label{def:band}
Set (for $i \in \IN$)
\begin{equation}\label{eq:NormN}
|\,i\,|_N = \min\, \{ |\,i\,|, |N - i| \}
\end{equation}

and let $\tilde{X}_N$ be a Wigner ensemble. Then we call the matrix

\begin{equation}
X_N (i,j) =
\begin{cases}
\tilde{X}_N (i,j) & \text{~  for ~} |i - j|_N \leq b_N\\
0 & \text{~  otherwise}
\end{cases}
\label{eq:bandper}
\end{equation}

a \emph{periodic band matrix}.
\end{definition}

$|i-j|_N$ measures the distance of $i$ and $j$ on $\mathbb{Z}/N\mathbb{Z}$. The choice of $| \cdot |_N$ guarantees that each column (and each row) contains exactly $\beta_N = 2 b_N + 1$ non zero (i.e. not identically zero) entries.

As we anticipated we have
\begin{theorem}\label{th:BMPperiodic}
If $X_N$ is a periodic band random matrix with band width $\beta_N \leq N$ and
$\beta_N \rightarrow \infty$, then the eigenvalue distribution measure $\sigma_N$ of $\frac{1}{\sqrt{\beta_N}} X_N$ converges weakly in probability to the semicircle distribution $\sigma$.
\end{theorem}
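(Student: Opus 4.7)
The plan is to adapt the moment argument behind Theorem~\ref{th:Wigner}, exploiting that the cyclic metric $|\cdot|_N$ of \eqref{eq:NormN} gives every row of $X_N$ exactly $\beta_N$ non-identically-zero entries. The trace formula reads
\begin{equation*}
\IE\Big(\int x^k\,d\sigma_N(x)\Big)=\frac{1}{N\,\beta_N^{k/2}}\sum_{i_1,\ldots,i_k=1}^N \IE\Big(X_N(i_1,i_2)\,X_N(i_2,i_3)\cdots X_N(i_k,i_1)\Big),
\end{equation*}
and a summand vanishes identically unless $|i_\ell-i_{\ell+1}|_N\leq b_N$ for every $\ell$ (cyclically). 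Grouping by the multigraph $\mathcal{G}$ associated with $(i_1,\ldots,i_k)$, Lemma~\ref{lem:once} and Proposition~\ref{prop:r-gross} apply verbatim: any tuple whose simple graph $\widetilde{\mathcal{G}}$ carries an edge of multiplicity one contributes zero, so only tuples with $r:=|\{i_1,\ldots,i_k\}|\leq 1+k/2$ survive.

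The point where periodicity enters is the counting. Because $(i_1,\ldots,i_k,i_1)$ is an Eulerian circuit, $\widetilde{\mathcal{G}}$ is connected on $r$ vertices, and walking along the circuit one may introduce every new vertex as a neighbour of an already chosen one. The periodic band constraint admits exactly $\beta_N$ indices in $\{1,\ldots,N\}$ at every such step, \emph{regardless of which index was chosen previously}; this is precisely the advantage of the periodic model over the ordinary band matrices of Theorem~\ref{th:BMP}, where corners spoiled the uniformity. Together with $N$ choices for $i_1$, this yields at most $c_k\,N\,\beta_N^{r-1}$ tuples of any given shape, where $c_k$ depends only on $k$. Using a uniform H\"older bound $|\IE(X_N(i_1,i_2)\cdots X_N(i_k,i_1))|\leq C^k$ (all moments exist by standing assumption), the total contribution from tuples with $r$ distinct vertices is $O(\beta_N^{r-1-k/2})$. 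For $r<1+k/2$ this is $o(1)$ since $\beta_N\to\infty$, and for $r>1+k/2$ it is zero by Lemma~\ref{lem:once}.

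It remains to treat $r=1+k/2$, where every edge has multiplicity exactly two, so $\widetilde{\mathcal{G}}$ is a tree with $k/2$ edges carrying an Eulerian order (Remark~\ref{rem:tree}), and the product expectation equals $1$. There are $C_{k/2}$ such ordered trees, and each admits $N\cdot(\beta_N-O(1))^{k/2}=N\beta_N^{k/2}(1+o(1))$ index assignments in $\{1,\ldots,N\}$, the $O(1)$ accounting for the at most $k$ previously used values. Combining,
\begin{equation*}
\IE\Big(\int x^k\,d\sigma_N(x)\Big)\longrightarrow\begin{cases}C_{k/2}, & k\text{ even},\\ 0, & k\text{ odd},\end{cases}
\end{equation*}
matching the moments of $\sigma$ in \eqref{eq:momsc}; hence $\sigma_N\Rightarrow\sigma$ weakly in expectation.

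To upgrade to convergence in probability, I would estimate $\mathrm{Var}\big(\int x^k\,d\sigma_N(x)\big)$ by expanding it as a sum over pairs of closed length-$k$ walks. Pairs whose edge sets are disjoint contribute exactly $\big[\IE\!\int x^k\,d\sigma_N\big]^2$ by independence, while pairs sharing at least one vertex produce a merged connected graph with one fewer vertex than the disjoint union, so the same tree-plus-band counting yields an extra factor of $N^{-1}$ against the normalisation $(N\beta_N^{k/2})^{-2}$, giving $\mathrm{Var}=O(1/N)$. Chebyshev's inequality together with the moment-to-weak-convergence criterion invoked before Proposition~\ref{prop:r-gross} then concludes $\sigma_N\Rightarrow\sigma$ weakly in probability. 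The main technical obstacle is precisely this variance step: the combinatorics of paired Eulerian circuits is noticeably more intricate than the one-circuit count, but it is governed by the same principle that trees dominate and every additional cycle costs a factor $\beta_N^{-1}$ or $N^{-1}$.
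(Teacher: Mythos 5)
Your argument follows the same moment-method route that the paper sketches (in the discussion between Theorems~\ref{th:BMP} and~\ref{th:BMPperiodic}) and then delegates to \cite{BogachevMP} and \cite{Catalano}: normalize the trace by $N\beta_N^{k/2}$, observe that a surviving path may hop only within a $b_N$-neighbourhood in the cyclic metric $|\cdot|_N$, note that periodicity makes the number of admissible next indices exactly $\beta_N$ for \emph{every} current index (which is precisely the feature that removes the corner-effect of the non-periodic case), and count Eulerian circuits on trees to recover the Catalan numbers. Your treatment of the three regimes $r<1+k/2$ (negligible, using $\beta_N\to\infty$), $r>1+k/2$ (vanishing, via Lemma~\ref{lem:once} and Proposition~\ref{prop:r-gross}), and $r=1+k/2$ (leading term, $C_{k/2}N\beta_N^{k/2}(1+o(1))$) is correct, including the $(\beta_N-O(1))^{k/2}$ lower bound that relies on $\beta_N\to\infty$. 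One small inaccuracy in your variance sketch: the pairs that contribute to $\mathrm{Var}(\int x^k\,d\sigma_N)$ are those whose walks share at least one \emph{edge}, not merely a vertex (pairs sharing a vertex but not an edge still have factorizing expectations and cancel). Since a shared edge forces the merged graph to be connected with at most $k+1$ vertices once single edges are excluded, the correct count is $O(N\beta_N^{k})$, giving $\mathrm{Var}=O(1/N)$ as you state, so the conclusion survives. Overall your proposal is a more detailed rendering of exactly the argument the paper outlines.
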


A proof of this result due to Bogachev, Molchanov and Pastur can be found in \cite{BogachevMP} or in \cite{Catalano}.

Catalano \cite{Catalano} has generalized the above result to matrices of the form
\begin{equation}\label{eq:RC1}
X_N (i,j) = \alpha \Big(\frac{|i-j|}{N} \Big) \tilde{X}_N (i,j)
\end{equation}

where $\tilde{X}_N$ is a Wigner matrix and $\alpha : [0,1] \rightarrow \IR$ a Riemann integrable function.

This class of matrices contains both random band matrices with $b_N \approx c N$ and periodic random band matrices, take either $ \alpha (x) = \chi_{[0,c]} (x)$ or $\alpha (x) = \chi_{[0,c] \cup [1-c,1]} (x)$,
where $\chi_{A}(x)=\left\{
                     \begin{array}{ll}
                       1, & \hbox{if $x\in A$;} \\
                       0, & \hbox{otherwise.}
                     \end{array}
                   \right.$

\begin{theorem}\label{th:Catalano}
Let $X_N$ be a matrix ensemble as in \eqref{eq:RC1}, set $$\Phi:=\int_0^1 \int_0^1 \alpha^2 (|x-y|) ~  dx ~  dy $$ and let $\sigma_N$ be the eigenvalue distribution measure for $\frac{1}{\sqrt{\Phi N}} X_N$. Then $\sigma_N$ converges weakly in probability to a limit measure $\tau$.

The limit $\tau$ is the semicircle law if and only if
\begin{align}\label{eq:Catalpha}
   |\alpha(x)|=|\alpha(1-x)|
\end{align}
for almost all $x\in\IR$.
\end{theorem}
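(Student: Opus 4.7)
The plan is to adapt the moment argument of Section \ref{ssec:scl-wigner} to the weighted ensemble, treating the deterministic factors $\alpha(|i-j|/N)$ as extra decoration of each matrix entry.

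First I would expand
$$\IE\int x^k\,d\sigma_N=\frac{1}{N(\Phi N)^{k/2}}\sum_{i_1,\ldots,i_k=1}^N\prod_{s=1}^k\alpha\!\left(\frac{|i_s-i_{s+1}|}{N}\right)\IE\!\left(\prod_{s=1}^k \tilde X_N(i_s,i_{s+1})\right)$$
with cyclic index $i_{k+1}=i_1$. Since $\alpha$ is Riemann integrable on $[0,1]$ it is bounded, so Lemma \ref{lem:once} and Proposition \ref{prop:r-gross} apply verbatim to the inner expectation: only tuples whose associated simple graph is a tree on $1+k/2$ vertices and whose Eulerian circuit traverses each edge exactly twice survive the limit, and odd moments vanish.

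Next I would convert the surviving sum into a Riemann integral. Writing $\ell=k/2$, the contribution of an ordered tree $T$ on labelled vertex set $\{v_0,\ldots,v_\ell\}$ is
$$\frac{1}{\Phi^\ell N^{\ell+1}}\sum_{i_0,\ldots,i_\ell=1}^N\prod_{\{a,b\}\in E(T)}\alpha^2\!\left(\frac{|i_a-i_b|}{N}\right)\;\longrightarrow\;\frac{I_T}{\Phi^\ell},\;\; I_T:=\int_{[0,1]^{\ell+1}}\prod_{\{a,b\}\in E(T)}\alpha^2(|x_a-x_b|)\,dx,$$
so that, summing over the $C_\ell$ ordered trees with $\ell$ edges,
$$m_{2\ell}(\tau)=\Phi^{-\ell}\sum_{T}I_T,\qquad m_{2\ell+1}(\tau)=0.$$
Boundedness of $\alpha$ gives $m_{2\ell}(\tau)\leq(\|\alpha\|_\infty^2/\Phi)^\ell C_\ell$, which satisfies the growth condition \eqref{eq:mom}, so a unique probability measure $\tau$ with these moments exists. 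The upgrade from weak-in-expectation to weak-in-probability follows, just as in the Wigner case, from an $O(N^{-1})$ bound on $\mathrm{Var}(\int x^k\,d\sigma_N)$: pairing two Eulerian circuits, any ``once-only'' edge in the combined multigraph kills the cross-term by independence, leaving only terms that factorise up to a gain of $N^{-1}$.

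Finally I would identify when $\tau=\sigma$ via the leaf-integration function
$$f(s):=\int_0^1\alpha^2(|t-s|)\,dt=F(s)+F(1-s),\qquad F(x):=\int_0^x\alpha^2(u)\,du,\qquad \int_0^1 f\,ds=\Phi.$$
If $|\alpha(s)|=|\alpha(1-s)|$ a.e., then $f$ is absolutely continuous with $f'(s)=\alpha^2(s)-\alpha^2(1-s)=0$ a.e., so $f\equiv\Phi$ by continuity; integrating any tree $T$ iteratively from its leaves then gives $I_T=\Phi^\ell$, hence $m_{2\ell}(\tau)=C_\ell=m_{2\ell}(\sigma)$. For the converse the fourth moment alone decides: the only underlying tree on three vertices is the path, giving $I_T=\int_0^1 f^2\,ds$ for each of the $C_2=2$ ordered versions, so $m_4(\tau)=\frac{2}{\Phi^2}\int_0^1 f^2$. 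Assuming $\tau=\sigma$ forces $\int_0^1 f^2=\Phi^2=(\int_0^1 f)^2$, equality in Cauchy--Schwarz, so $f$ is constant, equivalently $|\alpha(s)|=|\alpha(1-s)|$ almost everywhere. The main obstacle I expect lies in the Riemann-sum step for merely Riemann integrable $\alpha$: the negligibility of diagonal contributions (vertex collisions and edges traversed more than twice) requires sandwiching $\alpha^2$ between continuous envelopes of arbitrarily small $L^1$-deviation; the remainder -- tree enumeration, leaf-integration, Cauchy--Schwarz rigidity -- is formal.
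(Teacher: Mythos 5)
The paper does not actually present a proof of this theorem; it attributes the result to Catalano and refers the reader to the thesis \cite{Catalano} (and to \cite{BogachevMP} for the periodic case) for details, so there is no ``paper's own proof'' against which to compare. That said, your argument is the natural extension of the moment method developed in Section~\ref{ssec:scl-wigner} to the weighted ensemble \eqref{eq:RC1}, and it is essentially correct.

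A few remarks on specific points. Your reduction to ordered trees with doubled edges is unchanged from the Wigner case because Lemma~\ref{lem:once} and Proposition~\ref{prop:r-gross} depend only on independence and centering of $\tilde X_N$, and the bounded factors $\alpha(|i_s-i_{s+1}|/N)$ ride along harmlessly; this is exactly the right observation. Passing from the sum over labelled trees to the multidimensional Riemann sum and thence to $I_T/\Phi^\ell$ is legitimate: if $\alpha^2$ is Riemann integrable on $[0,1]$ then each factor $(x_a,x_b)\mapsto\alpha^2(|x_a-x_b|)$ has a null set of discontinuities in $[0,1]^2$ (Fubini plus the Lebesgue criterion), so the integrand over $[0,1]^{\ell+1}$ is Riemann integrable and the sums converge. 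Your worry about ``diagonal contributions'' mixes two separate issues: terms with vertex collisions or over-traversed edges are already killed by the counting argument (of lower order in $N$) and by the boundedness of $\alpha$, while the sandwiching by continuous envelopes is needed only to justify convergence of the Riemann sums themselves; both are fine, just keep them distinct. The leaf-peeling identity $I_T=\Phi^\ell$ when $f\equiv\Phi$ is correct for all ordered trees, and the rigidity argument for the converse via the fourth moment (path on three vertices, $m_4(\tau)=\tfrac{2}{\Phi^2}\int_0^1 f^2$, Cauchy--Schwarz forces $f$ constant, hence $\alpha^2(s)=\alpha^2(1-s)$ a.e.) is exactly the right way to read off the ``only if'' direction. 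Two small points to make explicit in a polished write-up: (i) $f\equiv\Phi$ follows from $f'=0$ a.e.\ plus \emph{absolute} continuity of $f$ (not mere continuity), which you do have since $\alpha^2\in L^1$; (ii) the variance bound that upgrades convergence in expectation to convergence in probability should be stated at the level of the weighted entries $\alpha(|i-j|/N)\tilde X_N(i,j)$, but since $\alpha$ is bounded the standard pairing argument carries over verbatim. Overall this is a correct and complete plan.
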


Note, that in the case of band matrices with bandwidth proportional to $N$
condition \eqref{eq:Catalpha} is fulfilled for the periodic case, but not for the non periodic case \eqref{eq:band}.

As for the Wigner case the question arises whether the norm of band matrices is bounded in the limit $N\to\infty$. In fact we have:

\begin{theorem}
\label{th:NormBand}
Let $X_N$ be a banded Wigner ensemble (as in \eqref{eq:band}) with band width $\beta_{N}\leq N$ and assume that all moments of $X_{N}(i,j)$ exist.
If there are positive constants $\gamma$ and $C$ such that $\beta_{N}\geq C\,N^{\gamma}$ for all $N$ then
\begin{align}
   \limsup_{N\to\infty} ||\frac{1}{\sqrt{\beta_{N}}}\,X_{N}||~\leq~2
\end{align}
$\IP$-almost surely.
\end{theorem}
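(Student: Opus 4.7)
The plan is to extend the moment method used for the semicircle law (Theorem \ref{th:Wigner}) to a uniform estimate on high moments, following the scheme indicated at the end of Section \ref{ssec:scl-wigner} and executed for Wigner matrices in \cite{FuerediK}. Setting $M_{N}=\frac{1}{\sqrt{\beta_{N}}}X_{N}$, the starting point is the elementary inequality $\|M_{N}\|^{2k}\leq\text{tr}(M_{N}^{2k})$, valid for any real symmetric matrix. The goal is to prove
\begin{equation*}
   \IE\,\text{tr}\bigl(M_{N}^{2k}\bigr)~\leq~(1+o(1))\,N\,C_{k}
\end{equation*}
uniformly for $k=k_{N}$ in a suitable range, where $C_{k}$ denotes the $k$-th Catalan number. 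Markov's inequality then yields
\begin{equation*}
   \IP\bigl(\|M_{N}\|\geq 2+\epsilon\bigr)~\leq~(2+\epsilon)^{-2k_{N}}\,\IE\,\text{tr}\bigl(M_{N}^{2k_{N}}\bigr),
\end{equation*}
and the asymptotic $C_{k}\sim 4^{k}/(\sqrt{\pi}\,k^{3/2})$ produces a geometric decay factor $\bigl(4/(2+\epsilon)^{2}\bigr)^{k_{N}}=(1-\eta)^{k_{N}}$ for some $\eta=\eta(\epsilon)>0$. Choosing $k_{N}=\lceil A\log N\rceil$ with $A$ large enough renders the resulting probability summable in $N$, so Borel--Cantelli gives $\limsup_{N\to\infty}\|M_{N}\|\leq 2+\epsilon$ almost surely. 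Since $\epsilon>0$ is arbitrary, the theorem follows.

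The main work lies in the trace estimate itself. Expanding as in \eqref{eq:sumXi}, one has
\begin{equation*}
   \IE\,\text{tr}\bigl(M_{N}^{2k}\bigr)~=~\frac{1}{\beta_{N}^{k}}\sum_{i_{1},\dots,i_{2k}=1}^{N}\IE\bigl(X_{N}(i_{1},i_{2})\,X_{N}(i_{2},i_{3})\cdots X_{N}(i_{2k},i_{1})\bigr),
\end{equation*}
and tuples are classified via their associated multigraph (Definition \ref{def:graph}). As in the proof of Theorem \ref{th:Wigner}, only tuples whose simple graph is connected and in which every edge appears at least twice contribute, and the standing assumption that all moments of the entries exist bounds the individual expectations by a constant depending only on the edge multiplicities. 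The band structure imposes $|i_{\ell}-i_{\ell+1}|\leq b_{N}$ along consecutive steps, so once $i_{\ell}$ is fixed there are at most $\beta_{N}$ admissible values of $i_{\ell+1}$. The leading contribution comes from tuples whose simple graph is a tree on $1+k$ vertices with each edge doubled: there are $N$ choices for $i_{1}$, at most $\beta_{N}$ choices whenever the Eulerian circuit introduces a new vertex (which must lie within band distance of the current one), and the Catalan number $C_{k}$ counts the corresponding ordered trees, so this yields the bound $N\,\beta_{N}^{k}\,C_{k}$, hence $N\,C_{k}$ after dividing by $\beta_{N}^{k}$.

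The hard part will be controlling the subleading contributions uniformly in $k=k_{N}$. Any tuple whose simple graph deviates from the tree-with-all-edges-doubled structure, either because some edge has multiplicity at least three or because the graph contains a cycle, leads to a count smaller than the leading term by at least one factor of $\beta_{N}$: writing out the Eulerian circuit and bounding each step by the band width, a tuple whose simple graph has $r$ vertices contributes at most $N\,\beta_{N}^{r-1}$ to the sum, and for non-leading configurations $r\leq k$. A careful enumeration, as in \cite{FuerediK}, of the combinatorial types of graphs with $r$ vertices and $2k$ path edges yields a total subleading contribution of the form $N\,C_{k}\,Q(k)\,\beta_{N}^{-1}$, where $Q(k)$ grows at most polynomially in $k$. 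The hypothesis $\beta_{N}\geq C\,N^{\gamma}$ is exactly what is needed here: for $k_{N}=O(\log N)$ it forces $Q(k_{N})\,\beta_{N}^{-1}\to 0$, so the subleading error is absorbed into the $(1+o(1))$ factor and the bound closes. Making the Füredi--Komlos style combinatorial bookkeeping rigorous in this band setting is the principal technical obstacle.
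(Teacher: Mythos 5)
The paper does not actually give a proof of Theorem~\ref{th:NormBand}; the text states that a proof ``is contained in the forthcoming paper \cite{KK2}.'' So there is no internal argument to compare against. Nonetheless, your general plan — F\"uredi--Koml\'os style moment estimates with $k=k_{N}$ growing, followed by Markov and Borel--Cantelli — is precisely the strategy the paper gestures at in the last two paragraphs of Section~\ref{ssec:scl-wigner}, so the overall route is the right one.

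That said, there are two concrete gaps you should not wave away. First, you acknowledge that the entire burden of the proof is the combinatorial bookkeeping, and you assert without justification that the subleading contribution is of the form $N\,C_{k}\,Q(k)\,\beta_{N}^{-1}$ with $Q$ \emph{polynomial} in $k$. This claim does not square with the hypothesis you are supposed to use: with $k_{N}=O(\log N)$, a polynomial $Q(k_{N})$ divided by $\beta_{N}$ tends to $0$ already for $\beta_{N}$ growing only polylogarithmically in $N$, so the condition $\beta_{N}\geq C\,N^{\gamma}$ would be wildly superfluous. The paper, however, emphasizes that Bogachev--Molchanov--Pastur produce examples where the normalized norm diverges when $\beta_{N}$ grows only logarithmically, so the polynomial growth hypothesis must enter the estimate in a nontrivial way. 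Either your bound on the subleading graphs is too optimistic (the number of combinatorial types and the excess-step factors in the F\"uredi--Koml\'os enumeration typically grow much faster than polynomially in $k$), or you need to balance $k_{N}$ against $\beta_{N}$ more carefully; in either case a real derivation is missing.

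Second, you write that ``all moments exist'' bounds the individual expectations by constants ``depending only on the edge multiplicities'' — true, but insufficient. For edges of multiplicity $m\geq 3$ the contributing factor is a moment $\IE(|X_{N}(i,j)|^{m})$, and in the range $k=k_{N}\to\infty$ these multiplicities can grow with $N$. Without a truncation step or a quantitative control on the growth of the higher moments, the constants in your claimed bound are not uniform in $k$ and the Borel--Cantelli argument does not close. This is exactly the kind of technicality that makes the F\"uredi--Koml\'os argument (and its band-matrix adaptation) nontrivial, and it needs to be addressed rather than deferred.
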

A proof of Theorem \ref{th:NormBand} is contained in the forthcoming paper \cite{KK2}. This theorem applies to periodic band matrices as well.

Bogachev, Molchanov and Pastur \cite{BogachevMP} show that the norm of $\frac{1}{\sqrt\beta_{N}}\,X_{N} $ can go to infinity if $\beta_{N} $ grows only
on a logarithmic scale with $N$.

We mention that there are various other results about matrices with independent, but not
identically distributed random variables. Already the papers \cite{Pastur72} and \cite{Pastur73}
consider matrix entries with constant variances but not necessarily identical distribution.
The identical distribution of the entries is replaced by a (far weaker) condition of Lindeberg type.
In the paper \cite{GoetzeNT} even the condition of constant variances is relaxed. Moreover these
authors replace independence by a martingale condition.

\section{Sparse Dependencies}\label{ssec:scl-sparse}
Now we turn to attempts to weaken the assumption of independence between the $X_N (i,j)$ of a matrix ensemble.
We start with what we call `sparse dependencies'.

This means that, while we don't care \emph{how} some of the $X_N (i,j)$ depend on each other, we restrict the \emph{number} of dependencies in a way specified below. We follow Schenker and Schulz-Baldes \cite{SchenkerSB} in this section.

We assume that for each $N$ there is an equivalence relation $\sim_N$ on $\IN_{N}^{2}$ with $\IN_{N}=\{1,2,\ldots,N\}$ and we suppose that the random variables $X_N (i,j)$ and $X_N (k, \ell)$ for $1 \leq j, k \leq \ell$ are independent unless $(i,j)$ and $(k, \ell)$ belong to the same equivalence class with respect to $\sim_N$.
\begin{definition}
\label{def:sparseR}
We call the equivalence relations $\sim_N$ \emph{sparse}, if the following conditions are fulfilled:
\begin{enumerate}
\item[1)] $\underset{i \in \IN_N}{\max}\; |\{(j,k,\ell) \in \IN_N^3 | (i,j) \sim_{N} (k,\ell) \}| =
o(N^2)$
\item[2)] $| \{(i,j,\ell) \in \IN_N^3 | (i,j) \sim_{N} (j,\ell)$ and $\ell \neq i \} | = o(N^2)$
\item[3)] $ \underset{i,j,k \in \IN_N}{\max} | \{\ell \in \IN_N | (i,j) \sim_{N} (k, \ell) \} | \leq B$
\quad for an $N$-independent constant $B$.
\end{enumerate}
\end{definition}
\begin{definition}
A symmetric random matrix ensemble $X_N (i,j)$ with

$\IE \Big(X_N (i,j) \Big) = 0, \IE \Big(X_N (i,j)^2 \Big) = 1$ and $\underset{N,i,j}{\sup} ~  \IE \Big(X_N (i,j)^k \Big) < \infty$ for all $k \in \IN$ is called a \emph{generalized Wigner ensemble with sparse dependence structure} if there are sparse equivalence relations $\sim_N$, such that $X_N (i,j)$ and $X_N (k, \ell)$
are independent if $(i,j) \nsim_N (k, \ell)$.
\end{definition}
\begin{examples}
If $A_N$ and $B_N$ are Wigner matrices, then the $2N \times 2N$-matrices

\begin{align*}
X_N =
\begin{pmatrix}
A_N & B_N\\
B_N & -A_N
\end{pmatrix}
\end{align*}
and
\begin{align*}
X_N^{'} =
\begin{pmatrix}
A_N & B_N\\
B_N & A_N
\end{pmatrix}
\end{align*}

are generalized Wigner ensembles with sparse dependence structure.

Many more example classes can be found in \cite{HofmannS}.
\end{examples}

\begin{theorem}
\label{th:SHSB}
If $X_N$ is a generalized Wigner ensemble with sparse dependence structure and $\sigma_N$ is the eigenvalue distribution measure of $M_N = \frac{1}{\sqrt{N}} X_N$ then $\sigma_N$ converges to the semicircle distribution weakly in probability.
\end{theorem}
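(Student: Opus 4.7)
The plan is to mimic the moment computation carried out in Section \ref{ssec:scl-wigner}, adapting the argument that isolates the dominant trees to a setting where independence is only guaranteed across different equivalence classes of $\sim_N$. We expand
\begin{equation*}
   \IE\Bigl(\int x^k\,d\sigma_N(x)\Bigr)~=~\frac{1}{N^{1+k/2}}\sum_{i_1,\ldots,i_k=1}^{N}\IE\bigl(X_N(i_1,i_2)\cdot X_N(i_2,i_3)\cdots X_N(i_k,i_1)\bigr),
\end{equation*}
associate the multigraph $\mathcal{G}$ and its simple projection $\widetilde{\mathcal{G}}$ as in Definitions \ref{def:graph} and \ref{def:sgraph}, and organise the summation according to the partition of the edge sequence $e_\nu=\{i_\nu,i_{\nu+1}\}$ into blocks of the restricted equivalence relation $\sim_N$.

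The first step is to generalise Lemma \ref{lem:once}: if some edge $e_\nu$ is $\not\sim_N$-related to every other $e_\mu$ in the sequence, then $X_N(e_\nu)$ is independent of the remaining factors and, since it is centered, the whole expectation vanishes. Hence only those tuples contribute for which each block of $\sim_N$ among $\{e_1,\ldots,e_k\}$ has size at least two. I split the surviving tuples into two groups. In group~A every such block consists of literally repeated edges, so $\widetilde{\mathcal{G}}$ has $k/2$ edges and $r\le 1+k/2$ vertices; group~B contains everything else, i.e.~at least one block identifies two distinct edges via $\sim_N$.

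For group~A the uniform bound $\sup_{N,i,j}\IE(|X_N(i,j)|^k)<\infty$ and the variance normalisation reduce the analysis to counting, exactly as in the Wigner case, and the only asymptotically surviving configurations are the tree-like ones with $r=1+k/2$, which yields the Catalan number $C_{k/2}$ for even $k$ and $0$ for odd $k$. For group~B the three sparseness conditions enter crucially. Condition (1) bounds the number of tuples in which a single $\sim_N$-identification of two non-equal edges already occurs: given $i_\nu$, the number of admissible triples $(j,k,\ell)$ with $(i_\nu,j)\sim_N(k,\ell)$ is $o(N^2)$, which saves a factor $o(1)$ over the naive estimate. Condition (3) handles the harder case in which the $\sim_N$-identification forces a new vertex on the path: once three of the four indices of the identified pair are fixed, the fourth has only $O(1)$ choices, so each such identification costs at least one factor of $N$. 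A careful bookkeeping along the Eulerian circuit (walking from $i_1$ and recording at each step whether a vertex is new, a backtrack of a literal copy, or a $\sim_N$-link) then shows that any tuple in group~B gains, in the global count, at most $o(N^{1+k/2})$ summands. Condition (2) is what excludes the residual configurations where a $\sim_N$-link collapses two adjacent edges of the circuit into a single vertex, which otherwise would be of the same order as the main term.

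Having established the convergence of the expected moments to the moments of the semicircle distribution, weak convergence in probability follows by a second-moment/variance estimate, i.e.~by showing
\begin{equation*}
   \IV\Bigl(\tfrac{1}{N}\,\mathrm{tr}\,M_N^{k}\Bigr)~=~\frac{1}{N^{2+k}}\sum \Bigl[\IE(\cdots)-\IE(\cdots)\IE(\cdots)\Bigr]~\longrightarrow~0,
\end{equation*}
where the sum runs over pairs of closed paths of length $k$. The cancellation between the joint and factorised expectations means that only pairs of paths which are coupled through a non-trivial $\sim_N$-identification contribute, and a pigeonhole argument based on conditions (1) and (3) shows that these pairs form a set of size $o(N^{2+k})$. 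Combined with Chebyshev's inequality and the moment convergence Proposition following \eqref{eq:mom} (applicable because $\sigma$ has compact support), this yields weak convergence in probability. I expect the most delicate part to be the bookkeeping for group~B: one must walk along the Eulerian circuit and, at each identification event forced by $\sim_N$, decide which of the three sparseness conditions to invoke in order to recover an $o(1)$ gain, and this requires a case analysis that is more intricate than in Wigner's original argument.
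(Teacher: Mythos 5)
The paper does not contain a proof of Theorem \ref{th:SHSB}; it attributes the result to Schenker and Schulz-Baldes \cite{SchenkerSB} for weak convergence in expectation and to Catalano \cite{Catalano} for the upgrade to convergence in probability, and then moves on. Your outline does reproduce the broad strategy of the cited work: expand the normalized trace, discard any closed path containing an edge whose $\sim_N$-class appears only once among the path edges (your generalization of Lemma \ref{lem:once}), separate the surviving paths into those whose edge classes are trivial repetitions (your group A, which reduces to the Wigner computation and produces the Catalan numbers) and those containing at least one genuine $\sim_N$-identification (your group B), bound group B via the three sparseness conditions, and finish with a variance estimate and Chebyshev. That much is the right program.

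The gap is that the bound on group B --- the only content of the theorem beyond Wigner's case --- is asserted rather than established, and your informal account of how conditions (1)--(3) are to be invoked would not, as written, yield a proof. The claim that "each identification costs at least one factor of $N$" via condition (3) is available only when the identified edge forces a fresh vertex; it does not by itself rule out profiles in which a genuine identification is inserted into an otherwise tree-like walk without creating one. To exclude those you must in addition exploit the Eulerian-circuit constraint that the edges of odd multiplicity form an even subgraph, so that nontrivial identifications inside size-two blocks come in structured groups rather than one at a time; conditions (1) and (2) then enter precisely in the remaining cases where condition (3) gives no gain. Your description of condition (2) as "collapsing two adjacent edges of the circuit into a single vertex" also misreads the definition: it caps at $o(N^2)$ the number of consecutive-step identifications $i\to j\to\ell$ with $(i,j)\sim_N(j,\ell)$, which share a vertex but do not merge anything. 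You flag the bookkeeping as the delicate part, and it is; since that bookkeeping is where the theorem actually lives, the proposal should be read as a plan rather than a proof.
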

This theorem is due to Schenker and Schulz-Baldes \cite{SchenkerSB} who proved weak convergence in expectation, for convergence in probability see \cite{Catalano}.

Catalano combines sparse dependence structures with generalized band structures as in \eqref{eq:RC1}. In fact, he proves Theorem \ref{th:Catalano} also for such matrix ensembles. For details we refer to \cite{Catalano}.

\section{Decaying Correlations}\label{ssec:scl-deccor}
In this section we discuss some matrix ensembles for which the random variables $X_{N}(i,j)$
have decaying correlations. We begin by what we call `diagonal' ensembles. By this we mean that
the random variables $X_{N}(i,j)$ and $X_{N}(i',j')$ are independent if the index pairs $(i,j)$ and
$(i',j')$ belong to different diagonals, i.\,e. if $i-j\not=i'-j'$ (for $i\leq j$ and $i'\leq j'$).

\begin{definition}
   Suppose $Y_{n}$ is a sequence of random variables and ${Y_{n}}^{(\ell)}, \ell\in\IN$ are independent
   copies of $Y_{n}$, then the matrix ensemble
   \begin{align}
      X_{N}(i,j)~=~Y_{i}^{(|i-j|)}\qquad\text{for } 1\leq i\leq j \leq N
   \end{align}
   is called the \emph{matrix ensemble with independent diagonals} generated by $Y_{n}$.
\end{definition}

Of course, if the random variables $Y_{n}$ themselves are independent then we obtain an
independent matrix ensemble. If, on the other hand, $Y_{n}=Y_{1}$ we get a matrix with
constant entries along each diagonal, which vary randomly from diagonal to diagonal.
Such a matrix is thus a \emph{random Toeplitz matrix}. Random Toeplitz matrices were
considered by Bryc, Dembo and Jiang in \cite{BrycDJ}. They prove:

\begin{theorem}\label{th:BDJ}
  Suppose that $X_{N}(i,j)$ is the random Toeplitz matrix ensemble associated with $Y_{n}=Y$
  with $\IE(Y)=0$, $\IE(Y^{2})=1$ and $\IE(Y^{K})<\infty$ for all $K$, then the eigenvalue distribution measures $ \sigma_{N}$
  of $\frac{1}{\sqrt{N}}\,X_{N}$ converge weakly almost surely to a nonrandom measure $\gamma$ which is independent of the
  distribution of\, $Y$ and has unbounded support.

  In particular, $\gamma$ is not the
  semicircle distribution.
\end{theorem}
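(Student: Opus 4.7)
The plan is to adapt the moment method from Section \ref{ssec:scl-wigner} to the Toeplitz setting. For a random Toeplitz matrix with $X_N(i,j)=Y_{|i-j|}$, where $Y_0,Y_1,\ldots$ are i.i.d.\ copies of $Y$, the $k$-th expected moment becomes
\begin{align*}
\IE\Big(\int x^{k}\,d\sigma_{N}(x)\Big)~=~\frac{1}{N^{1+k/2}}\sum_{i_{1},\ldots,i_{k}=1}^{N}\IE\Big(Y_{|i_{1}-i_{2}|}\cdot Y_{|i_{2}-i_{3}|}\cdots Y_{|i_{k}-i_{1}|}\Big).
\end{align*}
Independence of the $Y_{n}$ across different values of $n$ lets the expectation factor over distinct diagonal indices. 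Since $\IE(Y)=0$, a summand vanishes unless each distinct value in the multiset $\{|i_{1}-i_{2}|,\ldots,|i_{k}-i_{1}|\}$ appears at least twice; a free-parameter count analogous to Proposition \ref{prop:r-gross} shows that configurations where some distinct value appears strictly more than twice contribute $o(1)$ after division by $N^{1+k/2}$. Odd moments therefore vanish in the limit, and for $k=2m$ only pair partitions of the positions survive to leading order.

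Next I would change variables to the signed differences $d_{\ell}:=i_{\ell+1}-i_{\ell}$ (cyclically, with $i_{k+1}:=i_{1}$), so that the closed-path condition reads $\sum_{\ell=1}^{k}d_{\ell}=0$. Each surviving configuration is encoded by a pair partition $\pi=\{\{a_{1},b_{1}\},\ldots,\{a_{m},b_{m}\}\}$ of $\{1,\ldots,2m\}$ together with signs $\epsilon\in\{\pm 1\}^{m}$ giving $d_{b_{j}}=\epsilon_{j}\,d_{a_{j}}$. Rescaling $i_{1}=Nx_{0}$ and $d_{a_{j}}=Nx_{j}$, the admissibility constraints---the linear relation on the $x_{j}$ induced by $\sum_{\ell}d_{\ell}=0$, together with $1\leq i_{1}+d_{1}+\cdots+d_{\ell}\leq N$ at every step---turn the count into a Riemann sum for the volume of an explicit polytope $P_{\pi,\epsilon}\subset[0,1]^{m+1}$. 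One obtains
\begin{align*}
\lim_{N\to\infty}\IE\Big(\int x^{2m}\,d\sigma_{N}(x)\Big)~=~\beta_{2m}~:=~\sum_{\pi,\epsilon}\mathrm{Vol}(P_{\pi,\epsilon}),
\end{align*}
which depends only on $\IE(Y^{2})=1$, proving distribution-independence.

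Three properties of the limiting measure $\gamma$ remain to verify. Uniqueness of $\gamma$ follows from a Carleman-type bound: since $\mathrm{Vol}(P_{\pi,\epsilon})\leq 1$ and the number of sign-signed pair partitions is $(2m)!/m!$, one has $\beta_{2m}\leq(2m)!/m!$, comfortably in the determinate regime, so $\beta_{2m}$ determines $\gamma$. For the non-semicircle claim, direct enumeration of the three pair partitions of $\{1,2,3,4\}$ with their sign choices gives $\beta_{4}=8/3\neq 2=C_{2}$, so $\gamma\neq\sigma$. For unbounded support, I would exhibit a subclass of $(\pi,\epsilon)$ whose polytopes have volumes uniformly bounded below by $c/(2m)!$ and whose cardinality is at least $(2m-1)!!$; this yields $\beta_{2m}\gtrsim(2m-1)!!/(2m)!\cdot c$, which grows fast enough that $\beta_{2m}^{1/(2m)}\to\infty$, ruling out a bounded support.

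Finally, the upgrade from weak convergence in expectation to weak almost-sure convergence is by the standard second-moment argument
\begin{align*}
\mathrm{Var}\Big(\int x^{k}\,d\sigma_{N}(x)\Big)~=~O(N^{-2}),
\end{align*}
obtained by expanding the square as a sum over two cyclic index tuples: only cross-pairings between the two tuples contribute beyond the factorized leading term, and any such cross-pairing costs at least one power of $N$ in the free-parameter count. Borel--Cantelli then gives almost-sure convergence of every fixed moment, which together with the Carleman bound yields weak almost-sure convergence $\sigma_{N}\Rightarrow\gamma$. The main technical obstacle is the Riemann-sum passage in paragraph two: one must control the endpoint effects where partial sums exit $\{1,\ldots,N\}$, and, more delicately, handle the degenerate regime where the linear constraint on signed differences drops in rank (for example when many $\epsilon_{j}=+1$), so that $P_{\pi,\epsilon}$ is properly interpreted as a lower-dimensional volume.
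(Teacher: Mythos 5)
The paper does not prove this statement; it cites Bryc--Dembo--Jiang \cite{BrycDJ}, and your proposal follows exactly the moment-method route of that paper. Most of the structure is right: the change of variables to cyclic signed differences, the reduction to pair partitions of $\{1,\ldots,2m\}$, the Riemann-sum passage to polytope volumes $\mathrm{Vol}(P_{\pi,\epsilon})\subset[0,1]^{m+1}$, the resulting distribution-independence of $\gamma$, the explicit value $\beta_4 = 1+1+\tfrac23 = \tfrac83 \ne 2 = C_2$ refuting the semicircle, the Carleman bound, and the $\mathrm{Var}=O(N^{-2})$ plus Borel--Cantelli upgrade to almost-sure convergence. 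Two remarks are in order.

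First, a small but genuine mix-up about the sign configurations. The closed-path constraint $\sum_\ell d_\ell = 0$ becomes $\sum_j (1+\epsilon_j)\,d_{a_j}=0$ under the pairing $d_{b_j}=\epsilon_j d_{a_j}$. This constraint \emph{vanishes} (rank $0$) precisely when \emph{all} $\epsilon_j=-1$, which is therefore the full-dimensional, dominant case; any $\epsilon_j=+1$ makes the constraint nontrivial (rank $1$) and drops the polytope dimension, so such configurations contribute $O(1/N)$ after the $N^{-(1+m)}$ normalization. Your parenthetical ``for example when many $\epsilon_j=+1$'' attributes the rank drop to the wrong configurations and would lead you to discard the dominant terms. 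The correct counting gives exactly $(2m-1)!!$ contributing pair partitions, each with all $\epsilon_j=-1$.

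Second, and this is the real gap, the unbounded-support argument does not work as written. You claim volumes bounded below by $c/(2m)!$ over at least $(2m-1)!!$ pairings, hence
\begin{equation*}
\beta_{2m}~\gtrsim~\frac{(2m-1)!!}{(2m)!}\,c~=~\frac{c}{2^m\,m!}~\longrightarrow~0\,,
\end{equation*}
so $\beta_{2m}^{1/(2m)}\to 0$, which is exactly the \emph{wrong} conclusion: it is consistent with (indeed, implied by) a measure concentrated at the origin, and certainly does not rule out bounded support. To rule out support in $[-R,R]$ for every $R$, one needs $\beta_{2m}^{1/(2m)}\to\infty$, i.e.\ $\beta_{2m}$ must grow super-exponentially in $m$. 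Since $(2m-1)!!\sim \sqrt{2}\,(2m/e)^m$, obtaining $\beta_{2m}\geq c_1\,c_2^{\,m}\,m!$ (which does give $\beta_{2m}^{1/(2m)}\gtrsim\sqrt{m}\to\infty$) requires that the polytope volumes be bounded below by a quantity of order $c^m$ for some $c>0$ on a positive fraction of the $(2m-1)!!$ pairings, not by $c/(2m)!$. Producing such a lower bound is precisely the nontrivial combinatorial step in \cite{BrycDJ}; the scaling in your sketch is off by a super-exponential factor and needs to be replaced by an argument of that type.
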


Friesen and L\"{o}we \cite{FriesenLoewe1} consider matrix ensembles with independent diagonals generated by a
sequence $Y_{n}$ of weakly correlated random variables. In their case the limit distribution
is the semicircle law again.

\begin{theorem}\label{th:FL1}
   Let $Y_{n}$ be a stationary sequence of random variables with $\IE(Y_{1})=0$, $\IE({Y_{1}}^{2})=1$
   and $\IE({Y_{1}}^{K}) <\infty$ for all $K$. Assume
   \begin{align}
        \sum_{\ell=1}^{\infty}\,\big|\IE(Y_{1}Y_{1+\ell})\big|~<~\infty\,.
   \end{align}
   Let $X_{N}$ be the matrix ensemble with independent diagonals generated by $Y_{n}$.
   Then the eigenvalue distribution measures $ \sigma_{N}$
  of $\frac{1}{\sqrt{N}}\,X_{N}$ converge to the semicircle distribution $\IP$-almost surely.
\end{theorem}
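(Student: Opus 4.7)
The plan is to adapt the moment-method proof of Wigner's semicircle law (Theorem \ref{th:Wigner}) to this correlated setting, keying on two structural features. First, the representation $X_{N}(i,j) = Y^{(|i-j|)}_{\min(i,j)}$ makes the different diagonals stochastically independent, so the expectation of a product $\prod_{a=1}^{k}X_{N}(i_{a},i_{a+1})$ factorises as a product of joint moments of the single stationary sequence $(Y_{n})$, one factor per diagonal visited by the walk. Second, within a fixed diagonal the summability $\sum_{\ell\geq 1}|\IE(Y_{1}Y_{1+\ell})|<\infty$ replaces the rigid centredness argument that powered Lemma \ref{lem:once}.

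Starting from
\[
\IE\Big(\int x^{k}\,d\sigma_{N}(x)\Big) \;=\; \frac{1}{N^{1+k/2}} \sum_{i_{1},\ldots,i_{k}=1}^{N}\IE\Big(\prod_{a=1}^{k} X_{N}(i_{a},i_{a+1})\Big),
\]
I group index tuples by their associated multigraph $\mathcal{G}$ and by its number $r$ of distinct vertices, just as in Section \ref{ssec:scl-wigner}. Tuples with $r\leq k/2$ contribute $O(N^{-1})$ since there are only $O(N^{r})$ of them and the corresponding expectations are bounded uniformly in $N$ by H\"older (using that all moments of $Y_{1}$ exist). The leading term is to come from tuples with $r = 1+k/2$ whose simple edges are each doubled. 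Here lies a first subtlety specific to the correlated case: it is \emph{not} automatic that the diagonal factorisation produces the value $1$, since two doubled edges lying on the same diagonal would produce an extra fourth-order moment $\IE(Y_{m}^{2}Y_{m'}^{2})$, uncontrolled by the two-point covariance hypothesis. However, two distinct edges $\{u,v\}$ and $\{u',v'\}$ share a diagonal only when $|u-v|=|u'-v'|$, which imposes one linear relation on the vertex labels and so reduces the enumeration by a factor of $N$. Consequently, up to a negligible correction, every doubled edge of a tree walk sits on its own diagonal, each diagonal contributes $\IE(Y_{m}^{2})=1$, and the ordered-tree count recalled in Remark \ref{rem:tree} yields the standard $(1+o(1))\,C_{k/2}\,N^{1+k/2}$ for even $k$ and zero for odd $k$.

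For tuples with $r\geq 2+k/2$, Proposition \ref{prop:r-gross} still guarantees a distinct edge $e_{0}$ of multiplicity one, which is now to be interpreted inside the diagonal factorisation. If $e_{0}$ is the unique edge on its diagonal $d_{0}$, the corresponding diagonal factor is $\IE(Y_{m_{0}})=0$ and the tuple drops out exactly as in Lemma \ref{lem:once}. If $e_{0}$ shares its diagonal with further edges $e_{1},\ldots,e_{p}$ of multiplicities $\alpha_{1},\ldots,\alpha_{p}$, stationarity gives the diagonal factor $\IE\bigl(Y_{0}\prod_{j\geq 1}Y_{m_{j}-m_{0}}^{\alpha_{j}}\bigr)$, and the crucial point is that this configuration is atypical: each extra edge forced onto diagonal $d_{0}$ imposes a coincidence on the vertex labels and therefore costs another factor of $N$ from the naive enumeration. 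In the bellwether case $p=1$, $\alpha_{1}=1$, free summation over the one remaining positional parameter produces a factor bounded by $\sum_{\ell}|\IE(Y_{0}Y_{\ell})|$, which is finite by hypothesis; the single $N$ of gain is exactly what is needed to push the total below $N^{1+k/2}$. Higher-multiplicity configurations are controlled by combining this two-point summability (for the factor containing $Y_{0}$) with crude all-moment bounds on the remaining factors, the additional coincidences each buying another $1/N$.

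Putting the three estimates together gives $\IE\int x^{k}\,d\sigma_{N}(x) \to \int x^{k}\,d\sigma(x)$ for every $k$, and hence weak convergence in expectation by the moment characterisation. To upgrade to $\IP$-almost sure convergence I would bound $\IV\bigl(\frac{1}{N}\mathrm{tr}(M_{N}^{k})\bigr) = O(N^{-2})$ for each fixed $k$; this variance expands as a sum over \emph{pairs} of closed walks, and the same graph-and-diagonal bookkeeping applied to the union of the two walks isolates a leading ``disjoint double-tree pair'' contribution that cancels the square of the mean, while every remaining configuration is $o(1)$ times $N^{-2}$ by the coincidence counting above. Borel--Cantelli, together with a dense countable family of bounded continuous test functions, then yields the almost-sure statement. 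The principal obstacle is the third paragraph: although single-multiplicity edges on an uncrowded diagonal are killed by centredness, when many edges pile onto the same diagonal one must delicately balance the coincidence-driven volume loss against potentially large higher-order joint moments of $(Y_{n})$, and the hypothesis $\sum_{\ell}|\IE(Y_{1}Y_{1+\ell})|<\infty$ must be invoked in exactly the right place to close the argument.
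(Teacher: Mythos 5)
The survey states Theorem~\ref{th:FL1} as a result of Friesen and L\"owe and provides no proof, so there is no internal argument to compare against; I assess your proposal on its own. The first-moment computation is the natural adaptation of Wigner's argument to the independent-diagonal setting: per-diagonal factorisation of the expectation, the observation that a single edge sitting alone on its diagonal makes the factor $\IE(Y_{m})=0$, and --- correctly identified --- the role of $\sum_{\ell}|\IE(Y_{1}Y_{1+\ell})|<\infty$ in controlling configurations where single edges are forced onto a common diagonal. The parallelogram $4$-cycle ($i_{1}\to i_{2}\to i_{3}\to i_{4}\to i_{1}$ with $i_{2}-i_{1}=i_{3}-i_{4}$) is the archetype here and is exactly what distinguishes the present hypothesis from the Toeplitz case of Theorem~\ref{th:BDJ}. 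This part of the sketch is sound in outline, even though, as you admit, the bookkeeping for several edges piling onto one diagonal is not carried out.

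The almost-sure upgrade, however, contains a step that fails. The claim $\IV\bigl(\tfrac{1}{N}\mathrm{tr}\,M_{N}^{k}\bigr)=O(N^{-2})$ is not a consequence of the hypotheses: in the variance expansion the two closed walks can share a diagonal without sharing a vertex, and the shared-diagonal factor then produces fourth-order quantities such as $\mathrm{Cov}(Y_{m}^{2},Y_{m'}^{2})$, which two-point summability does not control. Concretely, let $Y_{n}=\epsilon_{n}Z$ with $(\epsilon_{n})$ i.i.d.\ Rademacher independent of a fixed $Z$ with $\IE Z^{2}=1$, $\IE Z^{4}>1$. This is a stationary sequence with all moments, $\IE(Y_{1})=0$, $\IE(Y_{1}^{2})=1$, and $\IE(Y_{1}Y_{1+\ell})=0$ for every $\ell\geq 1$, so the hypothesis holds trivially; yet $\mathrm{Cov}(Y_{m}^{2},Y_{m'}^{2})=\IE Z^{4}-1>0$ for all $m,m'$, and already for $k=2$ a direct computation gives $\IV\bigl(\tfrac{1}{N}\mathrm{tr}\,M_{N}^{2}\bigr)=\Theta(N^{-1})$. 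Hence Chebyshev on the second moment does not produce a summable tail, and your Borel--Cantelli step does not close. The standard fix is to bound instead the fourth centred moment $\IE\bigl[(\tfrac{1}{N}\mathrm{tr}\,M_{N}^{k}-\IE\,\tfrac{1}{N}\mathrm{tr}\,M_{N}^{k})^{4}\bigr]=O(N^{-2})$ --- in the example above this is a routine estimate for a weighted sum of i.i.d.\ variables --- and then apply Borel--Cantelli; a subsequence argument combined with the $O(N^{-1})$ variance bound is another option.
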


The next step away from independence is to start with a sequence $\{Z_{n}\}_{n\in\IN}$ of random variables
and to distribute them in some prescribed way on the matrix entries $X_{N}(i,j)$. It turns out (see \cite{LoeweS}) that the validness of the semicircle law depends on the way we
fill the matrix with the random number $Z_{n}$.

One main example of a filling is the `diagonal' one, resulting in:
\begin{align}\label{eq:diagonal}
   X_{N}~=~
   \begin{pmatrix}
 Z_{1} & Z_{N+1} & Z_{2N}&\cdots&\cdots&\cdots & Z_{\frac{N(N+1)}{2}}\\
 Z_{N+1} & Z_{2} & Z_{N+2}&\cdots & \cdots&\cdots&\cdots\\
  Z_{2N} & Z_{N+2} & Z_{3}& Z_{N+3} & \cdots&\cdots&\cdots\\
  \cdots & Z_{2N+1} & Z_{N+3}&Z_{4}& \cdots&\cdots&\cdots\\
 \vdots    & \vdots    &       & & \vdots & &  \\
 Z_{\frac{N(N+1)}{2}}& \cdots & \cdots&\cdots &\cdots& Z_{2N-1}& Z_{N}
\end{pmatrix}
\end{align}

L\"{o}we nd Schubert define abstractly:
\begin{definition}
   A \emph{filling} is a sequence of bijective mappings
   \begin{align}
      \varphi_{N}: \{1,2,\ldots,\frac{N(N+1)}{2}\}~\longrightarrow~\{(i,j)\in\{1,2,\ldots,N\}^{2}\mid i\leq j\}
   \end{align}
   If $Z_{n}$ is a stochastic process and $\{\varphi_{N}\}$ is a filling we say that
   \begin{align}
      X_{N}(i,j)~=~Z_{{\varphi_{N}}^{-1}(i,j)}\qquad \text{for } 1\leq i\leq j\leq N
   \end{align}
   is the matrix ensemble corresponding to $\{Z_{n}\}$ with filling $\{\varphi_{N}\}$
\end{definition}
Another example of a filling, besides the `diagonal' one, is the (symmetric) `row by row' filling:
\begin{align}
   X_{N}~=~
   \begin{pmatrix}
 Z_{1} & Z_{2} & Z_{3}&\cdots&\cdots&\cdots & Z_{N}\\
 Z_{2} & Z_{N+1} & Z_{N+2}&Z_{N+3} & \cdots&\cdots&Z_{2N-1}\\
  Z_{3} & Z_{N+2} & Z_{2N}& Z_{2N+1} & \cdots&\cdots&Z_{3N-3}\\
  \cdots & Z_{N+3} & Z_{2N+1}&Z_{3N-1}& \cdots&\cdots&\cdots\\
 \vdots    & \vdots    &       & & \vdots & &  \\
 Z_{N}& Z_{2N-1} & \cdots&\cdots &\cdots& \cdots& Z_{\frac{N(N+1)}{2}}
\end{pmatrix}
\end{align}

Among other results, L\"{o}we and Schubert prove:
\begin{theorem}\label{th:LS}
   Suppose $Z_{n}$ is an ergodic Markov chain with finite state space
$S\subset\IR$ started in its stationary measure and assume
\begin{align}
   &\IE(Z_{n_{1}}Z_{n_{2}}\ldots Z_{n_{k}})~=~0\label{eq:LS1}\\
   &\IE({Z_{n}}^{2})~=~1\label{eq:LS2}
\end{align}
for any $n$ and any $n_{1},\ldots,n_{k}$ with $k$ odd.

If $X_{N}$ is the matrix ensemble corresponding to $\{Z_{n}\}$ with diagonal filling
then the eigenvalue distribution measures $ \sigma_{N}$
  of $\frac{1}{\sqrt{N}}\,X_{N}$ converge to the semicircle distribution $\IP$-almost surely.
\end{theorem}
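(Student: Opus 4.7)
The plan is to establish convergence of moments $\IE\int x^k\,d\sigma_N\to\int x^k\,d\sigma$ and then upgrade it to $\IP$-almost-sure weak convergence by a variance plus Borel--Cantelli argument, invoking the moment uniqueness criterion around \eqref{eq:mom}. Following the template of Wigner's proof I start from
\begin{equation*}
\IE\int x^k\,d\sigma_N(x) \;=\; \frac{1}{N^{1+k/2}}\sum_{i_1,\ldots,i_k=1}^{N}\IE\Big(X_N(i_1,i_2)X_N(i_2,i_3)\cdots X_N(i_k,i_1)\Big),
\end{equation*}
and use the diagonal filling $\varphi_N^{-1}$ to rewrite every factor as $X_N(i_l,i_{l+1})=Z_{m_l}$ with $m_l = \varphi_N^{-1}(i_l,i_{l+1})$. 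Each closed path $(i_1,\ldots,i_k)$ thereby produces a product of finitely many, dependent Markov chain variables whose index pattern is read off the diagonal labelling.

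For odd $k$ the assumption \eqref{eq:LS1} finishes the argument immediately, since every summand is then the expectation of a product of an odd number of chain variables, hence zero. For even $k$ I classify tuples by the multigraph $\mathcal{G}$ from Definition \ref{def:graph}: let $r=|\{i_1,\ldots,i_k\}|$ and let $a_1,\ldots,a_s$ be the edge multiplicities, so $\sum_j a_j=k$ and $s\geq r-1$. Tuples with $r<1+k/2$ contribute $o(1)$ purely by cardinality, since the finiteness of the state space $S$ bounds the expectation uniformly and there are only $\mathcal{O}(N^{r})$ such tuples. Tuples with $r=1+k/2$ and every $a_j=2$ correspond, via Remark \ref{rem:tree}, to Eulerian tours on ordered trees with $1+k/2$ vertices and are enumerated by $C_{k/2}\cdot N^{1+k/2}(1+o(1))$; these will furnish the leading term.

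The analytic heart of the proof is the treatment of the expectations on the tree-doubled paths together with the disposal of the tuples with $r>1+k/2$. For both I exploit that an ergodic finite-state Markov chain in its stationary regime enjoys exponential decay of correlations: there exist constants $C,c>0$ such that whenever the index set of a product of chain variables splits into two blocks separated by a gap of size $\Delta$ on $\IN$, the expectation factorizes up to an error of order $e^{-c\Delta}$, uniformly in the chain values since $|Z_n|$ is bounded. In the tree-doubled regime the product equals $\prod_{e}Z_{m_e}^2$ with $\IE(Z_{m_e}^2)=1$, and for all but $o(N^{1+k/2})$ of the relevant tuples the $k/2$ labels $m_e$ are pairwise separated on $\IN$ by at least $\log^2 N$, so that $\IE\bigl(\prod_e Z_{m_e}^2\bigr)=1+o(1)$ uniformly and the Catalan count survives in the limit as $C_{k/2}$. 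For $r>1+k/2$, Proposition \ref{prop:r-gross} produces an edge of multiplicity $1$; its $Z$-label $m^\ast$ is, outside a negligible set of tuples, separated by a large gap from the other labels, and then exponential decay together with $\IE(Z_{m^\ast})=0$ (a special case of \eqref{eq:LS1}) forces the corresponding expectation to be $o(1)$ uniformly. The main obstacle will be the quantitative bookkeeping showing that the exceptional tuples---in which two distinct edges of the multigraph have $Z$-labels lying within a logarithmic window on the chain---really do contribute an $o(N^{1+k/2})$ total, even after summing over all abstract ordered trees and all vertex assignments. Here I expect to use that two edges with close labels must either lie on the same diagonal at nearby positions or on adjacent diagonals near their endpoints, both of which reduce the vertex freedom by a factor of at least $N/\log^2 N$, replacing the clean independence-based cancellations available in Wigner's setting.

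To upgrade from convergence in expectation to $\IP$-almost sure convergence I would prove $\mathrm{Var}\bigl(\int x^k\,d\sigma_N\bigr)=\mathcal{O}(N^{-2})$ by expanding the variance as a double sum over pairs of paths and repeating the multigraph analysis on the combined graph of the two paths. The same combination of \eqref{eq:LS1} and exponential decay isolates the ``connected'' pair configurations as the only nontrivial contributors, and these turn out to be rare enough to make the variances summable in $N$. Borel--Cantelli together with the moment uniqueness criterion then delivers weak $\IP$-almost-sure convergence of $\sigma_N$ to $\sigma$.
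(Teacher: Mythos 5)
Your overall strategy --- Wigner's moment expansion, multigraph classification, exponential mixing of the finite-state chain to replace independence, variance bound plus Borel--Cantelli --- is the right toolkit, and your treatment of odd $k$, of the tree-doubled leading term, and of the tuples with fewer than $1+k/2$ distinct indices is sound. However, the disposal of tuples with $r := |\{i_1,\ldots,i_k\}| > 1 + k/2$ has a quantitative gap that your ``main obstacle'' paragraph does not cure. You pick one single edge $e^\ast$ (from Proposition \ref{prop:r-gross}) and claim that the set of index tuples on which its label $m^\ast$ is not isolated (i.e.\ within $\Delta = \log^2 N$ of some other label) is negligible. But the close-pair constraint costs only one factor of $\Delta/N$ in the vertex count, so this bad set has size of order $N^{r-1}\Delta$; already for $r = 2 + k/2$ this is of order $N^{1+k/2}\log^2 N$, which is \emph{not} $o(N^{1+k/2})$, and on the bad set the expectation is merely bounded, not small. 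So these tuples are not actually under control by the argument as you have written it.

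The repair requires exploiting that \emph{many} single edges are present, not just one. This is precisely what Proposition \ref{prop:r-gross-genau} of the paper (proved there for the Curie--Weiss case, but purely graph-theoretic) delivers: $r = 1+k/2+s$ with $s \geq 1$ forces at least $2s+2$ edges of multiplicity one. The mixing argument kills the expectation unless \emph{every} one of their labels is non-isolated, which forces at least $s+1$ independent close-pair constraints (each pairing rescues at most two single edges at once). The bad count then drops to $O\big(N^{r-(s+1)}\Delta^{s+1}\big) = O\big(N^{k/2}\Delta^{s+1}\big) = o(N^{1+k/2})$, which is what is needed. Without a statement of this form the case $r>1+k/2$ is incomplete. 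Two smaller points: ``ergodic'' must be read as irreducible \emph{and aperiodic}, otherwise the stationary chain need not mix and the exponential decay of correlations you invoke can fail (e.g.\ a deterministic two-cycle); and in the variance computation disconnected pair configurations do not automatically factorize here, since both closed paths draw from a single chain whose $Z$-labels can still cluster even when the vertex sets are disjoint, so the isolated-versus-clustered bookkeeping must be redone on the combined multigraph of the two paths.
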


The assumptions we made in Theorem \ref{th:LS} both on $Z_{n}$ and on the filling are but an
example of the abstract assumptions given in \cite{LoeweS}. These authors also show:

\begin{theorem}
   There is an ergodic Markov chain $\{Z_{n}\}$ with finite state space
$S\subset\IR$ started in its stationary measure satisfying \eqref{eq:LS1} and \eqref{eq:LS2}
such that for the matrix ensemble $X_{N}$ corresponding to $\{Z_{n}\}$ with row by row filling
the eigenvalue distribution measures $ \sigma_{N}$
  of $\frac{1}{\sqrt{N}}\,X_{N}$ do \emph{not} converge to the semicircle distribution.
\end{theorem}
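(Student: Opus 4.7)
My plan is to exhibit an ergodic, finite-state Markov chain $\{Z_n\}$ satisfying \eqref{eq:LS1}--\eqref{eq:LS2} for which the row-by-row filled ensemble fails to produce a semicircle limit, by showing via the moment method that some even moment does not approach the corresponding Catalan number. A natural candidate chain is the symmetric two-state chain on $\{-1,+1\}$ with flip probability $\epsilon\in(0,1/2)$, started in its uniform stationary distribution. Then $\IE(Z_n)=0$, $\IE(Z_n^2)=1$, the sign-flip symmetry of the chain forces $\IE(Z_{n_1}\cdots Z_{n_k})=0$ for odd $k$, and $\IE(Z_nZ_m)=\rho^{|n-m|}$ with $\rho:=1-2\epsilon\in(0,1)$---it is this non-trivial correlation that will obstruct the Wigner argument.

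I would then carry out the moment computation exactly as in Section~\ref{ssec:scl-wigner}, writing
\[
m_k(N) := \IE\Big(\int x^k\,d\sigma_N\Big) = \frac{1}{N^{1+k/2}}\sum_{i_1,\ldots,i_k=1}^{N}\IE\big(X_N(i_1,i_2)\cdots X_N(i_k,i_1)\big).
\]
Under the row-by-row filling the Z-index of $X_N(i,j)$ (for $i\le j$) is $\varphi_N^{-1}(i,j)=s_{i-1}+(j-i+1)$ with $s_r=\sum_{t=1}^r(N-t+1)$, so entries in the same matrix row correspond to consecutive Markov positions (correlation $\rho^{|j-j'|}$), while entries in distinct matrix rows are at Markov distance $\Omega(N)$ and hence asymptotically decorrelated. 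Tree-pair configurations still contribute exactly $C_{k/2}$ because $\IE(Z^2)=1$ for each doubled edge; a discrepancy with the semicircle law can therefore come only from non-tree closed walks whose Z-indices cluster tightly enough that the Markov correlations produce non-negligible joint expectations. I would hunt for such walks among closed paths that stay in a single matrix row or that bridge the boundary between two rows, where the last Z-index of one row and the first Z-index of the next differ by one.

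The key and hardest step is to produce such a family of exceptional closed walks and to show both that it contains order $N^{1+k/2}$ walks and that summing the correlation weights $\rho^{\text{chain distance}}$ yields a strictly positive correction in the limit. Because $\rho>0$ there are no sign cancellations among these weights, which makes the positivity realistic; if the two-state chain turns out to be too rapidly mixing for small $k$, one can enlarge the state space (for instance, embedding an internal periodic structure) while still preserving \eqref{eq:LS1}--\eqref{eq:LS2}. Once $\lim_N m_{k^{\ast}}(N)>C_{k^{\ast}/2}$ has been established for some even $k^{\ast}$, the semicircle law is excluded: $\sigma$ is compactly supported and therefore uniquely determined by its moment sequence, so any weak limit of $\sigma_N$ that mismatches even a single moment cannot equal $\sigma$.
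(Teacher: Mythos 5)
The paper itself offers no proof of this theorem; it is quoted from L\"owe and Schubert \cite{LoeweS}, so there is no argument in the paper to compare against. Your overall strategy --- fix a concrete ergodic $\pm 1$-valued chain so that \eqref{eq:LS1} and \eqref{eq:LS2} hold, then exhibit an even moment of $\sigma_N$ that misses the corresponding Catalan number --- is sound, and the symmetric two-state chain with flip probability $\epsilon\in(0,1/2)$ is a workable candidate. But the proposal stops at what you yourself call ``the key and hardest step'': no family of walks producing a nonvanishing moment correction is exhibited, and without it nothing is actually proved.

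Beyond being incomplete, the intuition you offer about which walks matter points in a misleading direction. A closed walk all of whose edges lie in a single matrix row is an Eulerian circuit on a star centered at the row index; if any star edge occurs an odd number of times some leaf has odd degree and no Eulerian circuit exists, so all edges must be doubled and one is back inside the tree count. Hence there is no exceptional family ``staying in a single row.'' What actually produces the correction is different: the $k$ edges of the walk need not all have nearby Z-indices, they need only decompose into well-separated \emph{pairs} of edges whose two Z-indices lie at bounded distance. For the symmetric two-state chain one has $\IE(Z_{n_1}Z_{n_2}Z_{n_3}Z_{n_4})=\rho^{(n_2-n_1)+(n_4-n_3)}>0$ for $n_1<n_2<n_3<n_4$, and this already bites at $k=4$: consider $4$-cycles $(i_1,i_2,i_3,i_4)$ with $i_2,i_4<\min(i_1,i_3)$ and $|i_1-i_3|=d$ fixed. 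Then $\{i_1,i_2\}$ and $\{i_2,i_3\}$ both lie in row $i_2$ at horizontal distance $d$, while $\{i_3,i_4\}$ and $\{i_4,i_1\}$ both lie in row $i_4$ at the same distance $d$, and the two rows are a Markov distance of order $N$ apart. There are of order $N^{3}$ such tuples for each $d$, each contributing approximately $\rho^{2d}$, so after normalizing by $N^{3}$ and summing over $d\ge 1$ the fourth moment exceeds $C_{2}=2$ by a strictly positive amount. Working out this count carefully, and checking that no other configurations contribute at the same order, is precisely the missing content of the proposal.
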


Consequently, the convergence behavior of $\sigma_{N}$ depends not only on the process $\{Z_{n}\}$
but also on the way we fill the matrices with this process. For details we refer to \cite{LoeweS}.

\section{Curie-Weiss Ensembles}\label{ssec:scl-CW}
In section \ref{ssec:scl-deccor} we discussed matrix ensembles $X_N (i,j)$ which are generated through stochastic processes with decaying correlations. Thus, for fixed $N$, the correlations
$\IE\,\Big(X_N (i,j) X_N (k, \ell) \Big)$ become small for $(i,j)$ and $(k, \ell)$ far apart, in some appropriate sense.

In the present section we investigate matrix ensembles $X_N (i,j)$ with $\IE\big(X_{N}(i,j)\big)=0$ for which the correlations $\IE\,\Big(X_N (i,j) X_N (k, \ell) \Big)$ do \emph{not} depend on $i, j, k, \ell$ for most (or  at least many) choices of $i, j, k$ and $\ell$, but the correlations depend on $N$ instead.

More precisely, we will have that for given $(i,j)$
\begin{equation*}\label{Teq:corr_N}
\IE ~  \Big(X_N (i,j) X_N (k, \ell) \Big) \sim C_N \geq 0
\end{equation*}
for $(k, \ell) \in B_N$ with $|B_N| \sim N$ or even $|B_N| \sim N^2$, and, as a rule,  $ C_N \rightarrow 0$. However, in Theorem \ref{th:KK} we will encounter an example
for which $C_{N}$ does not decay.

The main example we discuss comes from statistical physics, more precisely from the Curie-Weiss model.

\begin{definition}\label{def:CWM}
Curie-Weiss random variables $\xi_1, \ldots, \xi_M$ take values in $\{-1,1\}^M$ with probability
\begin{equation}
\IP_{\beta}^M (\xi_1 = x_1, \ldots, \xi_M = x_M) = Z^{-1} e^{\frac{\beta}{2M} (\sum_{i=1}^M x_i)^2}
\end{equation}

where $Z=Z_{\beta,M}$ is a normalization constant  (to make $\IP_{\beta}^M$ a  probability measure) and $\beta \geq 0$ is a parameter which is interpreted in physics as `inverse temperature', $\beta = \frac{1}{T}$.
\end{definition}
If $\beta = 0$ ~  ($T = \infty)$ the random variables $\xi_i$ are independent while for $\beta > 0$ there is a positive correlation between the $\xi_i$, so the $\xi_i$ tend to have the same value $+1$ or $-1$. This tendency is growing as $\beta \rightarrow \infty$. The Curie-Weiss model is used in physics as an easy model to describe magnetism. The $\xi_i$ represent small magnets (`spins') which can be directed upwards (`$\xi_i = 1$') or downwards (`$\xi_i = -1$'). At low temperature (high $\beta$) such systems tend to be aligned, i.\,e. a majority of the spins have the same direction (either upwards or downwards). For high temperature they behave almost like independent spins . These different types of  behavior are described in the following theorem.

\begin{theorem}
\label{th:CWLLN}
Suppose $\xi_1, \ldots , \xi_M$ are $\IP_{\beta}^M$-distributed Curie-Weiss random variables. Then the mean $\frac{1}{M} \sum_{i=1}^M \xi_i$ converges in distribution, namely

\begin{equation}\label{eq:CWLLN}
\frac{1}{M}\, \sum_{i=1}^M\; \xi_i ~~\overset{\mathcal{D}}{\Longrightarrow}~~
\begin{cases}
~\delta_0 & \text{~ if ~} \beta \leq 1\\
~\frac{1}{2} (\delta_{-m (\beta)} + \delta_{m (\beta)}) & \text{~  if ~} \beta > 1
\end{cases}
\end{equation}

where $m = m (\beta)$ is the (unique) strictly positive solution of
\begin{equation}\label{eq:defm}
\tanh (\beta m) = m
\end{equation}
\end{theorem}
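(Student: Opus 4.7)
The plan is to linearize the quadratic exponent in the Boltzmann weight via a Hubbard--Stratonovich transformation and reduce the statement to a saddle-point (Laplace) analysis of a one-dimensional integral. Writing $S_{M}=\sum_{i=1}^{M}\xi_{i}$, the Gaussian identity
\begin{equation*}
   e^{\beta S_{M}^{2}/(2M)}~=~\sqrt{\frac{\beta M}{2\pi}}\int_{\IR}\exp\!\Bigl(-\frac{\beta M t^{2}}{2}+\beta t\,S_{M}\Bigr)\,dt
\end{equation*}
decouples the spins: conditional on $t$, the $\xi_{i}$ are i.i.d.\ under the tilted product law $\IP_{t}(\xi_{i}=\pm1)=e^{\pm\beta t}/(2\cosh\beta t)$. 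Inserting this representation into both numerator and denominator of $\IE[f(S_{M}/M)]$, the Gaussian prefactors cancel and one arrives at
\begin{equation*}
   \IE\bigl[f(S_{M}/M)\bigr]~=~\frac{\int_{\IR} e^{M g(t)}\,\IE_{t}[f(S_{M}/M)]\,dt}{\int_{\IR} e^{M g(t)}\,dt}\,,\qquad g(t):=-\frac{\beta t^{2}}{2}+\log(2\cosh\beta t),
\end{equation*}
where $\IE_{t}$ denotes expectation under the tilted law. Under $\IP_{t}$ the $\xi_{i}$ are i.i.d.\ with mean $\tanh(\beta t)$, so the weak law of large numbers combined with a Hoeffding bound yields $\IE_{t}[f(S_{M}/M)]\to f(\tanh\beta t)$ uniformly on compact $t$-sets, for every $f\in C_{b}(\IR)$.

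I would next analyze the critical points of $g$. Since $g'(t)=\beta(\tanh\beta t-t)$ and $g(t)\to-\infty$ as $|t|\to\infty$, stationarity is equivalent to the fixed-point equation $t=\tanh(\beta t)$, and $g''(0)=\beta(\beta-1)$. For $\beta<1$ the origin is the unique, hence global, maximum of $g$, so Laplace's method concentrates the integral there and gives $\IE[f(S_{M}/M)]\to f(\tanh 0)=f(0)$, i.e.\ $S_{M}/M\convd\delta_{0}$. For $\beta>1$ the origin becomes a local minimum and two symmetric global maxima appear at $t=\pm m(\beta)$, with $m(\beta)>0$ the strictly positive root of $\tanh(\beta m)=m$; by symmetry each contributes equally to Laplace's method, producing
\begin{equation*}
   \IE[f(S_{M}/M)]\longrightarrow\tfrac12\bigl[f(\tanh\beta m(\beta))+f(-\tanh\beta m(\beta))\bigr]=\tfrac12\bigl[f(m(\beta))+f(-m(\beta))\bigr],
\end{equation*}
using $\tanh(\beta m(\beta))=m(\beta)$ in the last equality.

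The main obstacle lies in the critical case $\beta=1$, where $g''(0)=0$ and the standard quadratic Laplace expansion degenerates. A direct computation gives $g^{(4)}(0)=-2\beta^{4}<0$, so $t=0$ remains a strict maximum; the natural fluctuation scale drops from $M^{-1/2}$ to $M^{-1/4}$, but Laplace's method adapted to a quartic maximum still concentrates the integral at $t=0$, yielding the limit $f(0)$ and hence $S_{M}/M\convd\delta_{0}$ at $\beta=1$ as well. A secondary technical point is justifying the interchange of Laplace asymptotics with the $t$-dependent factor $\IE_{t}[f(S_{M}/M)]$; this follows from the uniformity of the tilted law of large numbers on compact sets, the Gaussian decay of $e^{Mg(t)}$ for large $|t|$, and dominated convergence.
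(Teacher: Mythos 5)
The paper does not actually supply its own proof of Theorem~\ref{th:CWLLN}; it merely refers the reader to the references \cite{Ellis} and \cite{MoBu}. However, the route you take --- Hubbard--Stratonovich linearization of the quadratic exponent, reduction to a tilted i.i.d.\ product measure, followed by Laplace's method on the one-dimensional integral --- is precisely the toolkit that the paper itself deploys a page later for the Curie--Weiss correlation estimates (Proposition~\ref{prop:dFCW} is the same de Finetti-type representation you derive, with the parameter $t$ reparametrized via $t\mapsto\tanh\beta t$ so that $P_t^{(1)}(1)=(1+t)/2$, and Proposition~\ref{prop:Corr} is obtained by exactly the same Laplace analysis of $e^{-\frac{M}{2}F_\beta(t)}$, whose exponent is an affine transform of your $g$). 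Your argument is correct in all essentials: the Gaussian identity, the tilted law $\IP_t(\xi_i=\pm1)=e^{\pm\beta t}/(2\cosh\beta t)$ with mean $\tanh\beta t$, the stationarity equation $g'(t)=\beta(\tanh\beta t - t)=0$, the classification of maxima via $g''(0)=\beta(\beta-1)$ and $g^{(4)}(0)=-2\beta^4$, the observation $\tanh(\beta m(\beta))=m(\beta)$ which converts the saddle value into the claimed limit, and the uniform (Hoeffding-type) law of large numbers under $\IP_t$ to pass the limit inside. One point you state but could sharpen: for $\beta\le1$ the inequality $|\tanh\beta t|\le\beta|t|<|t|$ for $t\neq0$ shows $g'$ has the sign of $-t$, so $t=0$ is in fact the \emph{unique global} maximum (this handles $\beta=1$ without needing the quartic expansion for uniqueness, only for the rate); for $\beta>1$ the analysis of $t\mapsto\tanh\beta t-t$ (positive near $0^+$, eventually negative, derivative $\beta\,\mathrm{sech}^2\beta t-1$ decreasing) yields a unique positive zero $m(\beta)$ and, by evenness of $g$, exactly two symmetric global maxima. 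With those uniqueness observations made explicit, the proposal is a complete and correct proof, in the same spirit as the paper's treatment of the Curie--Weiss ensemble.
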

Above we used $\overset{\mathcal{D}}{\Longrightarrow}$ to indicate convergence in distribution: Random variables $\zeta_i$ \emph{converge in distribution} to a measure $\mu$ if the distributions of $\zeta_i$ converge weakly to $\mu$. Also, $\delta_x$ denotes the Dirac measure (see \eqref{eq:delta}).

For a proof of the above theorem see e.\,g. \cite{Ellis} or \cite{MoBu}.

Theorem \ref{th:CWLLN} makes the intuition from physics precise: The $\xi_i$ satisfy a law of large numbers, like independent random variables do, if $\beta \leq 1$, in the sense that the distribution  of $m_M = \frac{1}{M} \sum_{i=1}^M \xi_i$ converges weakly to zero, while $m_M$, the `mean magnetization', equals $\pm m (\beta) \neq 0$ in the limit, with probability $\frac{1}{2}$ each, for $\beta > 1$.

In physics jargon, there is a phase transition for the Curie-Weiss model at $\beta=1$, the `critical inverse temperature'.

We now discuss two matrix ensembles connected with Curie-Weiss random variables. The first one,
which we call the \emph{diagonal Curie-Weiss ensemble}, was introduced in \cite{FriesenLoewe2}. It has
independent `diagonals' and the matrix entries within the same diagonal are Curie-Weiss distributed.
Thus, it is closely related to the diagonal filling as defined in \eqref{eq:diagonal}.

\begin{definition}\label{def:diagCW}
  Let the random variables $\xi_{1},\xi_{2},\ldots,\xi_{N}$ be $\IP_{\beta}^{N}$-distributed Curie-Weiss random variables and take $N$ independent copies of the $\xi_{i}$, which we call

${\xi_{1}}^{1},{\xi_{2}}^{1},\ldots,{\xi_{N}}^{1}$,\quad
   ${\xi_{1}}^{2},{\xi_{2}}^{2},\ldots,{\xi_{N}}^{2}$ ,\quad\ldots, \quad
   ${\xi_{1}}^{N},{\xi_{2}}^{N},\ldots,{
\xi_{N}}^{N}$.

Then we call the random matrix
\begin{align}
   X_{N}(i,i+\ell)~&:=~{\xi_{i}}^{\ell}\quad&&\text{for $\ell=0,\ldots,N-1$ and $i=1,\ldots,N-\ell$}\\
   X_{N}(i,j)~&:=~X_{N}(j,i)\quad&&\text{for $i>j$}
\end{align}
the \emph{diagonal Curie-Weiss ensemble} (with diagonal distribution $\IP_{\beta}^{N}$).
\end{definition}

For the diagonal Curie-Weiss ensemble Friesen and L\"{o}we \cite{FriesenLoewe2} prove the following result.

\begin{theorem}\label{th:CWdiag}
  Suppose $X_{N}$ is a diagonal Curie-Weiss ensemble with diagonal distribution $\IP_{\beta}^{N}$.

Then the eigenvalue distribution measure $\sigma_{N}$ of $\frac{1}{\sqrt{N}}X_{N}$ converges weakly
almost surely to a measure $\sigma_{\beta}$.
$\sigma_{\beta}$ is the semicircle law $\sigma$ if and only if $\beta\leq 1$.
\end{theorem}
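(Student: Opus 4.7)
The plan is to run the moment method of Theorem~\ref{th:Wigner}, while carefully tracking how the Curie--Weiss correlations along the diagonals modify the combinatorics. Expanding as in \eqref{eq:sumXi} and exploiting that the diagonals of $X_{N}$ are stochastically independent copies of $(\xi_{1},\ldots,\xi_{N})$, each expectation
\begin{align*}
\IE\Big(X_{N}(i_{1},i_{2})X_{N}(i_{2},i_{3})\cdots X_{N}(i_{k},i_{1})\Big)
\end{align*}
factors over the occupied diagonals $d=|i_{s+1}-i_{s}|$ into a product of Curie--Weiss moments $\IE\big(\xi^{(d)}_{j_{1}}\cdots\xi^{(d)}_{j_{m_{d}}}\big)$. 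The reflection symmetry $\xi\mapsto-\xi$ of $\IP_{\beta}^{N}$ immediately kills any term whose total multiplicity on some occupied diagonal is odd.

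As a first step I would isolate, exactly as in Wigner's proof, the tuples whose associated multigraph is a tree on $1+k/2$ vertices with each edge doubled: on every doubled edge the two identical $\pm 1$-variables multiply to $1$ deterministically, and these tuples therefore contribute $C_{k/2}$ for even $k$ (and $0$ for odd $k$) in the limit, independently of $\beta$. The substantive question is whether the \emph{non-tree} tuples, which vanish in the Wigner setting by Lemma~\ref{lem:once}, produce additional mass here.

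For that I would invoke quantitative Curie--Weiss moment bounds, obtainable from the Hubbard--Stratonovich representation of $\IP_{\beta}^{N}$ as a Gaussian mixture: for $\beta<1$ one has $|\IE(\xi_{j_{1}}\cdots\xi_{j_{2m}})|=O(N^{-m})$, at $\beta=1$ this weakens to $O(N^{-m/2})$, while for $\beta>1$ Theorem~\ref{th:CWLLN} yields $\IE(\xi_{j_{1}}\cdots\xi_{j_{2m}})\to m(\beta)^{2m}$. Combined with the count that tuples with $r$ distinct indices number $O(N^{r})$ and with the factorisation of the expectation over diagonals, these bounds show that for $\beta\leq 1$ every deviation from the doubled-edge/tree pattern pays at least one extra factor $N^{-1/2}$ that the combinatorial count cannot recover, so the only surviving terms are the Catalan ones and $\sigma_{\beta}=\sigma$. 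For $\beta>1$, by contrast, a family of \emph{block-constant} tuples survives with positive weight determined by $m(\beta)$, and the limiting even moments strictly exceed $C_{k/2}$, so $\sigma_{\beta}\neq\sigma$.

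The upgrade to $\IP$-almost sure convergence follows the usual scheme: expand $\IV\big(\int x^{k}\,d\sigma_{N}\big)$ as a double sum over pairs of closed walks, factor jointly over the independent diagonals, and reuse the Curie--Weiss moment bounds to obtain an estimate summable in $N$, then apply Borel--Cantelli. Moment determinacy \eqref{eq:mom} of $\sigma_{\beta}$ is automatic for $\beta\leq 1$ (bounded support) and, for $\beta>1$, can be ensured by exhibiting $\sigma_{\beta}$ as a mixture of compactly supported measures, despite the fact that the operator norm of $\frac{1}{\sqrt{N}}X_{N}$ may diverge. The hardest step is the last one: for $\beta>1$ one must not merely bound but explicitly identify the surviving block-constant contributions; the right intuition is that conditioning on the signs of the diagonal magnetisations turns $X_{N}$ into a Wigner matrix plus a random rank-one perturbation per diagonal, and $\sigma_{\beta}$ should emerge as the free convolution of $\sigma$ with the resulting perturbation law --- carrying this out at the level of individual moments is the core combinatorial challenge.
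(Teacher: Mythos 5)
The paper does not actually contain a proof of Theorem~\ref{th:CWdiag}: it is quoted as a result of Friesen and L\"owe \cite{FriesenLoewe2}, and the only hint given is the remark that their proof proceeds by the moment method and yields explicit formulas for the moments of $\sigma_\beta$ in terms of $m(\beta)$. Your general strategy --- expand traces as in \eqref{eq:sumXi}, use independence across diagonals to factor the expectation into Curie--Weiss correlations on each occupied diagonal, isolate the doubled trees for the Catalan count, and control the rest via Proposition~\ref{prop:Corr} --- is the right starting point, and the paper's own treatment of the \emph{full} Curie--Weiss ensemble (Corollary~\ref{cor:cwcorr}, Proposition~\ref{prop:r-gross-genau}) gives you a template. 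So you are on compatible ground with what the paper hints at.

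That said, two of your steps are flawed as written. First, your claim that for $\beta \le 1$ ``every deviation from the doubled-edge/tree pattern pays at least one extra factor $N^{-1/2}$'' does not close the estimate at $\beta=1$. For the diagonal ensemble each diagonal holds only $M=N$ Curie--Weiss variables, so at $\beta=1$ Proposition~\ref{prop:Corr}(2) gives only $O(N^{-\ell/4})$ for a product with $\ell$ odd-multiplicity variables. Combined with Proposition~\ref{prop:r-gross-genau} ($\ell \ge 2s+2$ when $r = 1 + k/2 + s$) and the naive $O(N^r)$ tuple count, the bound for such terms is $N^{r}\cdot N^{-\ell/4}\cdot N^{-(1+k/2)} \le N^{(s-1)/2}$, which does \emph{not} vanish for $s\ge 1$. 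Closing the gap at $\beta=1$ requires exploiting a further constraint that you do not mention: the expectation on each occupied diagonal vanishes unless the number of odd-multiplicity edges on that diagonal is even, which drastically reduces the count of contributing tuples and is essential to make the bound summable. Second, your picture of the $\beta>1$ case as ``a Wigner matrix plus a random rank-one perturbation per diagonal'' is wrong. After the de Finetti / Hubbard--Stratonovich split the mean of a conditioned diagonal is a constant band, i.e.\ a Toeplitz-type matrix of essentially full rank, not a rank-one matrix (contrast with the full Curie--Weiss ensemble in Theorem~\ref{th:KK}, where the mean \emph{is} rank one because it is a multiple of $\mathcal{E}_N$). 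The paper's own remark that $\sigma_\beta$ approaches the Bryc--Dembo--Jiang Toeplitz law $\gamma$ of Theorem~\ref{th:BDJ} as $\beta\to\infty$ makes it clear that the correct perturbation picture is Toeplitz-plus-centered, and your proposed ``free convolution with a rank-one law'' heuristic would lead to qualitatively the wrong limit. You do flag the $\beta>1$ moment identification as the hard open step, which is fair, but the heuristic you offer points in the wrong direction.
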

\begin{remarks}
   \begin{enumerate}
      \item The theorem shows that there is a phase transition for the eigenvalue distribution of the diagonal Curie-Weiss ensemble at $\beta=1$.

      \item The proof in \cite{FriesenLoewe2} uses the moment method. It allows the authors to give
        an expression for the moments of $\sigma_{\beta}$ in terms of $m(\beta)$ (see \eqref{eq:CWLLN}).
   \end{enumerate}
\end{remarks}
For large $\beta$ the eigenvalue distribution measure of the diagonal Curie-Weiss ensemble approaches the eigenvalue distribution measure of random Toeplitz matrices we discussed in Theorem \ref{th:BDJ} (see Bryc, Dembo and Jiang \cite{BrycDJ}).

The second Curie-Weiss-type matrix ensemble, which we call the `full Curie-Weiss ensemble', is defined as follows.
\begin{definition}
Take $N^2$ Curie-Weiss random variables $\tilde{X}_N(i,j)$ with distribution $\IP_{\beta}^{N^2}$ and set
\begin{align}
X_N (i,j) =
\begin{cases}
\tilde{X}_N (i,j) & \text{~ for ~} i \leq j\\
\tilde{X}_N (j,i) & \text{~  otherwise} \,.
\end{cases}
\end{align}
We call the random matrix $X_{N}$ defined above the \emph{full Curie-Weiss ensemble}.
\end{definition}

To our knowledge this ensemble was first considered in \cite{HKW}, where the following
result was proved.
\begin{theorem}
\label{th:HKWKK}
Let $X_N$ be the full Curie-Weiss matrix ensemble with inverse temperature $\beta\le 1$.
Then the eigenvalue distribution measure $\sigma_N$ of $\frac{1}{\sqrt{N}} X_N$ converges weakly in probability to the semicircle distribution $\sigma$.
\end{theorem}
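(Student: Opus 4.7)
The plan is to follow the strategy of Section \ref{ssec:scl-wigner}: establish that $\IE(\int x^{k}\,d\sigma_{N}(x))$ converges to the $k$-th moment \eqref{eq:momsc} of the semicircle distribution for every fixed $k\in\IN$, and then control the variance so that convergence in expectation can be upgraded to convergence in probability. Starting from the trace expansion \eqref{eq:sumXi}, I would organize the sum by the multigraph $\mathcal{G}$ associated with $(i_{1},\ldots,i_{k})$: let $v$ be its number of distinct vertices, let $r$ be its number of distinct edges with multiplicities $\ell_{1},\ldots,\ell_{r}$ (so that $\sum\ell_{j}=k$), and let $s$ denote the number of $\ell_{j}$ that are odd. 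Since every $X_{N}(i,j)$ takes values in $\{-1,+1\}$, we have $X_{N}(i,j)^{\ell}=X_{N}(i,j)$ for odd $\ell$ and $X_{N}(i,j)^{\ell}=1$ for even $\ell$, so the expectation attached to $\mathcal{G}$ collapses to the joint moment $\IE\big(\prod_{e:\,\ell_{e}\ \text{odd}}X_{N}(e)\big)$ of $s$ distinct $\IP_{\beta}^{N^{2}}$-Curie-Weiss variables.

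Next, the Hubbard-Stratonovich identity, obtained by rewriting $\exp(\beta(\sum x_{i})^{2}/(2M))$ as a Gaussian integral in an auxiliary parameter $t$, represents $\IP_{\beta}^{M}$ as a $t$-mixture of product Bernoulli laws and yields
\begin{align*}
\IE_{\beta}^{M}(\xi_{i_{1}}\cdots\xi_{i_{s}})=\frac{\int e^{-t^{2}/2}\,(2\cosh(\sqrt{\beta/M}\,t))^{M}\,\tanh(\sqrt{\beta/M}\,t)^{s}\,dt}{\int e^{-t^{2}/2}\,(2\cosh(\sqrt{\beta/M}\,t))^{M}\,dt}
\end{align*}
for distinct indices $i_{1},\ldots,i_{s}$. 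A Laplace analysis based on $\log\cosh x=\tfrac{x^{2}}{2}-\tfrac{x^{4}}{12}+O(x^{6})$ shows that the effective density on $t$ is asymptotically $\mathcal{N}(0,1/(1-\beta))$ when $\beta<1$, whereas at $\beta=1$ it concentrates on the scale $t\sim M^{1/4}$, giving
\begin{align*}
\IE_{\beta}^{M}(\xi_{i_{1}}\cdots\xi_{i_{s}})=O(M^{-s/2})\ \text{for }\beta<1,\qquad \IE_{\beta}^{M}(\xi_{i_{1}}\cdots\xi_{i_{s}})=O(M^{-s/4})\ \text{for }\beta=1,
\end{align*}
and vanishing identically for odd $s$ by the symmetry $\xi\mapsto -\xi$ of the Curie-Weiss measure. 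With $M=N^{2}$, these bounds become $O(N^{-s})$ and $O(N^{-s/2})$ respectively.

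The combinatorial step then refines Proposition \ref{prop:r-gross}. Since the number of index tuples producing $\mathcal{G}$ is $O(N^{v})$, its overall contribution is bounded by $N^{v-\alpha s-1-k/2}$ with $\alpha=1$ for $\beta<1$ and $\alpha=1/2$ for $\beta=1$. Using connectedness $v\leq r+1$ together with $r\leq(k+s)/2$ (each even edge contributes at least $2$ and each odd edge at least $1$ to $k$) yields $v-s-1-k/2\leq -s/2$, which already settles the case $\beta<1$. At $\beta=1$ the direct bound only gives $O(1)$, so I would exploit the Eulerian parity constraint: every vertex of the multigraph has even degree, hence the set of odd-multiplicity edges itself forms a subgraph with all even degrees that, as soon as $s\geq 1$, must contain a cycle; this forces $\widetilde{\mathcal{G}}$ to be non-tree and tightens $v\leq r+1$ to $v\leq r$, upgrading the bound to $v-s/2-1-k/2\leq -1$. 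The only surviving terms are therefore the tree-like Wigner multigraphs ($s=0$, $v=1+k/2$, every edge of multiplicity exactly $2$); since $X_{N}(i,j)^{2}\equiv 1$, their expectation equals one identically, and counting ordered trees reproduces $C_{k/2}$ for even $k$ and $0$ for odd $k$, matching \eqref{eq:momsc}.

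The hard part will be the variance bound that upgrades weak convergence in expectation to weak convergence in probability. Writing $\mathrm{Var}(\mathrm{tr}\,M_{N}^{k})$ as a double sum over pairs of index tuples produces pairs of Eulerian circuits on a joint multigraph, and one has to show that those pairs whose circuits share few or no edges still decouple asymptotically. At $\beta=1$ the slow decay of Curie-Weiss correlations makes this decoupling the most delicate point; however, the same Hubbard-Stratonovich estimates together with the cycle-counting used above should still yield an $o(1)$ bound. Once the variance is controlled, weak convergence in probability of $\sigma_{N}$ to $\sigma$ follows from moment convergence by the standard criterion used in Section \ref{ssec:scl-wigner}.
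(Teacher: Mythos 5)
Your proposal is correct and follows the same high-level strategy as the paper: expand traces into sums over closed walks, apply the Hubbard--Stratonovich / de Finetti representation to obtain the correlation estimates of Proposition~\ref{prop:Corr} (your $O(M^{-s/2})$, $O(M^{-s/4})$ and vanishing for odd $s$ match the paper's Corollary~\ref{cor:cwcorr} with $M=N^2$), then close by a combinatorial bound and a variance estimate. The paper likewise sketches the last two steps and refers to \cite{HKW} for details, so the level of detail of your variance discussion is comparable; one should note, though, that this really is where the main technical work lies, since correlations among disjoint index tuples do not vanish as they do for Wigner ensembles.

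Where your argument genuinely differs from the paper is the combinatorial step at $\beta=1$. The paper proves (Proposition~\ref{prop:r-gross-genau}) that $v\geq 1+k/2+s'$ with $s'>0$ forces at least $2s'+2$ single edges, by removing a single edge from the multigraph and re-running the vertex/edge count on the still-connected remainder; the decay $N^{-\ell/2}$ per single edge then yields $O(N^{-1})$. You instead exploit the spin structure $X_N(i,j)\in\{-1,1\}$ to collapse the expectation onto the set of odd-multiplicity edges, and observe that these form a subgraph of $\widetilde{\mathcal{G}}$ in which every vertex has even degree (because every vertex of the Eulerian multigraph has even degree, and the parity of $\sum_j \ell_{ij}$ equals the parity of the number of odd $\ell_{ij}$). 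If this subgraph is nonempty it contains a cycle, forcing $\widetilde{\mathcal{G}}$ off the tree, hence $v\le r$, and then $v - s/2 - 1 - k/2 \le r - s/2 -1 - k/2 \le -1$. This is a clean alternative to the paper's deletion argument and is arguably tighter, since it works directly with the number of odd-multiplicity edges rather than the smaller number of single edges. The one point to tidy up in a full write-up is the treatment of loops (diagonal entries): loops contribute an even amount to every vertex degree, so the parity argument survives, and a loop is itself a cycle forcing $r\ge v$; but this should be said explicitly since the paper's Definition~\ref{def:sgraph} allows loops in $\widetilde{\mathcal{G}}$. Also note that your parity reasoning implicitly shows the nice fact that when $\widetilde{\mathcal{G}}$ is a tree all multiplicities must be even, which is exactly what guarantees that the $v=1+k/2$, $s=0$ terms coincide with the Wigner count and produce the Catalan numbers.
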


The proof is based on the moment method we discussed in section \ref{ssec:scl-wigner}.
In \cite{HKW} the authors prove this result just using assumptions on correlations of the
$X_{N}(i,j)$ which are in particular satisfied by the full Curie-Weiss model if $\beta\leq 1$.
Here, we only discuss this special case and refer to \cite{HKW} for the more general case.

The main difficulty in this proof is the fact that for the Curie-Weiss ensemble it is \emph{not} true that
\begin{equation}\label{eq:prod}
\IE \Big( X_N (i_1,i_2) \cdot X_N (i_2,i_3) \cdot ~  \ldots ~  \cdot  X_N (i_k,i_{1}) \Big)
\end{equation}

is zero if an edge $\{i, j\}$ occurs only once in \eqref{eq:prod} (cf. \eqref{eq:EXi} for the independent case). In other words, we need an appropriate substitute for Lemma \ref{lem:once}.

So, we need a way to handle expectations as in \eqref{eq:prod} when there are edges (=index pairs,
see Definition \ref{def:graph}) which occur only once. Let us call such index pairs `single edges'.

Correlation estimates as we need them can be obtained from a special way of writing expectations $\IE_{\beta}^M$ with respect to the measure $\IP_{\beta}^M$.

\begin{definition}
\label{def:Pt}
For $t \in [-1,1]$ we denote by $P_t^{(1)}$ the probability measure on $\{-1,1\}$ given by
\begin{equation*}
P_t^{(1)} (1) = \frac{1}{2} (1+t) \text{~  and ~} P_t^{(1)} (-1) = \frac{1}{2} (1-t)\,.
\end{equation*}
$P_t^{(M)}$ denotes the $M$-fold product of $P_t^{(1)}$ on $\{-1,1\}^M$. If $M$ is clear from the context we write $P_t$ instead of $P_t^{(M)}$.

By $E_t$ resp. $E_t^{(M)}$ we denote the corresponding expectation.
\end{definition}
\begin{proposition}
\label{prop:dFCW}
For any function $\phi$ on $\{-1,1\}^M$ we have
\begin{equation}\label{eq:dFCW}
\IE_{\beta}^M \Big(\phi (X_1, \ldots, X_M) \Big) = \int_{-1}^1 E_t \Big( \phi (X_1, \ldots , X_M) \Big) \frac{e^{-M \frac{1}{2} F_{\beta} (t)}}{1-t^2} ~  dt
\end{equation}

where $F_{\beta} (t) = \frac{1}{\beta} (\frac{1}{2} \ln \frac{1+t}{1-t})^2 + \ln (1-t^2)$.

\end{proposition}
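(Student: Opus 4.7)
The strategy is the classical Hubbard--Stratonovich transformation: the Curie--Weiss density is proportional to $\exp\bigl(\tfrac{\beta}{2M}S^{2}\bigr)$ with $S=\sum_{i=1}^{M}x_{i}$, a \emph{quadratic} functional of the spins, and the whole point of a Hubbard--Stratonovich step is to replace this quadratic weight by an integral over an auxiliary real parameter of a \emph{linear} (and hence factorized) weight, which then couples naturally to the product measures $P_{t}^{(M)}$. Concretely, I would begin from the Gaussian identity
\begin{equation*}
   e^{\frac{\beta}{2M}S^{2}}~=~\sqrt{\tfrac{M}{2\pi\beta}}\int_{\IR}e^{-\frac{M}{2\beta}h^{2}+hS}\,dh
\end{equation*}
valid for every real $S$, insert it into $\IE_{\beta}^{M}(\phi)=Z^{-1}\sum_{x\in\{-1,1\}^{M}}\phi(x)e^{\frac{\beta}{2M}S(x)^{2}}$, and interchange the sum with the $h$-integral (Fubini is trivial, everything is bounded by an integrable Gaussian).

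Next I would exploit that, once the exponent is linear in the spins, the sum over $x\in\{-1,1\}^{M}$ factorizes. The key algebraic observation is that for $x\in\{-1,1\}$
\begin{equation*}
   e^{hx}~=~2\cosh(h)\cdot\frac{1+x\tanh(h)}{2}\ ,
\end{equation*}
so with $t:=\tanh(h)\in(-1,1)$ one has $e^{hx}=2\cosh(h)\,P_{t}^{(1)}(x)$, and therefore
\begin{equation*}
   \sum_{x\in\{-1,1\}^{M}}\phi(x)\prod_{i=1}^{M}e^{hx_{i}}~=~(2\cosh h)^{M}\,E_{t}^{(M)}\bigl(\phi(X_{1},\ldots,X_{M})\bigr)\,.
\end{equation*}
This is where the product measure $P_{t}^{(M)}$ of Definition \ref{def:Pt} enters, and it is the reason the magic variable in the final formula is $t$ rather than $h$.

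The remainder is a routine change of variable from $h\in\IR$ to $t=\tanh h\in(-1,1)$. One has $h=\tfrac{1}{2}\ln\tfrac{1+t}{1-t}$, $dh=dt/(1-t^{2})$, and $\cosh h=(1-t^{2})^{-1/2}$, so $(2\cosh h)^{M}=2^{M}(1-t^{2})^{-M/2}$. Combining this with $e^{-Mh^{2}/(2\beta)}$ produces exactly the exponent
\begin{equation*}
   e^{-\frac{M}{2\beta}h^{2}}(2\cosh h)^{M}~=~2^{M}\exp\!\Bigl(-\tfrac{M}{2}\bigl[\tfrac{1}{\beta}(\tfrac{1}{2}\ln\tfrac{1+t}{1-t})^{2}+\ln(1-t^{2})\bigr]\Bigr)~=~2^{M}e^{-\frac{M}{2}F_{\beta}(t)}\ ,
\end{equation*}
while the Jacobian contributes the prefactor $1/(1-t^{2})$ in the claimed identity \eqref{eq:dFCW}. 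Dividing by $Z$ converts the sum into the expectation $\IE_{\beta}^{M}$; the $h$-independent constants $\sqrt{M/(2\pi\beta)}\cdot 2^{M}$ are absorbed into $Z$, and one checks consistency by running the same computation with $\phi\equiv 1$, which forces the normalization to match.

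There is essentially no conceptual obstacle; the only technical point to verify is integrability near the endpoints $t=\pm 1$, but this is automatic because $(\tfrac{1}{2}\ln\tfrac{1+t}{1-t})^{2}$ blows up faster than any logarithmic term, so $e^{-MF_{\beta}(t)/2}$ kills both the $1/(1-t^{2})$ factor and any reasonable boundedness of $E_{t}(\phi)$. The only place where care is required is bookkeeping of the $h$-independent constants in the Hubbard--Stratonovich step, and matching them against the partition function $Z$ so that the final formula reads exactly as \eqref{eq:dFCW}.
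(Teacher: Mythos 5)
Your proof is correct and follows precisely the Hubbard--Stratonovich route that the paper itself indicates ("This proposition can be proved using the so called Hubbard--Stratonovich transformation. For a proof see \cite{HKW} or \cite{MoBu}."), including the Gaussian linearization of the quadratic interaction, the factorization $e^{hx}=2\cosh(h)\,P_{\tanh h}^{(1)}(x)$, the substitution $t=\tanh h$, and the matching of all $h$-independent constants against $Z$ by testing on $\phi\equiv 1$. The paper does not spell out the computation, but what you wrote is exactly the intended argument, carried out carefully.
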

This proposition can be proved using the so called Hubbard-Stratonovich transformation. For a proof see \cite{HKW} or \cite{MoBu}. The way to write expectations with respect to $\IP_{\beta}^{M}$ as a
combination of independent measure is typical for exchangeable random variables and is known as
de Finetti representation \cite{deFinetti}. We will discuss this issue in detail in Section \ref{ssec:scl-ee} and in particular in \cite{KK2}.

The advantage of the representation \eqref{eq:dFCW} comes from the observation that under the probability measure $P_t$ the random variables $X_1, \ldots, X_M$ are \emph{independent} and the fact that the integral is in a form which is immediately accessible to the Laplace method for the asymptotic evaluation of integrals.

The Laplace method and Proposition \ref{prop:dFCW} yield the required correlation estimates.

\begin{proposition}
\label{prop:Corr}
Suppose $X_1, \ldots, X_M$ are $\IP_{\beta}^M$-distributed Curie-Weiss random variables.

If $\ell$ is even, then as $M\to\infty$
\begin{enumerate}
\item if $\beta < 1$
\begin{equation*}
\IE_{\beta}^{(M)} (X_1 \cdot X_2\cdot \ldots \cdot X_{\ell})~ \approx~ (l-1)!!\ \Big(\frac{\beta}{1- \beta} \Big)^{\frac{\ell}{2}} \frac{1}{M^{\frac{\ell}{2}}}
\end{equation*}
\item if $\beta = 1$ there is a constant $c_{\ell}$ such that
\begin{equation*}
\IE_{\beta}^{(M)} (X_1 \cdot X_2\cdot \ldots \cdot X_{\ell}) ~\approx~ c_{\ell}\ \frac{1}{M^{\frac{\ell}4}}
\end{equation*}
\item \label{Corrg1} if $\beta > 1$
\begin{equation*}
\IE_{\beta}^{(M)} (X_1 \cdot X_2\cdot \ldots \cdot X_{\ell}) ~\approx~ m (\beta)^{\ell}
\end{equation*}
where $t= m (\beta)$, as in \eqref{eq:defm}, is the strictly positive solution of $\tanh \beta t = t$.
\end{enumerate}
If $\ell$ is odd then $\IE_{\beta}^{(M)} (X_1 \cdot X_2\cdot\ldots \cdot X_{\ell}) = 0$ for all $\beta$.
\end{proposition}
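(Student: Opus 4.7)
My approach is based on the integral representation of Proposition \ref{prop:dFCW} combined with Laplace's method. Under the product measure $P_t$ the variables $X_1,\ldots,X_M$ are independent with common mean $E_t(X_i)=t$, so $E_t(X_1 X_2 \cdots X_\ell) = t^\ell$. Applying \eqref{eq:dFCW} both to $\phi=X_1 \cdots X_\ell$ and to $\phi\equiv 1$ and forming the ratio (which eliminates any overall normalization constant) yields
\begin{equation*}
\IE_\beta^{(M)}(X_1 \cdots X_\ell) \;=\; \frac{\int_{-1}^1 t^\ell\, e^{-MF_\beta(t)/2}\,(1-t^2)^{-1}\, dt}{\int_{-1}^1 e^{-MF_\beta(t)/2}\,(1-t^2)^{-1}\, dt}.
\end{equation*}
The parity assertion is then immediate, since $F_\beta$ is even while $t^\ell$ is odd for odd $\ell$, so the numerator vanishes. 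The remaining cases reduce to Laplace asymptotics at the global minima of $F_\beta$.

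Next I would Taylor expand $F_\beta$ at $t=0$. Using $\tfrac12\ln\tfrac{1+t}{1-t} = t + t^3/3 + O(t^5)$ and $\ln(1-t^2) = -t^2 - t^4/2 + O(t^6)$ one finds
\begin{equation*}
F_\beta(t) \;=\; \Big(\frac{1}{\beta}-1\Big)t^2 + \Big(\frac{2}{3\beta}-\frac{1}{2}\Big)t^4 + O(t^6).
\end{equation*}
For $\beta<1$ this shows $t=0$ is the unique, non-degenerate global minimum with $F_\beta''(0)=2(\beta^{-1}-1)>0$. For $\beta=1$ the quadratic term vanishes while the quartic coefficient equals $1/6>0$, so $t=0$ remains a strict minimum but is quartically degenerate. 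For $\beta>1$ the identity $F_\beta'(t) = \frac{2}{1-t^2}\bigl(\tfrac{1}{\beta}\,\mathrm{arctanh}\,t - t\bigr)$ shows that $t=0$ becomes a local maximum and that the two symmetric solutions $\pm m(\beta)$ of $\tanh(\beta t)=t$ are the equi-valued global minima.

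With these expansions in hand I would rescale near each minimum so that the exponent becomes of order one. For $\beta<1$, set $t=s\sqrt{\beta/(M(1-\beta))}$; the numerator becomes $(\beta/(M(1-\beta)))^{(\ell+1)/2}\int_{\IR} s^\ell e^{-s^2/2}\,ds\,(1+o(1))$, and dividing by the $\ell=0$ denominator together with $\int_{\IR} s^\ell e^{-s^2/2}\,ds=\sqrt{2\pi}\,(\ell-1)!!$ for even $\ell$ yields precisely $(\ell-1)!!(\beta/(1-\beta))^{\ell/2}M^{-\ell/2}$. For $\beta=1$ the correct substitution is $t=sM^{-1/4}$, which turns the exponent into $s^4/12$ and gives a factor $M^{-\ell/4}$ multiplied by the finite constant $c_\ell = \int_{\IR} s^\ell e^{-s^4/12}\,ds\big/\int_{\IR} e^{-s^4/12}\,ds$. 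For $\beta>1$, a standard Laplace expansion around each of $\pm m(\beta)$ yields two identical Gaussian factors weighted by $(\pm m(\beta))^\ell$; after summing and dividing by the analogous $\ell=0$ computation only $m(\beta)^\ell$ survives, while the saddle contribution at $t=0$ is exponentially smaller than $e^{-MF_\beta(m(\beta))/2}$ and hence negligible.

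The main technical obstacle is the rigorous truncation argument needed to reduce each integral to a shrinking neighbourhood of the relevant minimiser. One needs uniform lower bounds $F_\beta(t)\geq c\,(t-t_0)^2$ for $\beta\ne 1$ (respectively $F_1(t)\geq c\,t^4$) away from the minimiser $t_0$, together with control of the prefactor $(1-t^2)^{-1}$ as $|t|\to 1$. Since $F_\beta(t)\to +\infty$ at the endpoints because $(\mathrm{arctanh}\,t)^2$ dominates $|\ln(1-t^2)|$, the endpoint contributions are exponentially suppressed by the $e^{-MF_\beta/2}$ factor and the routine Watson-lemma style tail estimates complete the argument.
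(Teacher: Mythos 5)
Your proposal is correct and follows exactly the route the paper sketches: it invokes the de Finetti/Hubbard--Stratonovich representation of Proposition \ref{prop:dFCW}, uses $E_t(X_1\cdots X_\ell)=t^\ell$, and then applies Laplace's method at the minima of $F_\beta$ (Gaussian scaling $t\sim M^{-1/2}$ for $\beta<1$, quartic scaling $t\sim M^{-1/4}$ for $\beta=1$, and the two-point minima $\pm m(\beta)$ for $\beta>1$). The paper itself only references \cite{HKW} and \cite{MoBu} for the details, but explicitly states that this combination of Proposition \ref{prop:dFCW} with the Laplace method is the intended proof, and your ratio trick correctly handles the normalization constant that is left implicit in \eqref{eq:dFCW}.
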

We remind the reader that for an odd number $k$ we set $k!!=k\cdot (k-2)\cdot\ldots\cdot 3 \cdot 1$.

For proof of Proposition \ref{prop:Corr} see again \cite{HKW} or \cite{MoBu}.

From Proposition \ref{prop:Corr} we get immediately the following Corollary, which substitutes
Lemma \ref{lem:once}.

\begin{corollary}\label{cor:cwcorr}
   Let $X_{N}$ be the full Curie-Weiss matrix ensemble with inverse temperature $\beta$
and let the graph corresponding to the sequence $i_{1},i_{2},\ldots,i_{k}$ contain $\ell$
single edges.
\begin{enumerate}
   \item If $\beta<1$ then
\begin{equation}\label{eq:EXi1}
\Big|\,\IE \Big(X_N (i_1,i_2) \cdot X_N (i_2,i_3) \cdot \ldots  \cdot X_N (i_k,i_1) \Big)\,\Big|~\leq~C\,N^{-\ell}\,.
\end{equation}
\item If $\beta=1$ then
\begin{equation}\label{eq:EXi2}
\Big|\,\IE \Big(X_N (i_1,i_2) \cdot X_N (i_2,i_3) \cdot \ldots  \cdot X_N (i_k,i_1) \Big)\,\Big|~\leq~C\,N^{-\ell/2}\,.
\end{equation}
\end{enumerate}
\end{corollary}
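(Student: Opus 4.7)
The plan is to exploit the $\pm 1$-valuedness of the Curie-Weiss entries to collapse the product $\prod_{\alpha=1}^{k}X_{N}(i_{\alpha},i_{\alpha+1})$ (with $i_{k+1}:=i_{1}$) into one indexed only by the distinct edges of \emph{odd} multiplicity, and then to read off the required bound directly from Proposition \ref{prop:Corr} using the exchangeability of $\IP_{\beta}^{N^{2}}$.

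First I would list the distinct edges $e_{1},\ldots,e_{r}$ of the multigraph associated to $(i_{1},\ldots,i_{k})$, with multiplicities $m_{1},\ldots,m_{r}$. Since $X_{N}(e)\in\{-1,+1\}$ one has $X_{N}(e)^{m}=X_{N}(e)^{m\bmod 2}$, and hence
\begin{equation*}
\prod_{\alpha=1}^{k}X_{N}(i_{\alpha},i_{\alpha+1})~=~\prod_{q=1}^{r}X_{N}(e_{q})^{m_{q}}~=~\prod_{q:\,m_{q}\text{ odd}}X_{N}(e_{q})\,.
\end{equation*}
Set $s:=\#\{q:m_{q}\text{ odd}\}$. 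Single edges ($m_{q}=1$) contribute to $s$, so $s\geq\ell$. Distinct edges index distinct components of the Curie-Weiss vector $(\tilde X_{N}(i,j))$, so by exchangeability of $\IP_{\beta}^{N^{2}}$
\begin{equation*}
\IE\Bigl(\prod_{q:\,m_{q}\text{ odd}}X_{N}(e_{q})\Bigr)~=~\IE_{\beta}^{(M)}\bigl(X_{1}\,X_{2}\,\cdots\,X_{s}\bigr)\qquad\text{with }M=N^{2}.
\end{equation*}

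If $s$ is odd, the last clause of Proposition \ref{prop:Corr} makes this expectation vanish and \eqref{eq:EXi1}--\eqref{eq:EXi2} hold trivially. If $s$ is even, items (1) and (2) of Proposition \ref{prop:Corr} yield constants $C_{s},c_{s}$ with
\begin{align*}
\bigl|\IE_{\beta}^{(M)}(X_{1}\cdots X_{s})\bigr|~&\leq~C_{s}\,M^{-s/2}\quad\text{for }\beta<1,\\
\bigl|\IE_{\beta}^{(M)}(X_{1}\cdots X_{s})\bigr|~&\leq~c_{s}\,M^{-s/4}\quad\text{for }\beta=1.
\end{align*}
Substituting $M=N^{2}$ and using $s\geq\ell$ (with $N\geq 1$) gives $C_{s}\,N^{-s}\leq C_{s}\,N^{-\ell}$ respectively $c_{s}\,N^{-s/2}\leq c_{s}\,N^{-\ell/2}$. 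Since $\ell\leq s\leq k$ takes only finitely many values for a fixed moment index $k$, the dependence on $s$ is absorbed into a single constant $C=C(k,\beta)$, which is exactly \eqref{eq:EXi1} and \eqref{eq:EXi2}.

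The only genuine input is Proposition \ref{prop:Corr}; the modulo-two reduction and the appeal to exchangeability are pure bookkeeping. The switch from $N^{-\ell}$ to $N^{-\ell/2}$ between the two cases directly mirrors the fact that $F_{\beta}$ vanishes quadratically at $t=0$ for $\beta<1$ but only quartically at $\beta=1$, so that Laplace's method around the origin delivers the scalings $M^{-s/2}$ and $M^{-s/4}$. The lossy inequality $s\geq\ell$ is precisely what makes the statement usable without any information about higher-multiplicity edges in the multigraph.
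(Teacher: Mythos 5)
Your proof is correct and follows the route the paper evidently intends (the authors call the corollary ``immediate'' from Proposition \ref{prop:Corr} and do not spell out the reduction): collapse the product by $X_N(e)^m = X_N(e)^{m\bmod 2}$, invoke exchangeability of $\IP_\beta^{N^2}$ to identify the resulting expectation with $\IE_\beta^{(N^2)}(X_1\cdots X_s)$ where $s\ge\ell$ is the number of odd-multiplicity edges, and read off the bound from Proposition \ref{prop:Corr} with $M=N^2$.
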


In the next step we have to prove a quantitative version of Proposition \ref{prop:r-gross}.

\begin{proposition}\label{prop:r-gross-genau}
    If $|\{i_1, \ldots , i_k \}|\geq 1+k/2 +s $ for some $s>0$, then there are at least $2s+2$ single edges in
$\{i_{1},i_{2}\},\{i_{2},i_{3}\},\ldots,\{i_{k},i_{1}\} $.
\end{proposition}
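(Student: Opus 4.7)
The plan is to bound the number of single edges from below by combining an easy multiplicity count with the Eulerian-circuit constraint that every vertex of $\mathcal{G}$ has even degree. Write $r=|\{i_{1},\ldots,i_{k}\}|$, let $e$ denote the number of edges of the associated simple graph $\widetilde{\mathcal{G}}$, and let $n_{1}$ be the number of single edges.

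As a first step I would note that, counting the $k$ edges of $\mathcal{G}$ according to their multiplicity, single edges contribute $1$ and each of the remaining $e-n_{1}$ distinct edges contributes at least $2$, so that $k\geq n_{1}+2(e-n_{1})$, i.e.~$n_{1}\geq 2e-k$. Combined with the trivial bound $e\geq r-1$ coming from connectedness of $\widetilde{\mathcal{G}}$ (cf.~Remark \ref{rem:tree}), this would only yield $n_{1}\geq 2r-k-2 = 2s$, which falls two short of the claim. The main step is therefore to upgrade the estimate to $e\geq r$, i.e.~to rule out that $\widetilde{\mathcal{G}}$ is a tree.

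Suppose for contradiction that $e=r-1$, so $\widetilde{\mathcal{G}}$ is a tree. Since the sequence $(i_{1},\ldots,i_{k},i_{1})$ is an Eulerian circuit in $\mathcal{G}$, every vertex of $\mathcal{G}$ has even degree. I would now argue by induction on $r$, peeling off leaves. Let $v$ be a leaf of $\widetilde{\mathcal{G}}$, with unique incident edge $\{v,w\}$; then the degree of $v$ in $\mathcal{G}$ is exactly the multiplicity $m(\{v,w\})$, which must therefore be even. Removing $v$ together with all $m(\{v,w\})$ copies of the leaf edge yields a smaller multigraph whose simple graph is still a tree, and decreases the degree of $w$ by an even number, hence preserves the all-even-degrees property. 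Iterating, every edge of $\widetilde{\mathcal{G}}$ has even multiplicity, so $k=\sum_{f}m(f)\geq 2(r-1)$, i.e.~$r\leq 1+k/2$, contradicting the hypothesis $r\geq 1+k/2+s$ with $s>0$.

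With $e\geq r$ established, we conclude $n_{1}\geq 2e-k\geq 2r-k = 2(r-1-k/2)+2 \geq 2s+2$, as required. The only substantive obstacle is the second step: without invoking the Eulerian parity constraint one recovers merely $n_{1}\geq 2s$, and the extra $+2$ is precisely what the even-degree condition on a would-be tree supplies.
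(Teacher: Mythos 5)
Your proof is correct but follows a genuinely different route than the paper's. Both arguments begin with the same elementary count that yields $n_1\geq 2e-k\geq 2(r-1)-k=2s$ (in the paper's notation, $\ell \geq 2r-k-2$). The upgrade by $+2$ is then achieved in two different ways. The paper removes one single edge and observes that the remaining multigraph is still connected (because the original has an Eulerian circuit, so removing an edge leaves an Eulerian path); re-running the count on the reduced graph with $k-1$ edges gives $\ell'\geq 2s+1$, and adding back the removed edge gives $2s+2$. You instead upgrade the inequality $e\geq r-1$ to $e\geq r$ by ruling out the tree case: exploiting the other consequence of the Eulerian circuit, namely that every vertex of $\mathcal{G}$ has even degree, you peel leaves to show that a tree $\widetilde{\mathcal{G}}$ would force every simple edge to have even (hence $\geq 2$) multiplicity, giving $k\geq 2(r-1)$ and so $r\leq 1+k/2$, contradicting $s>0$. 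Your argument isolates a cleaner structural fact (under the hypothesis, $\widetilde{\mathcal{G}}$ must contain a cycle) and runs the basic counting inequality only once, whereas the paper's argument is shorter to state but runs the counting inequality twice and leans on the connectedness-after-edge-removal property of Eulerian circuits rather than the parity property. Both are valid; yours is somewhat more self-contained in that the induction uses only the even-degree condition and never needs to re-establish connectedness of an auxiliary graph.
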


\begin{proof}
   The proof is a refinement of the proof of Proposition \ref{prop:r-gross}.

Suppose $\mathcal{G}$ is a multigraph with $r$ vertices and $k$ edges.
Then, as we saw already, $k\geq r-1$, if $\mathcal{G}$ is connected. So, there are at most $k-r+1$ edges
left for `double' connections. This means that there are at least $\ell=r-1-(k-r+1)$ single
edges and
\begin{align}
   \ell~=~r-1-(k-r+1)~&=~ 2r-k-2\notag\\
                &~\geq (k+2+2s)-k-2\notag\\
                &=~ 2s\label{eq:rk}
\end{align}
by assumption on $r$. So, by the above simple argument we are off the assertion by two only.

Now, we take into account that the sequence $(i_1, \ldots , i_k, i_{1})$ defines a closed
path through the graph. Since $|\{i_1, \ldots, i_k\}| > 1 + k/2$ there is at least one single edge.
If we remove one of the single edges from the graph, this new graph $\mathcal{G}'$
is still connected. $\mathcal{G}'$ has $r$ vertices and $k-1$ edges.

We redo the above argument with  the graph $\mathcal{G}'$ and get for the
minimal number $\ell'$ of single edges in $\mathcal{G}'$ equation \eqref{eq:rk} with $k$ replaced
by $k-1$ and thus obtain
\begin{align}
   \ell'~=~r-1-(k-1-r+1)~&=~ 2r-k-1\notag\\
                &\geq~ k+2+2s-k-1\notag\\
                &=~2s+1
\end{align}
Since we have removed a single edge from $\mathcal{G}$, the graph $\mathcal{G}$ has at least $2s+2$ single edges.
\end{proof}
Corollary \ref{cor:cwcorr} and Proposition \ref{prop:r-gross-genau} together allow us to do the
moment argument as in Section \ref{def:Wigner}.

We turn to the case $\beta>1$ for the full Curie-Weiss model. Part \ref{Corrg1} of Proposition \ref{prop:Corr} shows that there are strong correlations in this case, so one
is tempted to believe that there is no semicircle law for $\beta>1$.

In fact, it is easy to see that for $\beta>1$ the expectations of
$\frac{1}{N^{1+k}}\,\text{tr}({X_{N}}^{2k})$ cannot
converge for $k\geq 2$ as $N\to\infty$. For example, for $k=2$ we have
\begin{align}
   &\qquad\IE\,\Big(\frac{1}{N^{3}}\,\text{tr}\big({X_{N}}^{4}\big)\Big)~\notag\\
   &=~
   \frac{1}{N^{3}}\,\sum_{\ueber[1]{i_{1},i_{2},i_{3},i_{4}}{\text{all different}}}
   \IE\,\Big(X_{N}(i_{1},i_{2})X_{N}(i_{2},i_{3})X_{N}(i_{3},i_{4})X_{N}(i_{4},i_{1})\Big)+\mathcal{O}(1)\notag\\[3mm]
   &\approx~\frac{N(N-1)(N-2)(N-3)}{N^{3}}\;m(\beta)^{4}~\rightarrow~\infty \label{eq:momgro}
\end{align}
so the moment method will not work here.

A closer analysis of the problem shows that the divergence of the moments of traces is due to a \emph{single}
eigenvalue of $\frac{1}{\sqrt{N}}X_{N}$ which goes to infinity. All the other eigenvalues
behave `nicely'.
Informally speaking, for $\beta>1$ the matrices $X_{N}$ fluctuate around the matrices
$\pm m(\beta)\,\mathcal{E}_{N}$ (see \eqref{eq:defE}) with probability $1/2$ each. As we saw
in Section \ref{ssec:scl-setup} these matrices have rank one. So, one may hope that they do not
change the eigenvalue distribution measure in the limit.

Analyzing the fluctuations around $\pm m(\beta)\,  \mathcal{E}_{N}$ one can apply the moment
method to $X_{N}\mp m(\beta)\,\mathcal{E}_{N}$. The variance of the matrix entries is $v(\beta)=1-m(\beta)^{2}$,
so this has a chance to converge to the semicircle distribution, but scaled due to the variance $v(\beta)<1$. In fact we have:

\begin{theorem}
\label{th:KK}
Let $X_N$ be the full Curie-Weiss matrix ensemble with arbitrary inverse temperature $\beta\geq 0$.
Then the eigenvalue distribution measure $\sigma_N$ of $\frac{1}{\sqrt{N}} X_N$ converges weakly in probability to the rescaled semicircle distribution $\sigma_{v(\beta)}$, given by:

  \begin{equation}\label{eq:defscd1}
\sigma_{v(\beta)} (x) =
\begin{cases}
\frac{1}{2 \pi v(\beta)} \sqrt{4v(\beta) - x^2} & \text{~  , for ~} |x| \leq 2\sqrt{v(\beta)}\\
0 & \text{~  , otherwise.}
\end{cases}
\end{equation}
 Here, $v(\beta)=1-m(\beta)^2$ with $m(\beta)=0$ for $\beta\leq 1$ and $m=m(\beta)$ is the unique positive solution of
$\tanh(\beta m)=m$ for $\beta > 1$ (cf.~\eqref{eq:defm}).
\end{theorem}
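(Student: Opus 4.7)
The plan is to treat both regimes $\beta\leq 1$ and $\beta>1$ uniformly via the de Finetti / Hubbard-Stratonovich representation of Proposition \ref{prop:dFCW}, applied with $M=N^{2}$. This realizes the joint law of the $N^{2}$ underlying Curie-Weiss spins as the mixture $\int_{-1}^{1}P_{t}^{(N^{2})}\,\mu_{\beta}^{(N)}(dt)$, where $\mu_{\beta}^{(N)}$ has Lebesgue density proportional to $e^{-N^{2}F_{\beta}(t)/2}/(1-t^{2})$. Because $F_{\beta}$ attains its minimum only at $t=0$ when $\beta\leq 1$ and at the two symmetric points $t=\pm m(\beta)$ when $\beta>1$, the Laplace method (with rate $N^{2}$) shows that $\mu_{\beta}^{(N)}$ concentrates on $\{0\}$, respectively on $\{-m(\beta),m(\beta)\}$ with equal masses, as $N\to\infty$.

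Conditional on the auxiliary variable $t$, the upper-triangular entries $X_{N}(i,j)$, $i\leq j$, are iid and $\pm 1$-valued with mean $t$ and variance $1-t^{2}$. I decompose
\begin{equation*}
   X_{N}~=~t\,\mathcal{E}_{N}+Y_{N}(t),\qquad Y_{N}(t)(i,j)~:=~X_{N}(i,j)-t.
\end{equation*}
Conditional on $t$ in the interior of $(-1,1)$, $Y_{N}(t)$ is a genuine Wigner ensemble with centered, iid, bounded entries of variance $1-t^{2}$. Applying Theorem \ref{th:Wigner} to $Y_{N}(t)/\sqrt{1-t^{2}}$ yields that the eigenvalue distribution measure of $Y_{N}(t)/\sqrt{N}$ converges weakly $\IP$-almost surely to the rescaled semicircle $\sigma_{1-t^{2}}$. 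Since $t\,\mathcal{E}_{N}/\sqrt{N}$ has rank one, the Cauchy interlacing theorem bounds the Kolmogorov distance between the eigenvalue distributions of $X_{N}/\sqrt{N}$ and $Y_{N}(t)/\sqrt{N}$ by $1/N$, so they share the same weak limit.

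Finally, at both values $t=\pm m(\beta)$ on which the Laplace mass concentrates one has $1-t^{2}=v(\beta)$, so the conditional limit is the same measure $\sigma_{v(\beta)}$ in either phase; for $\beta\leq 1$ one gets $\sigma_{v(\beta)}=\sigma_{1}=\sigma$ and recovers Theorem \ref{th:HKWKK}. The main technical obstacle I anticipate is converting the collection of conditional, $t$-by-$t$ statements into unconditional convergence in probability for the mixture. I expect to handle this by the moment method rather than by quoting Theorem \ref{th:Wigner} as a black box: one computes $\IE\bigl((\frac{1}{N}\operatorname{tr}(X_{N}/\sqrt{N})^{k}-M_{k})^{2}\bigr)$, where $M_{k}$ denotes the $k$-th moment of $\sigma_{v(\beta)}$, unfolds the outer expectation via \eqref{eq:dFCW}, and shows that the $P_{t}$-expectation of the squared deviation tends to zero uniformly for $t$ in a shrinking neighbourhood of the Laplace maxima. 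This step combines the tree-counting argument of Section \ref{ssec:scl-wigner} (which under $P_{t}$ produces the Catalan moments multiplied by $(1-t^{2})^{k/2}$) with standard Laplace estimates on $F_{\beta}$, and a final application of Chebyshev's inequality then yields the desired convergence in probability.
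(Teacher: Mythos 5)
Your overall strategy is sound and is essentially the one the paper itself sketches after \eqref{eq:momgro}: decompose $X_N$ into a rank-one mean part plus a conditionally centered Wigner part, control the mean by interlacing, and apply the moment method to the centered part. Your use of the de Finetti representation \eqref{eq:dFCW} with $M=N^2$, together with the Laplace-type concentration of the mixing measure on the minimizers of $F_\beta$, is a clean way to organize the paper's argument and handles both phases uniformly. The Cauchy interlacing observation (rank-one perturbation moves the empirical distribution by at most $1/N$ in Kolmogorov distance) is correct and is exactly what makes the subtraction of $t\,\mathcal{E}_N$ (or $\pm m(\beta)\,\mathcal{E}_N$) harmless for the weak limit.

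However, the last paragraph of your proposal contains a genuine error that contradicts your own earlier setup. You propose to show that
\begin{equation*}
\IE\Bigl(\bigl(\tfrac{1}{N}\operatorname{tr}(X_{N}/\sqrt{N})^{k}-M_{k}\bigr)^{2}\Bigr)\;\longrightarrow\;0\,,
\end{equation*}
with $M_k$ the $k$-th moment of $\sigma_{v(\beta)}$, by unfolding via \eqref{eq:dFCW} and tree-counting under $P_t$. This cannot work for $\beta>1$: as the paper computes in \eqref{eq:momgro}, already $\IE\bigl(\tfrac{1}{N^{3}}\operatorname{tr}(X_N^4)\bigr)\to\infty$, so the trace moments of $X_N/\sqrt{N}$ do not converge. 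The reason is that under $P_t$ with $t\neq 0$ the entries of $X_N$ are \emph{not} centered, so Lemma \ref{lem:once} fails, single edges contribute, and the non-tree terms dominated by the outlier eigenvalue of order $t\sqrt{N}$ blow up. Your parenthetical remark that the tree-counting ``under $P_t$ produces the Catalan moments multiplied by $(1-t^2)^{k/2}$'' is only true for the \emph{centered} matrix $Y_N(t)=X_N-t\,\mathcal{E}_N$, not for $X_N$. Indeed, the whole point of your interlacing step is that you cannot do a moment calculation directly on $X_N$.

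The fix is simple and is already latent in your decomposition: carry out the second-moment/Chebyshev argument for $\tfrac{1}{N}\operatorname{tr}\bigl(Y_N(t)/\sqrt{N}\bigr)^{k}$ (or, as the paper suggests, for $X_N\mp m(\beta)\,\mathcal{E}_N$, which avoids having the subtraction depend continuously on $t$). Under $P_t$ the entries of $Y_N(t)$ are centered, iid, and bounded with variance $1-t^2$, so the tree-counting of Section \ref{ssec:scl-wigner} gives moments close to $(1-t^2)^{k/2}C_{k/2}$ with variance $O(1/N^2)$, and the de Finetti integral over $t$ concentrates on $\pm m(\beta)$ by Laplace, where $(1-t^2)^{k/2}C_{k/2}=M_k$; the exponential decay of the density $e^{-N^2F_\beta(t)/2}$ controls the tails, since the integrand is bounded for bounded $t$. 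This yields convergence in probability for $\sigma_N^{Y}$, and \emph{then} your interlacing argument transfers the statement to $\sigma_N^{X}$. With that correction the proposal matches the approach indicated in the paper.
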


A detailed proof will be contained in \cite{KK1}.

Already in \eqref{eq:momgro} we saw that the norm of $\frac{1}{\sqrt{N}} X_N$ does not converge for the full Curie-Weiss ensemble if $\beta>1$. This is
made precise in the following theorem.

\begin{theorem}\label{th:CWNorm}
   Suppose $X_{N}$ is a full Curie-Weiss ensemble.
   \begin{enumerate}
      \item \label{th:CWNi} If $\beta<1$ then
      \begin{equation*}
\| \frac{1}{\sqrt{N}}\,X_{N} \| \rightarrow 2 \text{~  as ~} N \rightarrow \infty
\end{equation*}
$\IP$-almost surely.
\item \label{th:CWNii} If $\beta=1$ then
      \begin{equation*}
\| \frac{1}{N^{\gamma}}\,X_{N} \| \rightarrow 0 \text{~  as ~} N \rightarrow \infty
\end{equation*}
for every $\gamma>1/2\quad\IP$-almost surely.
 \item \label{th:CWNiii} If $\beta>1$ then
      \begin{equation*}
\| \frac{1}{N}\,X_{N} \| \rightarrow m(\beta) \text{~  as ~} N \rightarrow \infty
\end{equation*}
$\IP$-almost surely.
   \end{enumerate}
\end{theorem}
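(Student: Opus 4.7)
My plan is to reduce all three parts to a conditional Wigner-type problem via the Hubbard--Stratonovich / de Finetti representation of Proposition \ref{prop:dFCW}, applied with $M=N^{2}$. Under the product measure $P_{t}$, the upper-triangular entries of $X_{N}$ are iid $\pm 1$-valued with mean $t$ and variance $v(t):=1-t^{2}$, which yields the decomposition
\begin{equation*}
   X_{N}~=~t\,\mathcal{E}_{N}~+~Z_{N}(t),
\end{equation*}
where, conditionally on $t$, the matrix $Z_{N}(t)$ is a symmetric ensemble with bounded iid centered entries of variance $v(t)$. The full Curie--Weiss law is recovered by integrating against $\propto e^{-\frac{N^{2}}{2}F_{\beta}(t)}/(1-t^{2})$, which by Laplace asymptotics concentrates $t$ at the minimizers of $F_{\beta}$. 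It is cleanest to realize the ensemble as a two-stage sample: first draw $t_{N}$ from the Laplace density, then draw the matrix entries conditionally independent under $P_{t_{N}}$.

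The first substantive step is to pin down the typical size of $t_{N}$. Since $F_{\beta}''(0)>0$ for $\beta<1$, $F_{\beta}$ is quartic at $0$ for $\beta=1$, and $F_{\beta}$ has two nondegenerate minima at $\pm m(\beta)$ for $\beta>1$, Laplace asymptotics together with Borel--Cantelli give, $\IP$-almost surely: $|t_{N}|=O(N^{-1+\varepsilon})$ when $\beta<1$; $|t_{N}|=O(N^{-1/2+\varepsilon})$ when $\beta=1$; and $|t_{N}-\tau|=O(N^{-1+\varepsilon})$ for a (random) sign $\tau\in\{\pm m(\beta)\}$ when $\beta>1$.

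The norm estimates then follow from the decomposition combined with Theorem \ref{th:WignerNorm} applied conditionally to $Z_{N}(t)/\sqrt{v(t)}$. Since the entries are bounded by $1$, all moment conditions are uniform in $t$ on any compact subset of $(-1,1)$, so the conclusion $\|Z_{N}(t)\|/\sqrt{N}\to 2\sqrt{v(t)}$ holds almost surely. For Part \ref{th:CWNi} the rank-one piece $\|t_{N}\,\mathcal{E}_{N}\|/\sqrt{N}=|t_{N}|\sqrt{N}=O(N^{-1/2+\varepsilon})$ vanishes and $2\sqrt{v(t_{N})}\to 2$, so $\limsup\|X_{N}/\sqrt{N}\|\le 2$; the matching lower bound is immediate from the semicircle law (Theorem \ref{th:HKWKK}). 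For Part \ref{th:CWNii} both contributions are $O(N^{1/2+\varepsilon})$ in operator norm, so dividing by $N^{\gamma}$ with $\gamma>1/2$ yields the claim. For Part \ref{th:CWNiii} the rank-one piece dominates, $\|t_{N}\,\mathcal{E}_{N}\|/N=|t_{N}|\to m(\beta)$, while $\|Z_{N}(t_{N})\|/N\to 0$; the matching lower bound is obtained by testing against the uniform unit vector $e_{i}=1/\sqrt{N}$, for which $\langle e,X_{N}e\rangle/N\to m(\beta)$ by concentration of $t_{N}$ and a routine estimate on the conditionally centered sum.

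The main obstacle is sharpening the Laplace-density tail bounds to make all concentration statements almost sure rather than merely in probability, particularly at the critical point $\beta=1$, where the quartic degeneracy slows the concentration scale from $N^{-1}$ to $N^{-1/2}$ and the margin in Part \ref{th:CWNii} is correspondingly delicate. A secondary technical point is to verify that the moment-method proof of Theorem \ref{th:WignerNorm} goes through uniformly in $t$ on the relevant ranges, which, given the uniform bound $|X_{N}(i,j)|\le 1$, is a straightforward inspection.
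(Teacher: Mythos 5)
Your approach matches what the paper outlines: Proposition \ref{prop:dFCW} supplies exactly the de Finetti/Hubbard--Stratonovich mixture you use, and both the discussion around Theorem \ref{th:KK} and the programme of Section \ref{ssec:scl-ee} rest on precisely your decomposition $X_N = t\,\mathcal{E}_N + Z_N(t)$ with Laplace concentration of the mixing variable and a conditional F\"uredi--Koml\'os bound; the paper itself delegates the details to \cite{HKW} (Part \ref{th:CWNiii}) and the in-preparation \cite{KK1} (Parts \ref{th:CWNi}, \ref{th:CWNii}). One small imprecision: citing Theorem \ref{th:HKWKK} for the almost-sure lower bound is not quite right, since that theorem gives only weak convergence in probability and hence $\liminf\|X_N/\sqrt N\|\ge 2$ in probability rather than almost surely; but this is harmless, because the conditional F\"uredi--Koml\'os statement $\|Z_N(t_N)\|/\sqrt N\to 2\sqrt{v(t_N)}$ a.s.\ together with $\|X_N\|\ge\|Z_N(t_N)\|-\|t_N\mathcal{E}_N\|$ already produces the almost-sure lower bound (and similarly makes your rank-one testing argument in Part \ref{th:CWNiii} redundant).
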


Theorem \ref{th:CWNorm}.\ref{th:CWNiii} was proved in \cite{HKW}, \ref{th:CWNi} and \ref{th:CWNii} can be found in \cite{KK1}.

\section{Ensembles with Exchangeable Entries}\label{ssec:scl-ee}

The results presented in the previous section for Curie-Weiss ensembles with subcritical temperatures ($\beta >1$) suggest that models
with correlations that do not decay sufficiently fast as $N$ tends to infinity (e.g. in the sense of Corollary \ref{cor:cwcorr})
may display a wealth of spectral phenomena depending on the specific features of the model. This is largely uncharted territory.
One step into this world is to consider matrix ensembles with entries chosen from a sequence of exchangeable random variables.

A sequence $(\xi_i)_{i\in \IN}$ of real valued random variables with underlying probability space $(\Omega, \mathcal{F}, \IP)$ is called {\em exchangeable},
if for all integers $N \in \mathbb N$, all permutations $\pi$ on $\{1, \ldots, N\}$, and $F \in \mathcal{B}(\IR^{N})$ it is true that
\begin{equation*}
\IP \big( (\xi_1, \ldots, \xi_N) \in F \big) =  \IP \big( (\xi_{\pi(1)}, \ldots, \xi_{\pi(N)}) \in F \big) \,.
\end{equation*}
Generalizing a result of de Finetti \cite{deFinetti, deFinetti2} for random variables that only take on two values,
Hewitt and Savage \cite[Theorem 7.4]{HewittSavage}
showed in a very general setting that such probability measures $\IP$ may be represented as averages of i.i.d.~sequences
with respect to some probability measure $\mu$. In our context we impose the additional condition that
all moments of the random variables $\xi_i$ exist (cf. Definition \ref{def:moments}).
This leads us to the following general definition of ensembles of real symmetric matrices with exchangeable entries.
\begin{definition}\label{def:eee}
Let $\mu$ denote a probability measure on some measurable space $(T, \mathcal{T})$
and let $\Lambda : T \to \mathcal{M}_1^{(0)} (\IR)$ be a measurable map that assigns every element
$\tau$ of $T$ to a  Borel probability measure $\Lambda_{\tau}$ on $\IR$
for which all moments exist (we call $\mathcal{M}_1^{(0)} (\IR)$
the set of all such probability measures on $\IR$). Define
\begin{equation}\label{eq:Tex.1}
\IP^{\,\mu,\Lambda} := \int_T  P_{\tau} d \mu (\tau) \,, \quad \mbox{with} \quad
P_{\tau} := \bigotimes_{i=1}^{\infty} \Lambda_{\tau}\,,
\end{equation}
as the $\mu$-average of i.i.d.~sequences of real random variables with distributions $\Lambda_{\tau}$.
The corresponding {\em matrix ensemble with exchangeable entries}
%, which we also denote with a slight abuse of notation by
%$\IP\equiv\IP^{\,\mu}\equiv\IP^{\,\mu,\Lambda}$,
is given by matrices $X_N$ with entries $X_N(i,j)$, $1 \leq i \leq j \leq N$,
that equal the first $N(N+1)/2$ members of the sequence $(\xi_i)_i$ of exchangeable random variables. The remaining entries
$X_N(i,j)$, $1 \leq j < i \leq N$ are then fixed by symmetry $X_N(i,j)=X_N(j,i)$. Observe that due to the exchangeability of $(\xi_i)_i$
it is of no relevance in which order the upper triangular part of $X_N$ is filled by $\xi_1, \ldots, \xi_{N(N+1)/2}$.
Moreover, one could have chosen any $N(N+1)/2$ distinct members of $(\xi_i)_i$ to fill the entries of $X_N$
without changing the ensemble.
\end{definition}

It is instructive to consider the special case of ensembles that allow only for matrix entries $X_N(i,j) \in \{1, -1\}$.
We refer to it as the \emph{spin case}. Observe that the probability measures with support contained in $\{1, -1\}$
are all represented by the family
$\Lambda_{\tau} = \frac{1}{2} \left[ (1+\tau) \delta_1 + (1-\tau)\delta_{-1}\right]$, $\tau \in T := [-1,1]$.
Hence all ensembles of the spin case are given by \eqref{eq:Tex.1} with the just mentioned choices for $T$ and $\Lambda_{\tau}$.
They are parameterized by the probability measures $\mu$ on $[-1,1]$.
Recall that $\Lambda_{\tau}$ already appeared in
Definition \ref{def:Pt} as the building block $P_t^{(1)}$ for Curie-Weiss ensembles. What is different from
Section \ref{ssec:scl-CW} is that there the averaging measure $\mu$ depends on the matrix size $N$ and is of a special form.

Let us return to the general ensembles with exchangeable entries of Definition \ref{def:eee}.
The key for analyzing both the eigenvalue distribution measure and the operator norm is
that for every $\tau \in T$ the measure $P_{\tau}$ generates i.i.d.~entries for $X_N$. For the latter ensembles $P_{\tau}$
the following observations that can already be found in \cite{FuerediK} are useful:
Subtracting the mean of the entries yields a Wigner ensemble (multiplied by the standard
deviation of $\Lambda_{\tau}$) for which Theorem \ref{th:Wigner} is applicable.
Considering first the eigenvalue distribution measure we note that the mean is some
multiple of the matrix $\mathcal{E}_N$ defined in \eqref{eq:defE}. Since $\mathcal{E}_N$ has rank 1 the subtraction of the mean
will not have an influence on the limiting spectral measure.
As $\IP^{\,\mu,\Lambda}$ is the $\mu$-average over all measures $P_{\tau}$ it is plausible that the limit of the eigenvalue
distribution measures is an average of scaled semicircles w.r.t. the measure $\mu$, where the scaling factors are given by
the standard deviaton of $\Lambda_{\tau}$. Accordingly, we define
\begin{equation}
\label{eq:B2.1}
\sigma_{\mu} := \int_{T} \sigma_{v(\tau)} ~  d\mu (\tau)\,,
\end{equation}
where $v(\tau)$ denotes the variance of $\Lambda_{\tau}$ and $\sigma_v$ is the semicircle distribution with support
$[-2 \sqrt{v}, 2 \sqrt{v}]$ (cf. Definition \eqref{eq:defscd1}). We prove in \cite{KK2}
\begin{theorem}
\label{theo:B3}
Denote by $\IP^{\, \mu,\Lambda}, \sigma_{\mu}$ the measures introduced in Definition \ref{def:eee}
and in \eqref{eq:B2.1}. Then the eigenvalue distribution measures $\sigma_N$ of $X_N / \sqrt{N}$
converge weakly in expectation to $\sigma_{\mu}$ w.r.t. to the measure $\IP^{\, \mu,\Lambda}$.
\end{theorem}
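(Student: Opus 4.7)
The plan is to exploit the de Finetti-type representation $\IP^{\,\mu,\Lambda} = \int_T P_{\tau} \, d\mu(\tau)$ given in Definition \ref{def:eee}. Conditionally on $\tau$, the matrix $X_N$ has i.i.d.~entries with distribution $\Lambda_\tau$, and Wigner's theorem (Theorem \ref{th:Wigner}) can be applied after centering and rescaling. The target statement then follows by integrating in $\tau$.

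Step 1 (conditional reduction to a Wigner ensemble). For $\tau \in T$ set $m(\tau) := \int x \, d\Lambda_\tau(x)$ and $v(\tau) := \int (x-m(\tau))^2 \, d\Lambda_\tau(x)$. Under $P_{\tau}$ and on the set $\{v(\tau) > 0\}$ the matrix with entries
\[
W_N(i,j) := \bigl(X_N(i,j) - m(\tau)\bigr)/\sqrt{v(\tau)}
\]
is a Wigner ensemble with all moments finite (by the assumption $\Lambda_\tau \in \mathcal{M}_1^{(0)}(\IR)$), and
\[
\frac{X_N}{\sqrt{N}} = \frac{m(\tau)}{\sqrt{N}}\,\mathcal{E}_N + \sqrt{v(\tau)}\;\frac{W_N}{\sqrt{N}} .
\]
The matrix $\mathcal{E}_N$ from \eqref{eq:defE} has rank one, so Weyl interlacing yields $\bigl|\int f\, d\sigma_N - \int f \, d\tilde{\sigma}_N\bigr| \leq 2\|f\|_\infty / N$ for every $f \in C_b(\IR)$, where $\tilde{\sigma}_N$ is the eigenvalue distribution measure of $\sqrt{v(\tau)}\,W_N / \sqrt{N}$. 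By Theorem \ref{th:Wigner} (applied to $W_N$, together with the obvious scaling that sends $\sigma$ to $\sigma_{v(\tau)}$), $\tilde{\sigma}_N$ converges weakly $P_{\tau}$-almost surely to $\sigma_{v(\tau)}$, hence
\[
\IE_{P_\tau}\Bigl[\int f \, d\sigma_N\Bigr] \longrightarrow \int f \, d\sigma_{v(\tau)} \qquad (N \to \infty) .
\]
On the exceptional set $\{v(\tau) = 0\}$ one has $\Lambda_\tau = \delta_{m(\tau)}$ and $X_N = m(\tau)\,\mathcal{E}_N$ deterministically, so $\sigma_N \Rightarrow \delta_0 = \sigma_{v(\tau)}$ directly from \eqref{eq:lambaE}.

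Step 2 (averaging over $\tau$ via dominated convergence). By Fubini applied to \eqref{eq:Tex.1},
\[
\IE^{\,\mu,\Lambda}\Bigl[\int f \, d\sigma_N\Bigr] = \int_T \IE_{P_\tau}\Bigl[\int f \, d\sigma_N\Bigr] d\mu(\tau) .
\]
The integrand is uniformly bounded in $\tau$ and $N$ by $\|f\|_\infty$, and by Step 1 it converges pointwise in $\tau$ to $\int f \, d\sigma_{v(\tau)}$. Dominated convergence thus yields
\[
\IE^{\,\mu,\Lambda}\Bigl[\int f \, d\sigma_N\Bigr] \longrightarrow \int_T \int f \, d\sigma_{v(\tau)} \, d\mu(\tau) = \int f \, d\sigma_\mu,
\]
which is precisely weak convergence in expectation of $\sigma_N$ to $\sigma_\mu$.

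Main obstacles. The substantive input is Wigner's theorem for each $\tau$; two bookkeeping points still require care. First, one must verify that $\tau \mapsto \IE_{P_\tau}[\int f \, d\sigma_N]$ is $\mu$-measurable, which follows from the measurability assumption on $\Lambda$ since $\int f \, d\sigma_N$ is a continuous bounded function of the matrix entries. Second, the $O(1/N)$ control of the rank-one shift must be independent of $\tau$, which is built into the interlacing estimate and is the reason no uniformity of the Wigner convergence in $\tau$ is needed; only pointwise convergence together with the a priori bound $\|f\|_\infty$ enters the dominated convergence step. Extending the argument to weak convergence in probability (or $\IP$-almost surely) would require a genuine concentration estimate under $\IP^{\,\mu,\Lambda}$, which is the genuinely harder problem that we deliberately sidestep here.
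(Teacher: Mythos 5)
Your argument follows exactly the route the paper sketches (and defers in full detail to the forthcoming reference \cite{KK2}): condition on $\tau$ via the de Finetti representation \eqref{eq:Tex.1}, write $X_N$ as a rank-one mean plus a scaled Wigner matrix, invoke Theorem \ref{th:Wigner} for the centered part, argue that the rank-one shift is spectrally negligible, and average over $\tau$ by dominated convergence. One small technical slip worth noting: the stated bound $\bigl|\int f\,d\sigma_N - \int f\,d\tilde{\sigma}_N\bigr| \le 2\|f\|_\infty/N$ is not valid for arbitrary $f\in C_b(\IR)$; what rank-one interlacing actually gives is the Kolmogorov estimate $\|F_N-\tilde F_N\|_\infty\le 1/N$ on the empirical distribution functions, which by integration by parts yields $\bigl|\int f\,d\sigma_N - \int f\,d\tilde{\sigma}_N\bigr|\le V(f)/N$ for $f$ of bounded variation — and since weak convergence may be tested on, say, smooth compactly supported $f$, the conclusion and the rest of the dominated-convergence step go through unchanged.
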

Moreover, it is shown in \cite{KK2} that $\sigma_{\mu}$ is a semi-circle if and only if
the function $\tau \mapsto v(\tau)$ is constant $\mu$-almost surely.

For the operator norm the situation is quite different.
Since $\| \mathcal{E}_N \| = N$ the operator norm of $X_N$
w.r.t. the measure $P_{\tau}$ it is determined to leading order by the mean of $X_N$, if the mean does not vanish.
Therefore the operator norm scales with $N$, except for the special case that the matrix entries are
$\IP^{\,\mu,\Lambda}$-almost surely centered. We prove in addition in \cite{KK2} that the $N$-scaling of the norm is due
to a single outlier of the spectrum by showing that the second largest eigenvalue (in modulus) possesses a $\sqrt{N}$-scaling
that is consistent with the law for the limiting spectral measure.

In \cite{KK2} we also generalize the just mentioned results to band matrices. Here
an additional difficulty arises,
because the mean of $X_N$ is no longer a multiple of $\mathcal{E}_N$ and will have large rank.
Nevertheless it is shown  that all results obtained for full matrices can be saved,
except for the result on the second largest eigenvalue (in modulus).

\end{document}